\documentclass[aps,prx,reprint]{revtex4-2}
\usepackage{blindtext}
\usepackage{amsmath}
\usepackage{amsfonts}
\usepackage{amsthm}
\usepackage{physics}
\usepackage{graphicx}
\usepackage{changes}
\usepackage{makecell}
\usepackage{mathtools}
\usepackage{bm}

\usepackage[table]{xcolor}

\usepackage{booktabs}
\usepackage{multirow} 
\usepackage{diagbox}

\usepackage{xcolor}
\usepackage[
    colorlinks=true,
    hypertexnames=false
    ]{hyperref}

\hypersetup{
  colorlinks   = true,
  urlcolor = magenta!90!black,
  linkcolor = blue!60!black,
  citecolor=black!60
}

\usepackage{url}
\urlstyle{same}

\usepackage[all]{hypcap}

\usepackage[capitalise,compress]{cleveref}

\usepackage{enumitem}   
\usepackage{tikz} 
\usetikzlibrary{arrows, shapes.gates.logic.US, calc}
\usetikzlibrary{shapes}
\usetikzlibrary{plotmarks}
\usetikzlibrary{quantikz}
\usetikzlibrary{patterns}

\newcommand{\imag}{\mathrm{i}}
\DeclareMathOperator{\Var}{Var}

\DeclareMathOperator{\poly}{poly}

\newcommand{\E}{\mathbb{E}}
\definecolor{hqred}{HTML}{EE1C25}

\usepackage{bbm}

\newtheorem{theorem}{Theorem}

\newtheorem{problem}{Problem}

\newtheorem{algorithm}{Algorithm}

\setcounter{secnumdepth}{2}

\crefname{protocol}{Protocol}{Protocols}
\Crefname{protocol}{Protocol}{Protocols}

\newlength{\protowidth}

\usepackage{algpseudocode}

\usepackage{graphicx}

\let\originalleft\left
\let\originalright\right
\renewcommand{\left}{\mathopen{}\mathclose\bgroup\originalleft}
\renewcommand{\right}{\aftergroup\egroup\originalright}

\setlength{\delimitershortfall}{6pt}
\delimiterfactor=750

\usepackage{xcolor}
\definecolor{someonesgreen}{RGB}{205, 222, 194}
\definecolor{hqred}{HTML}{EE1C25}

\bibliographystyle{apsrev4-2}

\begin{document}

\title{Measuring gravitational lensing time delays with quantum information processing}
\author{Zhenning~Liu,$^{1,2}$ William~DeRocco,$^{3,4}$ Shiming~Gu,$^{5}$ Emil~T.~Khabiboulline,$^{1,6}$ Soonwon~Choi,$^7$ Andrew~M.~Childs,$^{1,2,8}$ Anson~Hook,$^3$ Alexey~V.~Gorshkov,$^{1,6}$ and Daniel~Gottesman$^{1,2}$\\
\textit{
\footnotesize $^{1}$ Joint Center for Quantum Information and Computer Science, NIST/University of Maryland, College Park, MD 20742, USA\\
\footnotesize $^{2}$ Department of Computer Science, University of Maryland, College Park, MD 20742, USA\\
\footnotesize $^{3}$ Maryland Center for Fundamental Physics, University of Maryland,
College Park, MD 20742, USA
\footnotesize $^{4}$ Department of Physics \& Astronomy, The Johns Hopkins University, Baltimore, MD 21218, USA\\
\footnotesize $^{5}$ Department of Physics \& Astronomy, University of British
Columbia, Vancouver, BC V6T 1Z1, Canada\\
\footnotesize $^{6}$ Joint Quantum Institute, NIST/University of Maryland College Park, MD 20742, USA\\
\footnotesize $^{7}$ Department of Physics, Massachusetts Institute of Technology, Cambridge, MA 02139, USA\\
\footnotesize $^{8}$ Institute for Advanced Computer Studies, University of Maryland, College Park, MD 20742, USA\\
}}

\begin{abstract}
    The gravitational fields of astrophysical bodies bend the light around them, creating multiple paths along which light from a distant source can arrive at Earth. Measuring the difference in photon arrival time along these different paths provides a means of determining the mass of the lensing system, which is otherwise difficult to constrain. This is particularly challenging in the case of microlensing, where the images produced by lensing cannot be individually resolved; existing proposals for detecting time delays in microlensed systems are significantly constrained due to the need for large photon flux and the loss of signal coherence when the angular diameter of the light source becomes too large.
    
    In this work, we propose a novel approach to measuring astrophysical time delays. Our method uses exponentially fewer photons than previous schemes, enabling observations that would otherwise be impossible. Our approach, which combines a quantum-inspired algorithm and quantum information processing technologies, saturates a provable lower bound on the number of photons required to find the time delay. Our scheme has multiple applications: we explore its use both in calibrating optical interferometric telescopes and in making direct mass measurements of ongoing microlensing events. To demonstrate the latter, we present a fiducial example of microlensed stellar flares sources in the Galactic Bulge. Though the number of photons produced by such events is small, we show that our photon-efficient scheme opens the possibility of directly measuring microlensing time delays using existing and near-future ground-based telescopes.
\end{abstract}

\maketitle

\tableofcontents

\section{Introduction}

\subsection{Motivation and background}
 The rapidly developing field of quantum information technology has various promising applications. In particular, it has been widely accepted that certain computational problems can be solved more efficiently by quantum computers \cite{shor1999polynomial,grover1996fast}, and certain signals can be measured more precisely using quantum sensing \cite{giovannetti2006quantum}. 
Some quantum sensing protocols provide improvement by reducing the number of photons needed, for instance, but quantum technology can also provide other advantages for sensing.
An example along these lines is the \emph{quantum telescope} array proposed by \citet*{Gottesman2021TelescopesRepeaters}, a long-baseline optical interferometer scheme for astronomical observation. The proposed telescope provides ultra-high angular resolution (micro-arcsec) using error-corrected quantum communication techniques as the key building block. \textcite{Gottesman2021TelescopesRepeaters} stimulated various follow-up research in applying quantum information processing techniques to develop novel strategies for optical imaging and astronomical observation \cite{Khabiboulline2019OpticalInterferometry,Khabiboulline2019TelescopeArrays,liu2024low,brown2023interferometric}.

In this paper, we focus on measuring the time delay between optical signals. This is of great significance in astronomy due to applications to observing \emph{gravitational lensing} \cite{Dyson1920LightDeflection,Einstein1936Lensing} events. 
Gravitational lensing occurs when light from a distant source is bent by the gravitational field of an intervening massive object, leading to the formation of multiple images. As a result of this deflection, different light paths associated with the lensed images correspond to different geometric lengths and geodesics, thereby introducing relative time delays in their arrival times \cite{Refsdal1964Lensing}. These time delays provide a direct and powerful means to measure the mass profile of the lens system, including masses of rogue planets \cite{MiretRoig2023,Koshimoto2023,Mroz2017}, isolated black holes \cite{Sahu2022,Kaczmarek2025}, and even the spatial distribution of dark matter, an unknown component of the mass budget of the universe whose contribution dominates over that of visible matter \cite{Planck2018Results}.

While measurements of lensing time delays are of great scientific value, it is highly challenging to obtain them in practice. Successful measurements have only been done for strong gravitational lensing systems \cite{biggs1999time,H0LiCOW2017Intro,H0LiCOW2019Review}, where ``strong" means that different images of the source are spatially distinguishable. In strong lensing, the time delay is measured by exploiting variability of the source. For instance, a transient astrophysical event, such as a supernova explosion or a quasar flare, appears in each lensed image at different times due to their light path difference. However, in other lensing systems, such as instances of \emph{microlensing}, the different images usually cannot be individually resolved \footnote{Different images in a microlensing system are usually not resolved with seeing-limited imaging; however, long-baseline interferometry (e.g., VLTI/GRAVITY) has now resolved microlensed images in select cases (see e.g. Ref.~\cite{mroz2025observations}). Whereas long-baseline interferometers can resolve the micro-images and infer lens parameters from visibilities/closure phases, our method infers the time delay $\Delta t$ from single-photon spectra without resolving the images}. The time delay of a microlensing event is typically much shorter than the source’s variability timescale, unlike in strong lensing, hence transient signals propagating along two different paths overlap in the recombined light curve, hiding the small time delay between them.
Furthermore, the source in a microlensing system is often selected to be absent of well-defined, observable transient events and to remain stable over long baselines. 
Together, these factors make it infeasible to use techniques for strong lensing to extract time delays in microlensing systems.

Fortunately, there is a different theoretical framework that is designed particularly for measuring microlensing time delays. This framework has multiple observational proposals and one actual implementation. Throughout this paper, we assume there are only two images in the lensing system (this assumption is justified in \cref{sec:lens_intro}) and denote their time delay by $\Delta t$. References~\cite{deguchi1986diffraction,peterson1991gravitational} first realized that a fixed gravitational lensing time delay leads to an oscillatory modulation in the spectrum, with adjacent peaks separated by $1/\Delta t$. This observation laid the theoretical foundation for various follow-up works. Later, Refs.~\cite{gould1992femtolensing,ulmer1995femtolensing,katz2018femtolensing} predicted the existence of lensing events induced by extremely lightweight lensing objects and claimed that their (very short) time delays may create observable frequency-domain oscillations when the photon source is a gamma-ray burst. Similarly, Refs.~\cite{eichler2017nanolensed,jow2020wave,wucknitz2021cosmology} discuss lensing delay measurement in the radio wavelength, where they use fast radio bursts (FRBs) as their photon source. There is even one reported experimental attempt using the FRB-based proposal \cite{kader2022high} which enables constraining the abundance of primordial black holes. 
 
However, none of the above proposals provide successful measurement outcomes for any microlensing system, for multiple reasons. First, the femtolensing observations using gamma-ray bursts and most proposals beyond the radio wavelength suffer from the severe \emph{finite-source effect}. If the source is large, photons from different regions of the source have different $\Delta t$ values. When the time delay uncertainty $\delta_{\Delta t}$ is greater than one period of the carrier frequency, the gravitationally lensed light signal no longer contains any information about the time delay. Such an effect is studied in previous works \cite{eichler2017nanolensed,leung2025wave,sugiyama2020wave} and will also be shown information theoretically in our work. Second, the number of photons required in the aforementioned proposals is generally large, while microlensing systems have, on average, lower signal-to-noise ratio (SNR) than strong lensing systems because their dynamic nature limits opportunities to do stable long exposure. To achieve a high SNR, the integration times are sometimes required longer than the lensing event itself and forces researchers to use extremely luminous sources in their designated wavelengths, such as FRBs in radio wavelengths. However, the number of FRB events per day is (empirically) limited to a small value, and the number of gravitationally microlensed FRB events is even lower.
Finally, when $\Delta t$ is relatively large, say $\sim 1 \, \mathrm{ms}$, the distance between peaks of the spectrum is only $1/\Delta t \sim 1 \, \mathrm{kHz}$. Observing such a pattern requires prohibitively high frequency resolution for carrier frequencies higher than those of radio waves. 

As suggested above, to find an eligible microlensing event and conduct a successful measurement of its time delay, the observation proposal must have (roughly and qualitatively) two ingredients. First, we need a \emph{feasible} and \emph{sample-efficient} delay measurement approach to allow for a longer list of observable sources and enable measurements in a short time window (ingredient (i)). Second, we need a class of very tiny sources to avoid the finite-source effect (ingredient (ii)). While these two ingredients are independent, instantiating either is a great challenge, which makes microlensing time delay measurement an exceptionally difficult problem. In this paper, we  
address ingredient (i) using quantum mechanics and quantum information theory as theoretical tools and single-photon quantum devices as potential experimental platforms. Our novel delay-finding approach extends the list of observable lensing events and enables us to address ingredient (ii).

Time delays also appear in optical interferometric imaging systems (long-baseline telescope arrays). In particular, the spatial separation introduces extra distance that the incoming photon must travel to reach a neighboring site. To perform a successful interference, the ensuing time delays must be matched with precision at the level of wave packet 
duration to allow interference to happen. As such, there is an initial calibration stage where the time delays are tuned. Light is gathered from a small 
bright source in the sky (the \emph{guide star}), close to the object of interest. In practice, artificial sources such as satellites or laser guidestars are used. For this application, the finite-source problem is no longer an issue. Sample-efficient measurements are still vital, to allow rapid calibration, or alternatively, using dimmer sources. Therefore, in this paper, we also provide an efficient solution to the problem of learning time delays in telescope arrays, using the same approach as for microlensing.

\subsection{Our contribution}

In this work, we develop a novel technique that provides ingredient (i). To obtain ingredient (ii), we discuss a class of lensing events whose time delays could be measurable via our technique.

Our delay-finding approach relies on a key observation that every photon emitted in a spherical wave takes both paths created by gravitational lensing to reach the Earth in quantum \emph{superposition}. With this, inspired by the advancement of quantum information science and building upon the intuition of frequency-domain interference from Refs.~\cite{deguchi1986diffraction,peterson1991gravitational}, we propose a concrete sample-efficient delay measurement approach, \cref{alg:alg1}, that provably uses as few photons as possible. In particular, letting $T$ be the upper limit of $\Delta t$, $t_c$ be the coherence time of the photon without lensing (we consider $t_c$ to be an inherent parameter of the photons for now; we will explain how $t_c$ is defined in a realistic observation scenario later in this paper), and assuming $\Delta t\gg t_c$, our method consumes only $O(\log (T/t_c))$ photons to measure $\Delta t$ with precision $t_c$, while traditional proposals require $O(T/t_c)$ photons. Note that we work in the photon-starved regime, hence we expect to receive at most one photon per wave packet, and the wave function of each photon is the superposition of two wave packets separated by $\Delta t$ due to the microlensing effect. We also provide a rigorous proof that $\Omega(\log (T/t_c))$ is the information-theoretic lower bound, hence our method is optimal. One proof is based on modeling the gravitational lensing system as a communication channel and computing its channel capacity. We also exploit a surprising connection between the delay-finding problem and a well-studied problem in quantum computing, the dihedral hidden subgroup problem. We show that the dihedral hidden subgroup problem can be reduced to our problem, giving an alternative optimality proof in terms of both sample complexity and computational complexity.

The key ingredient for the exponential improvement in our scheme is that our algorithm uses quantum information processing technologies (including \emph{single-photon} spectrometers, and, depending on the specific implementation of our scheme, quantum memory and digital quantum computation) to perform single-photon frequency-basis measurements. This allows us to \emph{sample} from a certain distribution determined by the value of $\Delta t$. By feeding these samples into a data-processing procedure inspired by the sample-efficient algorithm for the dihedral hidden subgroup problem \cite{EH00}, we can estimate $\Delta t$ in the style of maximum-likelihood estimation.

Implementing our approach involves measuring the frequency of every photon, which requires a broadband high-resolution spectrometer with single-photon sensitivity. The difficulty of the implementation strongly depends on $T$, the upper limit of $\Delta t$ (which is determined by the lensing object) of our interest, because the required frequency resolution is $\sim 1/\Delta t$. Single-photon spectrometers based on dual-combs feature up to $\sim 100\,\mathrm{MHz}$ resolution \cite{coddington2016dual,picque2020photon,xu2024near,peng2025single} with $\sim 10\,\mathrm{GHz}$ bandwidth. State-of-the-art spectrometers based on a time lens \cite{mazelanik2020temporal,joshi2022picosecond,mazelanik2020temporal} even achieve $20\,\mathrm{kHz}$ resolution, but their bandwidth is limited to MHz-level. These results in principle enable measurements of $10\,\mathrm{ns}$ (dual-comb) or up-to-$0.1\,\mathrm{ms}$ (time lens) time delay, corresponding to lensing objects as heavy as brown dwarfs (dual-comb) or primordial black holes of multiple solar masses (time lens), as is explained later in \cref{sec:lens_intro}. However, one bottleneck for such an observation is that the photon sources we consider are thermal sources emitting broadband signals. The tiny bandwidth of existing single-photon spectrometers may require using prohibitively many of them in parallel. 
Therefore, long-$\Delta t$ measurements may only be achievable through next-generation single-photon spectrometers; 
however, for shorter $\Delta t$, which are also of significant interest to astronomy, resolving the frequency of single photons is much less challenging and can potentially be realized by combining existing technologies.

We also propose another version of the delay-finding approach (\cref{alg:alg2}) which relies on storing and processing the photon wave function in the \emph{time} domain and uses a different data processing procedure. We propose to first perform non-demolition frequency measurement on the received photon to localize it to a frequency range with width denoted by $1/t'_c$ (where $t'_c$ is the \emph{effective coherence time} satisfying $t'_c \geq t_c$). Then, using a quantum information \emph{discarding} process, one can store the photon in a digital quantum computer in an \emph{undersampling} manner in the time domain. More specifically, we propose to use a quantum memory that can only distinguish $O(T/t'_c)$ modes, which is far fewer than $\Theta(T\omega_0)$ modes for sampling at the Nyquist rate (where $\omega_0$ is the carrier frequency of the photon). With this approach, one can employ the quantum Fourier transform (QFT) to produce an aliased frequency as the output, which is fed into \cref{alg:alg2} to find $\Delta t$. The connection between delay finding and the dihedral hidden subgroup problem is established through this time-domain version. We show that this version can potentially be implemented by a linear optics system. Going further, digital quantum computing in principle enables compressed storage of the photonic modes with binary encoding of arrival time, which gives an exponential reduction in the resources required.

Our photon-efficient method enables the estimation of microlensing time delays in the optical and infrared (IR) bands. The measurable delay range spans from $10^{-10}\,\mathrm{s}$ to $10^{-3}\,\mathrm{s}$ depending on the capabilities of the frequency-resolving device. This, in principle, supports observation of many more sources than in radio/gamma wavelengths, and the photons can be lensed by the majority of interesting microlensing systems. However, optical/IR waves oscillate extremely fast, giving a rather stringent requirement on the variance of $\Delta t$ between different photons due to the finite-source effect. This means it is much more difficult to have ingredient (ii). Nevertheless, as part of our solution, we give a concrete use case of our measurement scheme in the optical/IR band that satisfies ingredient (ii), and present a comprehensive analysis of its scientific value and feasibility. Specifically, 
we consider
\emph{flares} of \emph{M-class red dwarfs} (M dwarfs). M dwarfs are relatively tiny and cold stars, and a flare is an event in which a small region of the dwarf becomes almost as bright as the whole dwarf in certain passbands. Our analysis shows that, for a significant fraction of flares in M dwarfs, the size of the light-emitting area may be small enough such that the uncertainty in $\Delta t$, denoted by $\delta_{\Delta t}$, is less than one period of the carrier frequency $\sim 10^{-15}$s. Moreover, our scheme not only enables the study of the lensing object, but also yields constraints on the actual spatial size of the flares in M dwarfs, which is currently poorly understood. Indeed, directly resolving flare kernels on even nearby M dwarfs likely requires $\geq 10\,\mathrm{km}$ optical baselines, which may only be achievable with quantum-assisted optical interferometry \cite{Gottesman2021TelescopesRepeaters}. We also perform a comprehensive analysis of the number of photons we can receive in realistic settings to observe microlensed flares on M dwarfs, taking into account the duration, size, and temperature of the flare, as well as astronomical dust extinction and telescope collecting area. Our result shows that near-term ground-based optical telescopes can achieve sufficiently high signal-to-noise ratio in such observations. Moreover, we improve our algorithms such that photons from temporally and spatially separated flares can be analyzed collectively to infer the average lensing time delay, allowing for potential implementation using existing optical telescopes.

To support the feasibility of our observation scheme, we also analyze the robustness of our approaches against several other potential issues. We prove that our algorithm still works when signal photons (with fixed time delay) are mixed with noise photons (without fixed time delay) in an indistinguishable manner, although more signal photons are needed than in the noiseless case.
We also prove that, although the majority of photons may be lost during transmission due to the interstellar medium, we can guarantee with high probability that the superposition of two paths is preserved in the 
received signal photons provided the dust ``particle" size is much lower than the telescope size.

Finally, we discuss the application of our methods to a different task: the calibration of time delays in telescope arrays. Light traveling from a source along different paths picks up a relative delay when observed at different sites. Learning these time delays is important to enable interference of photons arriving at different telescopes. In order to learn the time delays, we map 
the distributed problem to a lensing-like scenario, where a single detector observes the photons. In particular, we show how to use entanglement to transfer the information across the array to a single site, where we can apply our algorithm. The same compression and storage of photonic information in memory as used in prior work on telescope arrays~\cite{Khabiboulline2019OpticalInterferometry,Khabiboulline2019TelescopeArrays} is applicable here, such that our proposal is compatible with that scheme. The benefits over classical techniques are a replacement of long delay lines with memories that keep track of timing, allowing for longer baselines and thus larger resolution, and improved sample efficiency, as provided by our algorithm.

The remainder of the paper is structured as follows. We provide preliminary information in \cref{sec:prelim}: we give a technical introduction to gravitational lensing and the significance of measuring its time delay (\cref{{sec:lens_intro}}), explain the setup of the delay-finding problem with mathematical and physical rigor (\cref{sec:setup}), and review a traditional time delay measurement approach using a large number of photons (\cref{sec:review_classical}). In \cref{sec:freq_domain_interference}, we describe our sample-efficient delay-finding algorithm: we introduce the frequency-domain interference framework via a classical electromagnetism derivation (\cref{sec:classical_pic}), reproduce the same derivation in the quantum picture for photonic wave functions (\cref{sec:quantum_pic}), use this picture to propose our delay-finding algorithm and analyze its sample complexity (\cref{sec:ouralgo}), explain the consequence of varied $\Delta t$ values (the finite-source effect) and analyze the how our algorithm performs under this effect (\cref{sec:finite_source}), analyze the effect of noise photons (\cref{sec:noisysignal}) and the lensing magnification (\cref{sec:unequal}), and finally discuss the realistic scenario of broadband input photons which induces a short coherence time (\cref{sec:broadband}). In \cref{sec:lower_bound}, we prove the information-theoretic lower bound for the sample complexity, matching the actual sample complexity of our algorithm. In \cref{sec:quantum_solution}, we propose the undersampling algorithm using the quantum Fourier transform on a digital quantum computer, and discuss its connection to the dihedral hidden subgroup problem (\cref{sec:dhsp_discussion}), giving another proof of the optimal sample complexity as well as computational complexity. In \cref{sec:experiment}, we discuss possible experimental realizations of our algorithms. In \cref{sec:dwarfflare}, we present the astronomical observation plan: we carry out a case study for the example setup for microlensed M dwarf flares with an analysis of its feasibility and scientific value (\cref{sec:flare_setup} and a narrow-band version in \cref{sec:narrow}); we then introduce a modified version of our algorithm to combine photons from different flares and present numerical simulation results (\cref{sec:combineflares}).
In \cref{sec:robustness}, we perform a thorough analysis for the robustness of our approach to noises due to the medium between the source and the telescope: we investigate the effect of dust extinction (\cref{sec:dustExtinction} and a detailed proof in \cref{appendix:dust}), astronomical scintillation (\cref{sec:refractiveindex}), and atmospheric fluctuation (\cref{sec:atmos}). In \cref{sec:forarrays}, we discuss the application of our delay-finding protocol to calibrating quantum telescope arrays. Finally, we summarize our work and discuss open problems in \cref{sec:discussion}.

\section{Preliminaries}
\label{sec:prelim}
In this section, we provide background information for the rest of the paper. We first give a brief introduction to gravitational lensing and derive the corresponding time delay $\Delta t$ and its variation due to finite source size (\cref{sec:lens_intro}). Next, we describe both classical and quantum descriptions of the delay-finding problem (\cref{sec:setup}). Finally, we review one straightforward approach to measure the time delay based on Mach-Zehnder interferometry, which consumes $O(T/t_c)$ photons (\cref{sec:review_classical}).

Throughout this paper, we adopt a few non-SI units that are standard in astronomy. Specifically, we use $1\,\mathrm{pc} = 3.0857\times 10^{16}\,\mathrm{m}$ and $1\,\mathrm{erg} = 10^{-7}\,\mathrm{J}$.

\subsection{Gravitational lensing time delays}
\label{sec:lens_intro}

When light emitted by a distant source passes near a large distribution of mass on its way to Earth, the path of the light is altered in an effect known as gravitational lensing.
For sufficiently large masses, the distortion is large enough to form multiple images of the same source on the sky. However, at lower masses, these images cannot be individually resolved---they overlap with each other and with the true position of the source, causing the source to appear brighter. This transient magnification of a source is known as \textit{gravitational microlensing} \cite{Paczynski1986}. Such microlensing events provide one of the few ways to detect non-luminous astrophysical bodies.

For the purposes of this paper, we will restrict ourselves to a fiducial example, namely a star in the Galactic Bulge of the Milky Way as the source and a dark, isolated object such as a black hole as the lens. This example is of particular interest, as the Galactic Bulge has been the target of decades of ground-based microlensing searches \cite{Udalski2015,Nunota2025,Park2018}, and such surveys have yielded significant discoveries of dark objects such as isolated black holes \cite{Lam2022,Sahu2022,Kaczmarek2025} and free-floating planets \cite{Sumi2011,Mroz2017,Mroz2018,Mroz2019,Mroz2020a,Mroz2020b,Kim2021,Mroz2023,Mroz2024,Koshimoto2023,Gould2022,Gould2023}. In the next few years, future missions such as NASA's Nancy Grace Roman Space Telescope (henceforth, \textit{Roman}) \cite{Akeson2019,Penny2019} and the Chinese National Space Agency's Earth 2.0 satellite \cite{Ge2022} will conduct the first ever dedicated space-based microlensing surveys. They are expected to discover orders of magnitude more dark astrophysical bodies than current ground-based surveys \cite{Penny2019,Johnson2020}.

However, as powerful a tool as microlensing is for discovering dark astrophysical objects, it suffers from inherent degeneracies that make measuring the underlying properties of the lens, such as its mass, quite challenging \cite{Lee2017}. This can be seen from the fact that the primary observable associated with microlensing is a quantity called the \textit{Einstein crossing time} that corresponds to the approximate duration of the lensing event. The Einstein crossing time is defined as the time for the source to cross the angular Einstein radius $\theta_E$, which is the region of the sky surrounding the lens in which the source is magnified by the lens. The Einstein crossing time $t_E$ therefore depends on the distance to the lens ($D_L$), the distance to the source ($D_S$), the mass of the lens ($M$), and the relative proper motion of the lens and source $\mu_\text{rel}$:
\begin{equation}
\label{eq:tE}
    t_E = \frac{\theta_E}{\mu_{\text{rel}}}
\end{equation}
with
\begin{equation}
    \theta_E = \sqrt{\frac{4 G M (1 - D_{L}/D_{S})}{ D_{L} \, c^2}}.
\end{equation}
It is clear from these equations that, if only $t_E$ is measured, there is an inherent degeneracy between the lens mass $M$, lens distance $D_L$, and relative transverse velocity ($v_T = \mu_\text{rel} D_L$). Breaking this degeneracy is challenging, particularly for isolated, dark objects such as black holes, neutron stars, and free-floating planets.

\begin{figure*}
    \centering
    \includegraphics[width=\linewidth]{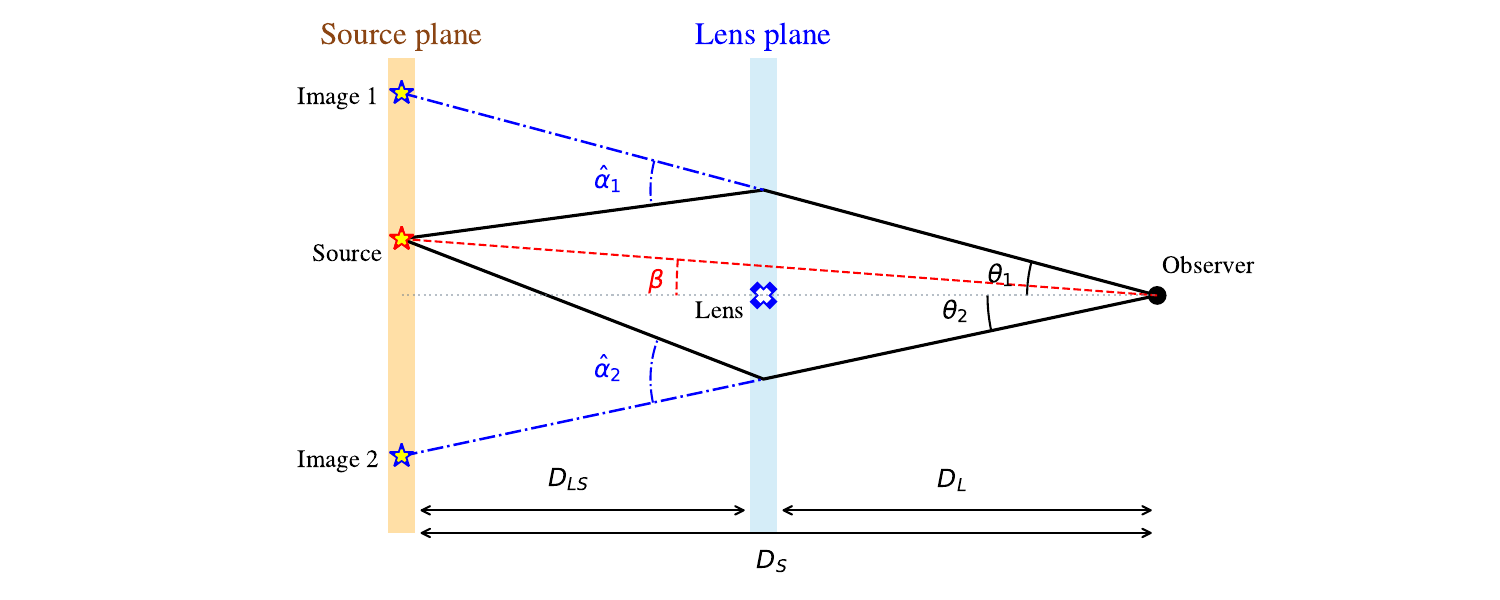}
    \caption{The simplified diagram of a single-lens system showing the light deflection under the gravity of the lens. In the diagram, the black solid line shows a small-angle-approximated light path from the source to the observer, $\theta_1$ and $\theta_2$ are the position angle of the lensed image, $\beta$ is the position angle of the source, and $\hat \alpha_1$ and $\hat \alpha_2$ are the deflection angles of the light paths.}
    \label{fig:lensingdiagram}
\end{figure*}

One means of breaking this degeneracy is through time delays. Since the paths taken by light from the source differ for the two  
images, there is a relative path-length difference between them. Working in a coordinate system in which the lens lies along the axis, we define $\vec{\beta}$ as the angular position of the source on the sky and $\vec{\theta}$ as the angular position of the corresponding image on the sky (see \cref{fig:lensingdiagram}).
Given this, we can compute the propagation time difference between paths as 
\begin{equation}
\label{eq:timingsurface}
    t = \frac{D_L D_S}{D_{LS}c} \tau ~~~\text{with}~~~ \tau \coloneqq \frac{1}{2}(\vec{\theta}-\vec{\beta})^2-\psi(\vec{\theta}),
\end{equation}
where---on sub-cosmological scales---$D_{LS} = D_S - D_L$ is the source-lens distance,
and $\psi(\vec{\theta})$ is the lensing potential, which reduces to $\psi(\vec{\theta}) = \theta_E^2 \ln|\vec{\theta}|$ for a point lens lying along the axis. A single derivative of this quantity yields the standard point-source lensing equation
\begin{equation}
    \vec{\theta} - \vec{\beta} = \frac{\theta_E^2}{|\vec{\theta}|},
\end{equation}
which is rotationally symmetric due to the geometry of the problem. Solving this equation for the major (+) and minor ($-$) image positions as a function of source position yields
\begin{equation}
    \theta_{\pm} = \frac{1}{2}(\beta \pm \sqrt{\beta^2 + 4 \theta_E^2}).
\end{equation}
Plugging this solution back into Eq.~(\ref{eq:timingsurface}) and defining the impact parameter $u \coloneqq \frac{\beta}{\theta_E}$ yields
\begin{equation}
    \Delta t = t_+ - t_- = \frac{4 G M}{c^3} f(u),
\end{equation}
where we have defined
\begin{equation}
\label{eq:fu}
    f(u) \coloneqq \left[\frac{1}{2} u \sqrt{u^2 +4} + \ln\left(\frac{\sqrt{u^2 +4}+u}{\sqrt{u^2+4} -u}\right)\right].
\end{equation}
The impact parameter can be independently measured from the magnification curve, as the magnification in the point-source point-lens regime is \cite{Refsdal1964Lensing}
\begin{equation}
\label{eq:Au}
    A(u) = \frac{u^2 +2}{u\sqrt{u^2 +4}},
\end{equation}
where $u$ varies with time as the source location approaches its minimum impact parameter $u_0$ and then moves away from the lens axis. We see that if $u_0 = 0$, the magnification is infinite, as expected \cite{Paczynski1986}. Given the magnification curve and the time delay, it is clear that one can solve for $M$, the lens mass, directly, breaking the inherent degeneracy present in the light curve alone. 

The behavior of Eqs.~(\ref{eq:fu}) and (\ref{eq:Au}), as plotted in Fig.~\ref{fig:fu}, constrains the regime in which microlensing is useful. For a microlensing event to be detectable, we normally require $u \approx 1$ (though with future space-based missions, this may be relaxed). Additionally, note that Eq.~(\ref{eq:Au}) is the total magnification, i.e., the sum of the major and minor image contributions given by $A_\pm (u) = \frac{1}{2}A(u)\pm \frac{1}{2}$.
The analysis is simpler when
there is comparable flux from the two images, i.e., $u \approx 1$.
For both of these reasons, we mainly consider examples in which $u\approx1$, hence $\Delta t \approx \frac{8 GM}{c^3}$ and $A \approx 1.34$ (a $34\,\%$ brightening).

\begin{figure}
    \centering
    \includegraphics[width=\linewidth]{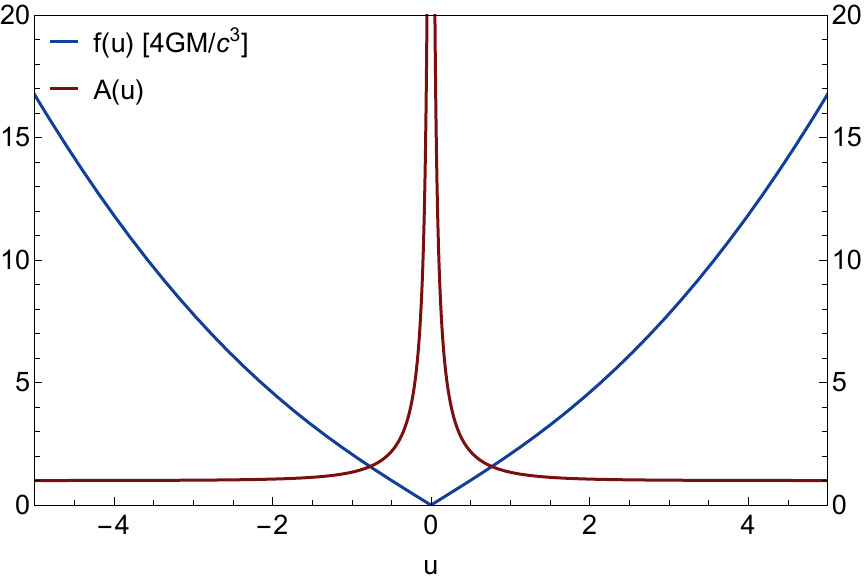}
    \caption{Plot of $A(u)$ and $f(u)$, the $u$-dependent parts of the magnification and time delay for a point-source point-lens microlensing configuration.}
    \label{fig:fu}
\end{figure}

When considering the fiducial use case of our delay-finding scheme (which uses photon sources near the Galactic Bulge), there is a simplified expression for the time delay:
\begin{equation}
\label{eq:deltat_mass}
    \Delta t = (2\times 10^{-5}\,\text{s}) \left(\frac{M}{M_\odot}\right) \times f(u).
\end{equation}
Hence, for Earth-mass lenses such as small rogue planets ($M \sim M_\oplus$), this is on the order of $0.1 \, \mathrm{ns}$, while for black holes such as those found in the LIGO band \cite{Nitz2021BHMerger,LIGO2023BHCatalogue} ($M \sim 30 M_\odot$), this is on the order of $1 \, \mathrm{ms}$, evaluated at $u\approx1$.

As mentioned in the previous subsection, the strategy of frequency-domain interference has been discussed primarily in the radio band, partially due to the finite-source effect:
if photons emitted from different regions of the source have respective path lengths for the same image that differ by more than a fraction of $\lambda$, averaging over the surface of the star eliminates the interferometric effect. Although we postpone the rigorous analysis to \cref{sec:finite_source}, we can derive a corresponding bound on the size of a source as a function of wavelength. To do this, we expand $\tau$ in small $\vec{\delta\theta}$ first to find
\begin{equation}
    \delta \tau = \frac{\partial\tau}{\partial\theta_i}\delta\theta_i+\frac{1}{2} \delta\theta_j \frac{\partial^2 \tau}{\partial \theta_j \partial \theta_k}\delta\theta_k + \cdots.
\end{equation}
By Fermat's principle, the first term in the expansion vanishes, since $\frac{\partial\tau}{\partial\theta_i} = 0$. As a result, the leading-order contribution is quadratic. The derivative expression in this term is simply the magnification matrix
\begin{equation}
    A_{jk} \coloneqq \frac{\partial^2 \tau}{\partial \theta_j \partial \theta_k} = \frac{\partial\beta_k}{\partial\theta_j},
\end{equation}
where the magnification defined in Eq.~\eqref{eq:Au} is simply $|\mathbf{A}^{-1}|$.
To convert to source-plane coordinates, we take $\vec{\delta \theta} \rightarrow \frac{\partial \theta}{\partial \beta} \vec{\delta \beta}$, but this is just a transformation by the inverse magnification matrix $\mathbf{A}^{-1}$. Hence the leading-order term just becomes
\begin{equation}
    \delta t = \frac{D_L D_S}{2 D_{LS} c}\vec{\delta\beta}^T \mathbf{A}^{-1} \vec{\delta\beta}.
\end{equation}
In order to find the finite-source path difference, we can take $\delta\beta \approx 2 R_S / D_S$, where $R_S$ is the physical radius of the source star and $D_S$ is the distance to the source star. (Note that this argument also applies to proper motion stability. In this case, we can take $\delta\beta \approx v_T t_{\text{exp}}$. However, since $R_S > v_T t_\text{exp}$ for all realistic values of $t_\text{exp}$, finite-source effects always place the stronger constraint on which targets are viable for this setup.)

Taking $D_L \approx D_S/2$ and with  $|\mathbf{A}^{-1}| = A$, we can coarsely approximate this constraint as
\begin{equation}
\label{eq:finitesource}
    \delta \lambda \gtrsim 2\frac{ R_S^2}{D_S}  A \approx (5\,\text{mm}) \left(\frac{R_S}{R_\odot}\right)^2\left(\frac{D_S}{8 \, \text{kpc}}\right)^{-1} \left(\frac{A}{1.34}\right).
\end{equation}
Light satisfying the above constraint largely lies in the radio band, making it difficult to detect such an effect in the optical. The radius of the source, however, remains a free parameter. If, instead, the source is only the radius of Earth, we have
\begin{equation}
\label{eq:finitesource2}
    \delta \lambda \gtrsim  (400\,\text{nm}) \left(\frac{R_S}{R_\oplus}\right)^2\left(\frac{D_S}{8 \, \text{kpc}}\right)^{-1} \left(\frac{A}{1.34}\right),
\end{equation}
which falls squarely in the optical range. The only isolated sources at these radii are white dwarfs, which would be a compelling target were it not for the fact that they are too dim at $8 \, \mathrm{kpc}$ to be detected in microlensing surveys.

Interestingly, however, the algorithm presented later in this work can still produce a mass measurement even if the source \textit{region} is only a small fraction of the source surface. All that is required is that this localized region is solitary and produces a larger flux in the bandpass of interest than the background flux from the rest of the source. For this reason, the microlensing of \textit{stellar flares} of M dwarfs, energetic emissions from localized regions on the surface of source stars, may provide a means of performing this mass measurement, even if the source star radius exceeds the wavelength limit of Eq.~(\ref{eq:finitesource}). 

Finally, observe that Eqs.~\eqref{eq:finitesource} and \eqref{eq:finitesource2} suggest that greater magnification will increase the finite-source effect, hence the $A=1.34$ scenario (corresponding to $u=1$) achieves a reasonable balance between brightening and finite-source problem. However, we notice that if a lensing event has more than 1.34 amplification, one can always choose to wait a while (may be from hours to days, depending on the total duration of the event) such that the magnification decreases to 1.34 due to the relative motion. Therefore, the $A\approx 1.34$ requirement will not significantly reduce the number of observable events. Furthermore, as is discussed later in \cref{sec:unequal}, small magnification will slightly increase the number of photons we need to perform a successful time-delay measurement. This implies that there may exist an optimal choice of $A$ that balances the finite-source effect, the photon number requirement, and the observable event rate. We leave this as an open problem.

\subsection{Problem setup for delay finding}
\label{sec:setup}
\paragraph{Classical setup.} We start by defining the problem from the most general classical description of the optical system. We consider the electromagnetic field at a specific position (say the position of our telescope) with a specific polarization in a time window from $0$ to $T$ (which is set by our measurement protocol). Note that this is a reasonable setting because $T$ is the upper limit of $\Delta t$ given by our prior knowledge, and the time window must be longer than $\Delta t$ to allow for detecting the time-delay phenomena. We let $E_0(t)$ be the electric field emitted by the source without microlensing and let $E(t)$ be the field with microlensing. With time delay $\Delta t$ and magnification $A$, we claim that $E(t)$ can be written as 
\begin{equation}
\label{eq:unequal_electricfield}
    E(t) = \sqrt{A_+} E_0(t) + \sqrt{A_-} E_0(t-\Delta t)
\end{equation}
where $A_\pm = \frac{1}{2}A\pm \frac{1}{2}$ is the magnification of each path, as introduced in \cref{sec:lens_intro}.
The reason why the actual electric field can be considered simply as the sum of $E_0$ and its delayed version is that the two images (corresponding to $E_0(t)$ and $E_0(t-\Delta t)$, respectively) are indistinguishable in microlensing. One can also verify that, when the coherence time $t_c$ of the unlensed electric field $E_0(t)$ is much smaller than $\Delta t$, $E_0(t)$ and $E_0(t-\Delta t)$ are incoherent, hence the average intensity of light $I$ satisfies
\begin{equation}
\begin{aligned}
    I &= \langle |E(t)|^2 \rangle_t \\
    &= A_+ \langle |E_0(t)|^2 \rangle_t + A_- \langle |E_0(t)|^2 \rangle_t \\
    &\quad + 2\sqrt{A_+ A_-} \langle E_0(t) E_0(t-\Delta t)\rangle_t\\
    &= A I_0,
\end{aligned}
\end{equation}
where $\langle\cdot\rangle_t$ denotes an average over time. This result agrees with the condition that light intensity is amplified by a factor of $A$ due to microlensing.
Since, in this paper, we focus on lensing events with $A=1.34$, 
corresponding to $\sqrt{A_+} = 1.08$ and $\sqrt{A_-} = 0.41$, the unequal magnification of the two paths will likely only cause small-constant-factor-level noise in the estimation of $\Delta t$ (indeed, we prove this in \cref{sec:unequal}). Therefore, we use a simpler model with $A_+ = A_- = 1$ in most of our analyses and postpone a rigorous discussion of the $A_\pm$ factors to \cref{sec:unequal}.

The delay-finding problem is simply evaluating $\Delta t$ from the light field $E(t)$. To give a better sense of a realistic form of $E(t)$, we can assume that $E_0(t)$ is a Gaussian wave packet, i.e.,
\begin{equation}
\label{eq:clasical_field}
    E_0 (t) = \mathcal{E} \alpha(t-t_0) e^{-\imag \omega_0 t},
\end{equation}
where $\omega_0$ is the carrier frequency, $t_0$ is centroid of the wave packet, $\mathcal{E}$ is the strength of the electric field, and $\alpha$ is the normalized Gaussian wave packet defined as
\begin{equation}
\label{eq:gausspacket}
    \alpha(\tau) \coloneqq \frac{1}{\sqrt[4]{\pi t_c^2}} e^{-\frac{\tau^2}{2 t_c^2}},
\end{equation}
where $t_c$ is the width of the wave packet, or the coherence time of $E_0(t)$.

\paragraph{Quantum setup.} Recall that a major difficulty in microlensing delay finding is the photon-starved condition, and one of our key objectives is to find a photon-efficient solution. Therefore, we must consider the problem setup from a quantum mechanical perspective. To do so, we analyze the wave function of an incident photon, which is of similar form as the classical electric field. We can interpret $\mathcal{E}^2$ as the photon rate and $\alpha(t-t_0) e^{-\imag \omega_0 t}$ as the wave function of an unlensed photon (note that $\alpha(t-t_0)$ is a normalized Gaussian wave packet), i.e.,
\begin{equation}
    \ket{\phi_0(t_0)} =  \int_{-\infty}^{\infty} \alpha(t-t_0) e^{-i\omega_0 t} \ket{t} dt,
\end{equation}
where $\ket{t}$ represents the state that the photon is received at time $t$.

Now, let us consider the state of a lensed photon. Since the two images in microlensing are indistinguishable for the observer, one can imagine that the directions of the two emission paths are also indistinguishable from the perspective of the photon emitter. Also, this implies that the angle between the two paths is sufficiently small that we can consider the photon as an excitation of a spherical wave, which is a superposition of all possible directions.
Thus we can describe
every received photon by a superposition of two paths. This allows us to write down the state of a photon when a microlensing event occurs:
\begin{equation}
\label{eq:purestate}
\begin{aligned}
    &\quad \ket{\phi(t_0,\Delta t)}\\
    &= 
   \frac{1}{\sqrt{2}} \int_{-\infty}^{\infty} \alpha(t-t_0) e^{-i\omega_0 t} \left(\ket{t} + \ket{t + \Delta t} \right)dt \\
   &= 
   \frac{1}{\sqrt{2}} \int_{-\infty}^{\infty}\left[ \alpha(t-t_0) e^{-i\omega_0 t}  \right.\\
   &\quad \quad \quad \quad + \left. \alpha(t-t_0-\Delta t) e^{-i\omega_0 (t-\Delta t)} \right] \ket{t} dt.
\end{aligned}
\end{equation}
Note that this state is normalized correctly only if $\Delta t\gg t_c$, which is the scenario of interest.

Note that, in reality, $t_0$ is as a uniformly random quantity because one can never predict at what time a photon arrives. Therefore, the most rigorous way of expressing the state is a density operator $\rho(\Delta t)$, defined as a (classical) uniform mixture over all $\ket{\phi(t_0,\Delta t)}\bra{\phi(t_0,\Delta t)}$ with fixed $\Delta t$ and $t_0\in [0,T_\mathrm{w}]$ (with $T_\mathrm{w}\gg \Delta t$), i.e.,
\begin{equation} \label{eq:mixedState}
    \rho(\Delta t) = \frac{1}{T_\mathrm{w}} \int_{0}^{T_\mathrm{w}} \ket{\phi(t_0,\Delta t)}\bra{\phi(t_0,\Delta t)} d t_0.
\end{equation}
Now, we have all the theoretical ingredients to formulate the problem as follows.

\begin{problem}[Delay finding]
\label{prob:prob1}
    Learn $\Delta t$ with error up to $t_c$ from as few copies of $\rho(\Delta t)$ as possible.
\end{problem}
Note that each copy of $\rho(t)$ corresponds to one incident photon. We therefore use the phrase ``sample complexity" to represent the number of photons needed. We present our solution to this problem in \cref{sec:freq_domain_interference}.

\subsection{Review of a sample-inefficient approach}
\label{sec:review_classical}
Measuring the time delay between two paths is not unique to the topic of gravitational lensing. However, since there is not a stringent restriction on the number of photons in most scenarios, existing delay-finding approaches cannot be simply applied to our problem due to their sample complexity. This even includes previous works based on the same intuition as ours (frequency-domain interference). In this subsection, we review a straightforward method to measure $\Delta t$ with $t_c$ precision using $O(T/t_c)$ photons, while our algorithm in \cref{sec:freq_domain_interference} needs only $O(\log(T/t_c))$ photons.

From Eq.~\eqref{eq:purestate}, we realize that the wave function of a photon is a superposition of two wave packets separated by $\Delta t \gg t_c$. Therefore, if one can move one of the packets in the time domain by a time $\tau$ with $|\Delta t -\tau|\leq t_c$, then the two wave packets would overlap with each other and create interference.

To observe the above phenomena, one can use a standard Mach–Zehnder interferometer: the input light is split by a beam splitter into two paths, where a tunable delay line (of length $\tau$) is place in one of them, then two paths are recombined on the second beam splitter followed by single-photon detectors at two output ports. We can see that, with constant probability, the state at one port is
\begin{equation}
\begin{aligned}
    &\frac{1}{\sqrt{2}} \int_{-\infty}^{\infty}\left( \alpha(t-t_0 - \tau) e^{-i\omega_0 (t-\tau)} \right.\\ 
    &\quad \quad \quad \quad \left. + \alpha(t-t_0-\Delta t) e^{-i\omega_0 (t-\Delta t)} \right)\ket{t} dt
\end{aligned}
\end{equation}
while the state at the other port is
\begin{equation}
\begin{aligned}
    &\frac{1}{\sqrt{2}} \int_{-\infty}^{\infty}\left( -\alpha(t-t_0 - \tau) e^{-i\omega_0 (t-\tau)} \right.\\ 
    &\quad \quad \quad\quad \left. +\alpha(t-t_0-\Delta t) e^{-i\omega_0 (t-\Delta t)} \right)\ket{t} dt
\end{aligned}
\end{equation}
Now, if $\tau > t_c$, the two packets are not overlapped, hence the probability of receiving the photon at each port is the same, i.e., $\Pr[\mathrm{port}\,1] \approx \Pr[\mathrm{port}\,2] \approx 1/2$. However, if $\tau \leq t_c$, the probabilities will be approximately $\frac{1}{2}(1+\cos(\omega_0(\Delta t-\tau))$ and $\frac{1}{2}(1-\cos(\omega_0(\Delta t-\tau))$, respectively.

Using this result, one can try to scan over many possible $\tau$ values and check whether the photon distribution of the two ports is biased or not for each $\tau$. The $\tau$ with significant bias must satisfy $|\tau - \Delta t|\leq t_c$. However, since the search space is of size $O(T/t_c)$, 
the sample complexity of this approach is also $O(T/t_c)$. In the photon-starved regime, this method does not work well. To reduce the photon number requirement, we consider measuring the photons in a different basis, as explained in the next section.

\section{Frequency-domain interference}
\label{sec:freq_domain_interference}
In this section, we propose our quantum-inspired algorithm for sample-efficient delay finding. We first introduce the main theoretical intuition in \cref{sec:classical_pic} and \cref{sec:quantum_pic}, using the fact that a fixed delay between two signals in the time domain corresponds to a modulation in the frequency domain. Next, we describe our algorithm based on frequency-basis measurements in \cref{sec:ouralgo}. Then, in the context of our Fourier-basis analysis, we explain in \cref{sec:finite_source} how the variance in $\Delta t$ caused by the finite-source effect may destroy the signal from an information-theoretic perspective. Next, in \cref{sec:noisysignal} and \cref{sec:unequal}, we discuss the performance of our algorithm in the presence of noise and unequal magnification, respectively. Finally, we discuss the realistic \emph{broadband} scenario where photons have different carrier frequencies in \cref{sec:broadband}.

\subsection{Fully classical picture}
\label{sec:classical_pic}
Recall from \cref{sec:setup} that the classical description of the lensed electromagnetic field (with the equal-magnification assumption) is
\begin{equation}
    E(t) = E_0(t) + E_0(t-\Delta t).
\end{equation}
We denote the Fourier transform 
of $E_0(t)$ by
\begin{equation}
    \tilde{E}_0(\omega) \coloneqq \frac{1}{\sqrt{2\pi}} \int_{-\infty}^\infty E_0(t) e^{i\omega t} dt.
\end{equation}
The \emph{power spectrum} of $E_0(t)$ is then $\left|\tilde{E}_0(\omega)\right|^2$. Next, we compute the Fourier transform of $E(t)$,
\begin{equation}
\begin{aligned}
    \tilde{E}(\omega) &= \frac{1}{\sqrt{2\pi}} \left[ \int_{-\infty}^\infty E_0(t) e^{i\omega t} dt +\right.\\
    &\quad \left.\int_{-\infty}^\infty E_0(t-\Delta t) e^{i\omega (t-\Delta t)} e^{i\omega \Delta t} dt \right] \\
    &= \tilde{E}_0(\omega) (1+e^{i\omega \Delta t}),
\end{aligned}
\end{equation}
and its power spectrum
\begin{equation}
    \left|\tilde{E}(\omega)\right|^2 = 2 \left|\tilde{E}_0(\omega)\right|^2 (1+\cos(\omega \Delta t)).
\end{equation}
Now we can see that, if the original power spectrum $|\tilde{E}_0(\omega)|^2$ is known, then $\Delta t$ can be seen as interference fringes in the power spectrum provided the spectrum can be observed with resolution ${1}/{\Delta t}$.

\begin{figure*}
    \centering
    \includegraphics[width=\textwidth]{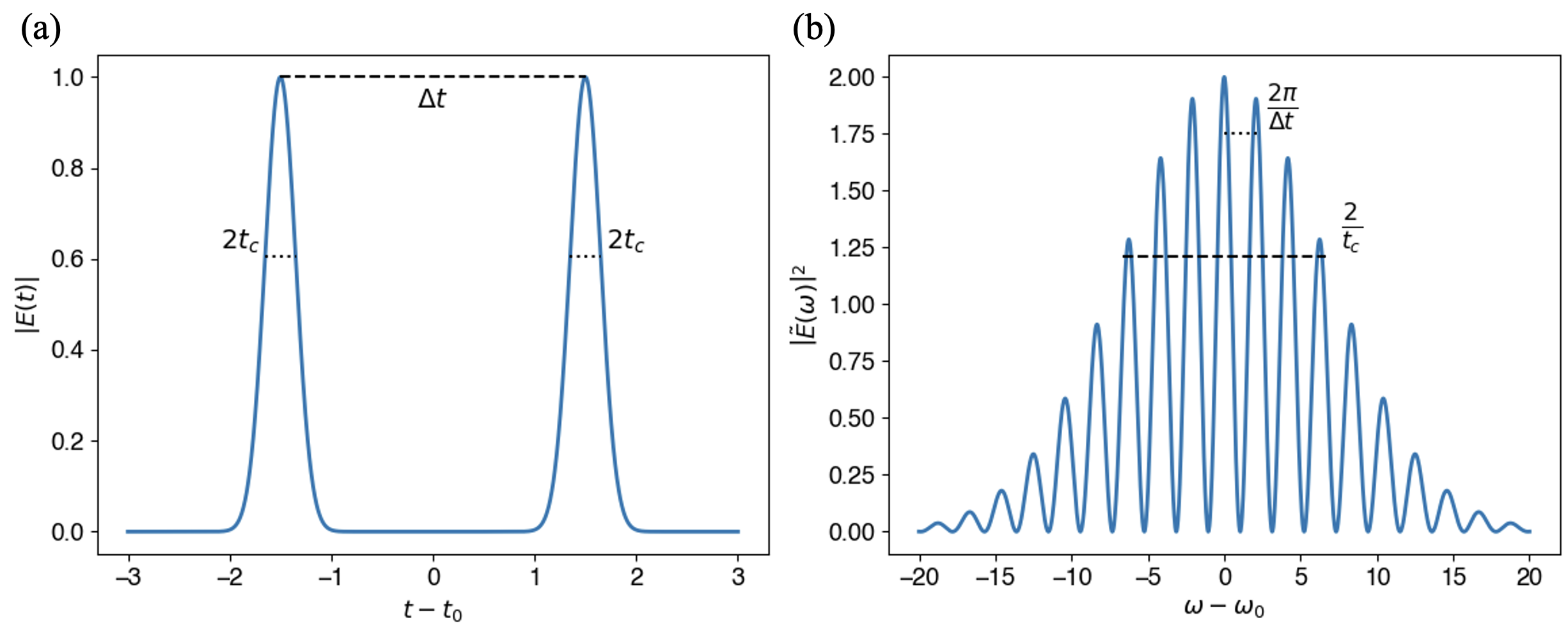}
    \caption{(a) Two Gaussian wave packets separated by $\Delta t$ in the time domain  
    correspond to (b) one Gaussian packet in the frequency domain with $(1+\cos(\omega \Delta t ))$ modulation.  
    In the simple example of this figure, we let $\Delta t=3$ and $t_c = 0.15$, hence two adjacent peaks in  
    (b) are separated by $2\pi/\Delta t \approx 2.1$ and the Gaussian envelop has width $2/t_c \approx 13$. All values in this example are unitless.}
    \label{fig:fig_modulation}
\end{figure*}

More formally, we can also find $\Delta t$ by Fourier transforming the power spectrum:
\begin{equation}
\begin{aligned}
    F(\tau) 
    &\coloneqq \frac{2}{\sqrt{2\pi}} \int_{-\infty}^\infty \left|\tilde{E}_0(\omega) \right|^2 e^{i\omega \tau} \cdot \\
    &\quad \left( 1+ \frac{e^{i\omega \Delta t} + e^{-i\omega \Delta t}}{2}\right) d\omega \\
    &=  2F_0(\tau) + F_0(\Delta t + \tau) + F_0(\Delta t - \tau)
\end{aligned}
\end{equation}
where $F_0(\tau) \coloneqq \frac{1}{\sqrt{2\pi}} \int_{-\infty}^\infty \left|\tilde{E}_0(\omega)\right|^2e^{i\omega \tau} d\omega$ is the Fourier transform of the spectrum of the unlensed signal.

We consider again the wave packet as a realistic model of the $E_0$ field, which can give a concrete example of the above behavior. Recall from Eq.~\eqref{eq:clasical_field} that $E_0 (t) = \mathcal{E} \alpha(t-t_0) e^{-\imag \omega_0 t}$. Using the definition of a normalized wave packet $\alpha(t)$ in Eq.~\eqref{eq:gausspacket}, we can compute its Fourier transform, which is another Gaussian wave packet:
\begin{equation}
    \tilde{\alpha}(\omega) = \frac{1}{\sqrt{2\pi}} \int_{-\infty}^\infty \alpha(\tau) e^{-\imag \omega \tau} d\omega =  \sqrt[4]{\frac{t_c^2}{\pi}} e^{-\frac{t_c^2 \omega^2}{2}}.
\end{equation}
Since the form of the wave packet can be controlled by placing a filter in front of the light receiver, it is reasonable to assume a Gaussian shape as above.
Now, one can compute the Fourier transform of $E_0$,
\begin{equation}
\begin{aligned}
    \tilde{E}_0(\omega) &= \frac{\mathcal{E}}{\sqrt{2\pi}} \int_{-\infty}^\infty \alpha(t-t_0) e^{-\imag \omega_0 t} e^{\imag \omega t} dt \\
    &= \mathcal{E} \tilde{\alpha}(\omega - \omega_0) e^{\imag (\omega-\omega_0)t_0}.
\end{aligned}
\end{equation}
See \cref{fig:fig_modulation}(a) for the plot of an example microlensed electric field $E(t)$ with $E_0(t)$ being a Gaussian wave packet, and \cref{fig:fig_modulation}(b) for the plot of its power spectrum, $|\tilde{E}(\omega)|^2 = 2|\tilde{E}_0(\omega)|^2(1+\cos(\omega \Delta t) $.

Next, we compute the Fourier transform of $|\tilde{E}_0(\omega)|^2$,
\begin{equation}
\begin{aligned}
    F_0(\tau) &= \frac{\mathcal{E}^2}{\sqrt{2\pi}} \sqrt{\frac{t_c^2}{\pi}} \int_{-\infty}^\infty   e^{-t_c^2 (\omega-\omega_0)^2} e^{\imag \omega \tau} d\omega \\
    &= \frac{\mathcal{E}^2}{\sqrt{2\pi}}
    e^{-\frac{\tau^2}{4 t_c^2}} e^{\imag \omega_0 \tau}.
\end{aligned}
\end{equation}
We can now compute $F(\tau)$ as a Fourier transform of $ \left|\tilde{E}_0(\omega)\right|^2 (1+\cos(\omega \Delta t))$:
\begin{equation}
\label{eq:Ftau_certain}
\begin{aligned}
    F(\tau) &= \frac{\mathcal{E}}{\sqrt{2\pi}} \left[2e^{-\frac{\tau^2 }{4 t_c^2}}e^{i\omega_0 \tau} + e^{-\frac{(\Delta t + \tau)^2}{4 t_c^2}} e^{i\omega_0 (\Delta t + \tau)} \right.\\
    &\quad \quad  \quad  \quad  \left. + e^{-\frac{(\Delta t - \tau)^2}{4 t_c^2}} e^{i\omega_0 (\Delta t - \tau)} \right].
\end{aligned}
\end{equation} 
We observe that $|F(\tau)|^2$ has peaks at $\tau \rightarrow 0$ and $\tau\rightarrow \Delta t$. Therefore, the time delay can be directly read out from the Fourier transform of the power spectrum as a nonzero peak. This gives a straightforward approach to measuring the time delay by measuring the whole power spectrum of the optical signal, as in Refs.~\cite{deguchi1986diffraction,peterson1991gravitational,gould1992femtolensing,ulmer1995femtolensing,katz2018femtolensing,eichler2017nanolensed,jow2020wave,wucknitz2021cosmology}.

However, measuring the full power spectrum of a extremely weak light signal may consume prohibitively many photons. Instead, we show that if each photon can give one \emph{sample} from the power spectrum $|\tilde{E}|^2$, then the time delay $\Delta t$ can be measured using exponentially fewer photons. To show this, we derive the frequency-domain interference in the photonic picture in the next subsection.

Before moving on, we emphasize that the modulation $\cos(\omega \Delta t)$ in the power spectrum has an extremely short period, $1/\Delta t$. (Note that this ``period'' has units of $\mathrm{time}^{-1}$ because the oscillation is in the frequency domain.) For typical microlenses, $\Delta t$ can be as high as $10^{-3} \, \mathrm{s}$, corresponding to a $1 \, \mathrm{kHz}$ period. Therefore, according to the Nyquist–Shannon sampling theorem, one must be able to obtain frequency information with up to kHz-level error to find $\Delta t$, regardless of how the data are processed.
Furthermore, due to the uncertainty principle, measuring the frequency with precision $O(1/\Delta t)$ implies that a measurement of the time has error $\Omega(\Delta t)$. This is necessary for a lensed photon with $\Delta t$ time delay, and implies that the device must not be able to tell the difference between two time points in the same $\Delta t$ interval (see e.g.\ Ref.\ \cite{vittorini2014entanglement} for a similar effect with the roles of time and frequency domains reversed). In other words, a device with $1/\Delta t$ frequency resolution must somehow ``store'' a photon for time at least $\Delta t$.

\subsection{Photonic (quantum) picture}
\label{sec:quantum_pic}
In fact, the same derivation as in the classical picture can be reproduced in the photonic picture.
Recall from Eq.~\eqref{eq:purestate} that the pure state of a photon can be written as
\begin{equation}
\begin{aligned}
    &\quad \ket{\phi(t_0,\Delta t)} \\
    &=\frac{1}{\sqrt{2}} \int_{-\infty}^{\infty}\left( \alpha(t-t_0) e^{-i\omega_0 t} \right.\\
   &\quad \quad \quad \quad \quad \left.+ \alpha(t-t_0-\Delta t) e^{-i\omega_0 (t-\Delta t)} \right)\ket{t} dt.
\end{aligned}
\end{equation}
We can express the state in the frequency domain by performing the Fourier transform:
\begin{equation}
\begin{aligned}
    &\quad \ket{\phi(t_0,\Delta t)}\\
    &= \frac{1}{\sqrt{2}} \int_{-\infty}^\infty \tilde{\alpha}(\omega - \omega_0) e^{\imag (\omega-\omega_0)t_0} (1+e^{\imag \omega \Delta t}) \ket{\omega} d\omega,
\end{aligned}
\end{equation}
which has the same form as in the classical picture. Now, if we measure the state in the frequency basis, the probability density of obtaining $\omega$ is
\begin{equation}
\label{eq:thechannel}
p(\omega | \Delta t) = \left|\tilde{\alpha} (\omega - \omega_0) \right|^2 (1+\cos(\omega \Delta t)),
\end{equation}
which is independent of $t_0$. The Fourier transform of $p(\omega|\Delta t)$, denoted by $F_{\Delta t}(\tau)$, 
has the same form as $F(\tau)$:
\begin{equation}
\label{eq:Ftau_certain_q}
\begin{aligned}
    &\quad F_{\Delta t}(\tau) \\
    &=\frac{1}{\sqrt{2\pi}} \int_{-\infty}^\infty e^{\imag \omega \tau} p(\omega|\Delta t) d\omega\\
    &=\frac{1}{2\sqrt{2\pi}} \left[2e^{-\frac{\tau^2 }{4 t_c^2}}e^{\imag \omega_0 \tau} + e^{-\frac{(\Delta t + \tau)^2}{4 t_c^2}} e^{\imag \omega_0 (\Delta t + \tau)} \right.\\
    &\quad\quad\quad \quad\quad \left. +e^{-\frac{(\Delta t - \tau)^2}{4 t_c^2}} e^{i\omega_0 (\Delta t - \tau)} \right] \\
    &\approx \frac{e^{-\frac{(\Delta t - \tau)^2}{4 t_c^2}} e^{i\omega_0 (\Delta t - \tau)}}{2\sqrt{2\pi}}.
\end{aligned}
\end{equation}

In addition, we can show that the density operator $\rho(\Delta t)$ is diagonal in the Fourier basis for large time window $T_\mathrm{w}$. (In fact, it must be diagonal for large $T$ in the Fourier basis because $\rho(\Delta t)$ is time-translation invariant.) To see this, we simply express the density operator in the frequency domain:
\begin{equation}
\begin{aligned}
     \rho(\Delta t) &= \int_0^{T_\mathrm{w}} \ket{\phi(t_0,\Delta t)} \bra{\phi(t_0,\Delta t)} p(t_0) dt_0 \\
    &= \frac{1}{2} \int_{-\infty}^\infty d\omega_1 d\omega_2 \cdot \tilde{\alpha}(\omega_1 - \omega_0) \tilde{\alpha}(\omega_2 -\omega_0) \cdot \\
    &\quad\quad  (1+e^{i\omega_1\Delta t})(1+e^{-i\omega_2 \Delta t}) \ket{\omega_1}\bra{\omega_2} \cdot \\
    &\quad\quad \int_0^{T_\mathrm{w}} p(t_0) e^{\imag (\omega_1-\omega_2)t_0} dt_0.
\end{aligned}
\end{equation}
Now, we can see that, for the diagonal terms ($\omega_1 = \omega_2=\omega$), the coefficient is simply $p(\omega|\Delta t)$; for off-diagonal terms ($\omega_1 \neq \omega_2$),
\begin{equation}
\begin{aligned}
    \int_0^{T_\mathrm{w}} p(t_0) e^{i(\omega_1 - \omega_2) t_0} dt_0 &= \frac{e^{i(\omega_1-\omega_2) T_\mathrm{w}}-1}{iT_\mathrm{w}(\omega_1 -\omega_2)}  \\
    &=O\left(\frac{1}{T_\mathrm{w}}\right) \rightarrow 0.
\end{aligned}
\end{equation}
Therefore, when $T_\mathrm{w}$ is sufficiently large,
\begin{equation}
\label{eq:rhoDeltaT}
\begin{aligned}
&\quad \rho(\Delta t)  \approx  \rho_\mathrm{diag}(\Delta t) \\
& = \int_{-\infty}^\infty  \left|\tilde{\alpha} (\omega - \omega_0) \right|^2 (1+\cos(\omega \Delta t)) \ket{\omega}\bra{\omega} d\omega \\
&=  \sqrt{\frac{t_c^2}{\pi}} \int_{-\infty}^\infty   e^{-t_c^2 (\omega-\omega_0)^2} (1+\cos(\omega \Delta t)) \ket{\omega}\bra{\omega} d\omega  
\end{aligned}
\end{equation}
is a diagonal density operator. Moreover, since the same operation  (the Fourier transform) diagonalizes $\rho(\Delta t)$ for arbitrary $\Delta t$ without knowing its value, we can treat the gravitational lensing system as a \emph{classical communication channel} where Alice (the gravitational lens) sends information about $\Delta t$ to Bob (observers on the Earth) through a continuous-variable channel $p(\omega | \Delta t) = \left|\tilde{\alpha} (\omega - \omega_0) \right|^2 (1+\cos(\omega \Delta t))$.

As a remark, we note that one can easily compute the Fisher information with respect to $\Delta t$ in the above distribution of $\omega$ and apply the Cramér-Rao bound to derive the asymptotic scaling of the optimal number of samples needed to achieve a certain precision of $\Delta t$ estimation. However, this is irrelevant to the problem in this paper because we are only interested in a rough estimate with up-to-$t_c$ precision, which is not in the regime addressed by the Fisher information.

\subsection{The sample-efficient algorithm}
\label{sec:ouralgo}

Recall that the value of $\Delta t$ can be directly read from a peak in the Fourier transform of the power spectrum, as shown in Eqs.~(\ref{eq:Ftau_certain}, \ref{eq:Ftau_certain_q}). Our sample-efficient algorithm reconstructs the peak from a limited number of frequency-domain measurement outcomes. We are inspired by maximum likelihood estimation algorithm and studies of the dihedral hidden subgroup problem 
\cite{EH00}
to propose the \emph{score function}
\begin{equation}
    f(\tau,\nu_{1},\dots,\nu_{n}) = \sum_j \cos(\nu_{j} \tau),
\end{equation}
where $\tau$ denotes a candidate for the unknown $\Delta t$ and $\nu_1, \nu_2,\dots,\nu_n$ are the $n$ samples obtained by measuring the photons in the frequency domain. It is not hard to prove that if the $\tau$s are sufficiently dense in $[0,T]$ and $n$ is sufficiently large (as quantified below), then the $\tau$ maximizing $f(\tau,\nu_1,\dots,\nu_n)$ is the closest to $\Delta t$ among all candidates. In particular, the expectation value of $\cos(\nu_j \tau)$ for any $j$ corresponds directly to the Fourier transform of the power spectrum:
\begin{equation}
\begin{aligned}
    \E[\cos(\nu_j \tau)] &= \int_{-\infty}^\infty p(\nu_j|\Delta t) \cos(\nu_j \tau) d\nu_j \\
    &= \int_{-\infty}^\infty p(\nu_j|\Delta t) \Re[e^{i\nu_j \tau}] d\nu_j \\
    &= \sqrt{2\pi}\Re[F_{\Delta t}(\tau)].
\end{aligned}
\end{equation}
According to Eq.~\eqref{eq:Ftau_certain_q}, considering the case where $\tau \gg t_c$, we conclude that
\begin{equation}
\begin{aligned}
    &\quad \E[f(\tau,\nu_{1},\dots,\nu_{n})]\\ &\approx \begin{cases}
         \frac{1}{2} n \cos(\omega_0 (\Delta t-\tau)),&|\tau-\Delta t|< t_c  \\
         0,&|\tau-\Delta t| \geq t_c, 
    \end{cases}
\end{aligned}
\end{equation}
which gives a $\Theta(n)$ gap between correct and incorrect candidates. We note that the number of $\tau$s need only be $O(T/t_c)$ to find a $\tau$ approximating $\Delta t$ with up to $t_c$ precision. In practice, to avoid the ``unlucky" cases where $\cos(\omega_0 (\Delta t-\tau)) \approx 0$, we also check $\tau + \frac{2\pi k} {10 \omega_0}$ for $k\in\{0,1,\dots, 9\}$, so the number of candidates is $10 T/t_c$.

More formally, our algorithm is as follows.

\begin{algorithm}[Sample-efficient delay finding.]
\label{alg:alg1}
    Step (i): measure $n$ incident photons in the frequency basis to obtain $\nu_1,\nu_2,\dots,\nu_n$. 
    Step (ii): evaluate $f(\tau,\nu_1,\dots,\nu_n)$ for all $ 10 T/t_c$ candidate $\tau$s. 
    Step (iii): accept any $\tau$ with $f(\tau,\nu_1,\dots,\nu_n) \geq n/4$ as an estimate of the gravitational lensing time delay.
\end{algorithm}

Finally, we establish the logarithmic sample complexity ($n=O(\log(T/t_c))$) of this method. For a ``bad candidate" $\tau$ with $|\tau - \Delta t|\geq t_c$, we let $Y_{\tau}$ denote a ``bad event" that $f(\tau, \nu_1,\dots,\nu_n) \geq n /4$. Hoeffding's inequality gives a bound for the probability of $Y_\tau$:
\begin{equation}
\begin{aligned}
    &\quad \Pr[f(\tau, \nu_1,\dots,\nu_n) \geq \frac{n}{4}] \\
    &= \Pr[f(\tau, \nu_1,\dots,\nu_n) - \E[f(\tau, \nu_1,\dots,\nu_n)] \geq \frac{n}{4}] \\
    &\leq e^{-\frac{n}{32}}.
\end{aligned}
\end{equation}

To ensure that, with high probability (say $95\,\%$), no bad event happens, the union bound gives 
\begin{equation}
    \Pr[\bigcup_{\tau} Y_\tau] \leq \sum_{\tau} \Pr[Y_\tau] \leq \frac{10T}{t_c}  e^{-\frac{n}{32}} \leq 0.05.
\end{equation}
This implies that 
$n\geq 32 \left[\ln(T/t_c) + \ln(0.005) \right]= O(\log (T/t_c))$
photons are sufficient to find $\Delta t$ with $t_c$ precision and $95\,\%$ confidence.

\subsection{The finite-source effect}
\label{sec:finite_source}
In previous analyses, we assume all incoming photons share exactly the same lensing time delay $\Delta t$, which only holds when the photon source is pointlike. However, in reality, almost all photon sources are \emph{extended}, including stars, planets, quasars, etc. Different regions of an extended source have different paths to the observer with different lensing time delays. If all these regions have the same emission power spectra and the angular resolution of the telescope is smaller than the angular distance between different regions, photons with different $\Delta t$ values will be mixed in an indistinguishable manner. This suggests the possibility that the strategy of measuring $\Delta t$ may fail, and even worse, from an information-theoretic point of view, it may be fundamentally impossible to obtain any information about the delay. Indeed, as we show in this subsection, this \emph{finite-source effect} turns out to be a major challenge in time-delay measurements based on frequency-domain interference because the information about $\Delta t$ is exponentially suppressed in the optical signal.

By taking into account the distribution of $\Delta t$, denoted by $p_\mathrm{FS}(\Delta t)$, we can derive the marginal distribution of $\omega$:
\begin{equation}
\begin{aligned}
    p(\omega) &= \int_{-\infty}^\infty p_\mathrm{FS}(\Delta t) p(\omega | \Delta t) d\Delta t \\
    &= \left|\tilde{\alpha} (\omega - \omega_0) \right|^2 \int_{-\infty}^\infty p_\mathrm{FS}(\Delta t) (1+\cos(\omega \Delta t)) d\Delta t.
\end{aligned}
\end{equation}
Since $p(\omega)$ has a Gaussian envelope centered at $\omega_0$, one can see that, if the uncertainty in $\Delta t$ is greater than $ 1/ \omega_0$, then the integral of $\cos(\omega \Delta t)$ will be washed out.
We can plug in some realistic settings into $p_\mathrm{FS}(\Delta t)$: assuming $p_\mathrm{FS}(\Delta t)$ is a Gaussian centered at $\Delta t_0$ with standard deviation $\delta_{\Delta t}$, we find
\begin{equation}
\label{eq:pomega_finite_source}
\begin{aligned}
    p(\omega) &= \frac{\left|\tilde{\alpha} (\omega - \omega_0) \right|^2}{\sqrt{2\pi} \delta_{\Delta t}} \cdot \\
    &\quad \int_{-\infty}^\infty e^{-\frac{(\Delta t - \Delta t_0)^2}{2\delta_{\Delta t}^2}} (1+\cos(\omega\Delta t)) d\Delta t
    \\
    &=\left|\tilde{\alpha} (\omega - \omega_0) \right|^2 \left[ 1 + e^{-\frac{\omega^2 \delta_{\Delta t}^2} {2} } \cos(\omega \Delta t_0)
    \right]
\end{aligned}
\end{equation} 
or, equivalently,
\begin{equation}
\begin{aligned}
     \rho(\Delta t_0,\delta_{\Delta t}) & = \int_{-\infty}^\infty d\omega \left|\tilde{\alpha} (\omega - \omega_0) \right|^2 \cdot \\
     &\quad \left[ 1 + e^{-\frac{\omega^2 \delta_{\Delta t}^2} {2} } \cos(\omega \Delta t_0)
    \right] \ket{\omega} \bra{\omega} .
\end{aligned}
\end{equation}
Now, compared with $p(\omega|\Delta t)$ in Eq.~\eqref{eq:thechannel}, the $\cos(\omega \Delta t_0)$ oscillation, which carries the information about $\Delta t_0$, is exponentially suppressed in the marginal distribution when $\delta_{\Delta t} \gtrsim 1/\omega_0$. Note that $1/\omega_0$ can be extremely tiny---as an example, $1/\omega_0 \sim 10^{-15} \, \mathrm{s}$ for visible light.

To quantify how robust our score-function-based algorithm is, we also compute the expectation value of $\cos(\nu_j \tau)$ when $\delta_{\Delta t}$ is taken into account, which is essentially the Fourier transform of the $p(\omega)$ function in Eq.~\eqref{eq:pomega_finite_source}:
\begin{equation}
    \begin{aligned}
         F_{\Delta t_0,\delta_{\Delta t}} (\tau) &= \frac{1}{\sqrt{2\pi}} \int_{-\infty}^\infty p(\omega) e^{i\omega \tau} d\omega \\
        &= \frac{t_c}{\sqrt{2}\pi} \int_{-\infty}^\infty  e^{-t_c^2 (\omega-\omega_0)^2} \cdot \\ 
        &\quad \left[ 1 +  e^{-\frac{\omega^2 \delta_{\Delta t}^2} {2} } \cos(\omega \Delta t_0)\right]  e^{i\omega \tau} d\omega\\
       &= \frac{e^{i\omega_0 \tau}  e^{-\frac{\tau^2}{4 t_c^2}  } }{\sqrt{2\pi}} +\frac{t_c}{2\sqrt{2}\pi} \cdot \\
       &\quad \int_{-\infty}^\infty \left[ e^{-(\omega - \omega_0)^2 t_c^2 + i\omega (\tau+\Delta t_0) -\frac{\omega^2 \delta_{\Delta t}^2} {2} } \right. \\
       & \quad +\left. e^{-(\omega - \omega_0)^2 t_c^2 + i\omega (\tau-\Delta t_0) -\frac{\omega^2 \delta_{\Delta t}^2} {2} } \right] d\omega.
    \end{aligned}
\end{equation}
Similar to Eqs.~(\ref{eq:Ftau_certain}, \ref{eq:Ftau_certain_q}), only the third term in $F_{\Delta t_0,\delta_{\Delta t}}$ matters when $\tau\gg t_c$. Evaluating the integral in the third term yields
\begin{equation}
\begin{aligned}
    &\quad F_{\Delta t_0,\delta_{\Delta t}}(\tau) \\
    &\approx \frac{t_c}{2\sqrt{2} \pi} \sqrt{\frac{\pi}{t_c^2 + \delta_{\Delta t}^2 / 2} }  \\ 
    &\quad \exp\left[- \omega_0^2 t_c^2 + \frac{\left( 2\omega_0  t_c^2+ i(\tau \pm \Delta t_0) \right)^2 }{4 (t_c^2+\delta_{\Delta t}^2/2)}\right]\\
    &= \frac{1}{2\sqrt{2\pi}}\sqrt{\frac{1}{ 1  + \delta_{\Delta t^2} / (2t_c^2) } } \exp\left[ \frac{\imag \omega_0 (\tau-\Delta t_0)}{1+\delta_{\Delta t}^2 / (2t_c^2)} \right] \\
    &\quad \exp\left[\frac{-\omega_0^2 \delta_{\Delta t}^2 / 2 -(\tau-\Delta t_0)^2 / (4t_c^2) }{1+\delta_{\Delta t}^2 / (2t_c^2)}\right].
\end{aligned}
\end{equation}    
In the limit  $\delta_{\Delta t}\ll t_c$, we have
\begin{equation}
    F_{\Delta t_0,\delta_{\Delta t}}(\tau) \approx \exp(-\frac{\omega_0^2 \delta_{\Delta t}^2} {2}) \cdot F_{\Delta t_0}(\tau),
\end{equation}
an exponentially suppressed version of $F_{\Delta t_0}(\tau)$. Hence the expectation value of the score function, $\sqrt{2\pi}F_{\Delta t_0, \delta_{\Delta t}}$, is also exponentially suppressed with $\delta_{\Delta t}$.

\subsection{Noisy-signal performance}
\label{sec:noisysignal}
In this subsection, we analyze the performance of \cref{alg:alg1} in a realistic scenario in astronomical observations where we receive not only signal photons with information about $\Delta t$, but also noise photons. More specifically, suppose there is a signal-to-noise ratio (or signal-to-background ratio) $Q$ (with $0\leq Q\leq 1$) such that, among $n$ incident photons, only $nQ$ photons are samples of the $\rho(\Delta t)$ state of our interest. This setting is relevant to stellar flare observation where $n_\mathrm{sig}\coloneqq nQ$ ``good" photons are signal photons from the flare region and in state $\rho(\Delta t,\delta_{\Delta t})$ with $\delta_{\Delta t}\ll 2\pi / \omega_0$; while $n_\mathrm{bg} \coloneqq n(1-Q)$ ``bad" photons are background photons from the M dwarf host and suffer from a severe finite-source effect with $\delta_{\Delta t} \gg 2\pi / \omega_0$. See \cref{sec:flare_setup} for a detailed calculation for $n_\mathrm{sig}$ and $n_\mathrm{bg}$.

We can now analyze the expectation value of the score function when the photons are from the above flare scenario. Due to the finite-source effect, all ``bad" photons give almost zero contribution to the expectation value, and the separation between $|\tau -\Delta t|<t_c$ and $|\tau - \Delta t|\geq t_c$ is created only by the $nQ$ ``good" photons. Therefore,
\begin{equation}
\begin{aligned}
    &\quad \E[f(\tau,\nu_{1},\dots,\nu_{n})] \\
    &\approx \begin{cases}
         \frac{1}{2} n Q\cos(\omega_0 (\Delta t-\tau)),&|\tau-\Delta t|< t_c  \\
         0,&|\tau-\Delta t| \geq t_c. 
    \end{cases}
\end{aligned}
\end{equation}
We can now analyze the number of photons needed to achieve the same precision and confidence as in the noiseless scenario $Q=1$. Assuming $Q$ is known, we can set the threshold to be $nQ/4$ rather than $n/4$. Now, Hoeffding's inequality implies that
\begin{equation}
    \Pr[f(\tau, \nu_1,\dots,\nu_n) \geq \frac{nQ}{4}]   \leq e^{-\frac{n Q^2}{32}}.
\end{equation}
Therefore, we need 
\begin{equation}
\begin{aligned}
    n &\geq \frac{32}{Q^2} \left[\ln(T/t_c) + \ln(0.005) \right] / Q^2 \\
    &= \Theta\left(\frac{\log (T/t_c)}{Q^2}\right)
\end{aligned}
\end{equation} photons in total, among which $nQ = \Theta(\log (T/t_c)/Q)$ are ``good" photons. In conclusion, if the fraction of noise photons among all received photons is $1-Q$, then we need $1/Q$ times as many signal photons as in the noiseless case.

\subsection{Unequal magnification}
\label{sec:unequal}
In this subsection, we discuss how the fact that the two paths have different amplification affects the performance of our sample-efficient delay-finding scheme. In the classical picture, recall from Eq.~\eqref{eq:unequal_electricfield} that the superposition of two electric waves has $\sqrt{A_\pm}$ as coefficients. In the quantum picture, the increased light intensity leads to a higher number of photons received, and the wave function of each photon should be normalized. 
Therefore, the pure state of a given lensed photon is generalized from Eq.\ (\ref{eq:purestate}) to
\begin{equation}
\begin{aligned}
    \ket{\phi(t_0,\Delta t)} &= \frac{1}{\sqrt{A}}
   \int_{-\infty}^{\infty}\left[ \sqrt{A_+} \alpha(t-t_0) e^{-i\omega_0 t} + \right.\\
   &\quad \left. \sqrt{A_-} \alpha(t-t_0-\Delta t) e^{-i\omega_0 (t-\Delta t)} \right]\ket{t} dt.
\end{aligned}
\end{equation}
Following the same derivation as in \cref{sec:quantum_pic}, one can show that the classical communication channel now becomes
\begin{equation}
\label{eq:channel_with_magnification}
    p_A(\omega|\Delta t) = |\tilde{\alpha}(\omega-\omega_0)|^2 (1+ \gamma_A \cos(\omega \Delta t)), \end{equation}
where $\gamma_A = \sqrt{A^2-1}/A$, and the expectation value of the score function becomes
\begin{equation}
\label{eq:unequalmagscore}
\begin{aligned}
    &\quad \E[f(\tau,\nu_{1},\dots,\nu_{n})] \\
    &\approx \begin{cases}
         \frac{1}{2} n \gamma_A \cos(\omega_0 (\Delta t-\tau)),&|\tau-\Delta t|< t_c  \\
         0,&|\tau-\Delta t| \geq t_c,
    \end{cases}
\end{aligned}
\end{equation}
which is equivalent to having noisy photons with signal-to-noise ratio $\gamma_A$. For the case we focus on, $A=1.34$, we have $\gamma_A = 0.666$, meaning that the number of required photons only increases by a factor of $(1/0.666)^2 = 2.25$ due to the unequal magnification. Additionally, the effects of finite magnification $A$ and noisy photons with rate $Q$ can be combined such that the expectation value of the score function becomes $\frac{1}{2} n Q \gamma_A  \cos(\omega_0 (\Delta t-\tau))$ for $|\tau - \Delta t|<t_c$.

\subsection{Broadband input and coherence time}
\label{sec:broadband}
In the previous analysis, we assume that all photons are within the frequency range set by the bandwidth of the single-photon spectrometer. In other words, we assume all photons are Gaussian wave packets centered at $\omega_0$ with width $\sim 1/t_c$. However, as later discussed in \cref{sec:dwarfflare}, in a realistic setting, a typical photon source emits broadband light of bandwidth $\sim 10^{14}\,\mathrm{Hz}$, and we only have access to several hundred photons that come from this bandwidth. In this case, the carrier frequency $\omega_0$ can be drastically different for different photons, and we must consider the potential effect of these broadband input photons.

Fortunately, Eq.~\eqref{eq:rhoDeltaT} indicates that the $1+\cos(\omega \Delta t)$ modulation in the wave function of the incoming photon is independent of the carrier frequency $\omega_0$. Therefore, the density matrix of the broadband input photons can be written as the integration over $\rho(\Delta t)$ of all possible carrier frequencies, i.e.,
\begin{equation}
    \rho_\mathrm{B}(\Delta t) \approx \mathcal{A} \int_{\omega_\mathrm{L}}^{\omega_\mathrm{R}} p_\mathrm{B}(\omega)(1+\cos(\omega \Delta t)) \ket{\omega}\bra{\omega} d\omega
\end{equation}
where $\mathcal{A}$ is the normalization factor, $\omega_\mathrm{L},\omega_\mathrm{R}$ are the lower and upper limits of the passband, and $p_\mathrm{B}(\omega)$ is the envelope of the frequency distribution, which is usually the black-body spectrum and can be approximately considered as a constant. With the constant $p_\mathrm{B}$ assumption, the classical communication channel can be written as
\begin{equation}
    p(\omega|\Delta t) \approx \frac{1+\cos(\omega \Delta t)}{\omega_\mathrm{R}-\omega_\mathrm{L}}.
\end{equation}
Now, we observe that inferring $\Delta t$ using photons with carrier frequency sampled from a wide range is equivalent to using photons with a broadband wave packet profile. This implies that the \emph{coherence time} in our broadband time-delay estimation task is $\sim 1/(\omega_\mathrm{R}-\omega_\mathrm{L}) \sim 10^{-15}\,\mathrm{s}$, independent of the bandwidth of every single-photon spectrometer. Indeed, one can compute the expectation value of the score function with $p_\mathrm{B}$:
\begin{equation}
\begin{aligned}
    \E&[\cos(\nu_j \tau)] =\int_{\omega_\mathrm{L}}^{\omega_\mathrm{R}} \cos(\nu_j \tau)  p_\mathrm{B}(\omega|\Delta t)  d\nu_j\\
    &=\frac{\left[\sin(\omega(\tau-\Delta t)) \right]_{\omega_\mathrm{L}}^{\omega_\mathrm{R}}}{2(\omega_\mathrm{R} - \omega_\mathrm{L}) (\tau-\Delta t)} + O\left( ((\omega_\mathrm{R} - \omega_\mathrm{L}) \tau)^{-1}  \right),
\end{aligned}
\end{equation}
whose absolute value is close to $0$ when $(\omega_\mathrm{R} - \omega_\mathrm{L}) |\tau - \Delta t| \gg 1$ and is close to $1/2$ when $(\omega_\mathrm{R} - \omega_\mathrm{L}) |\tau - \Delta t| \leq  1$. Therefore, the precision of $\Delta t$ estimation is set by the coherence time $t_c \sim 1/(\omega_\mathrm{R}-\omega_\mathrm{L}) \sim 10^{-15}\,\mathrm{s}$. Henceforth in this paper, when considering the broadband input scenario, $t_c$ is the coherence time determined by the signal bandwidth. Since the former is not tunable for a given passband, the number of required photons in the broadband case (which scales with $\log(T/t_c)$) is solely determined by $T$, the upper limit of $\Delta t$.

\section{Sample complexity lower bound}
\label{sec:lower_bound}
In this section, we prove that $\Omega(\log (T/t_c))$ photons are needed to estimate $\Delta t$ with $t_c$ precision. In other words, no strategy can outperform our \cref{alg:alg1} in terms of sample complexity. We present one rigorous proof based on channel capacity in this section. Note that we will later discuss the connection between the discretized version of the delay-finding problem and the dihedral hidden subgroup problem in \cref{sec:dhsp_discussion}, which gives another proof of the lower bound.

If we consider the gravitational lens as a quantum communication channel in which $\Delta t$ is encoded as $\rho(\Delta t)$, then the \emph{Holevo capacity} quantifies the number of bits encoded in a single copy of the state. Let $p_\mathrm{prior}(\Delta t)$ be the (prior) probability that the lensing time delay is $\Delta t$; then the Holevo capacity is
\begin{equation}
\begin{aligned}
    \chi &= S\left(\int_{-\infty}^\infty p_\mathrm{prior}(\Delta t) \rho(\Delta t) d\Delta t\right) \\
    &\quad  -\int_{-\infty}^\infty p_\mathrm{prior}(\Delta t) S(\rho(\Delta t))d\Delta t \\
    &=: S_\mathrm{left}  - S_\mathrm{right},
\end{aligned}
\end{equation}
where $S(\rho) = -\Tr[\rho \ln\rho]$ is the von Neumann entropy. However, according to Eq.~\eqref{eq:rhoDeltaT}, for all $\Delta t$, the mixed state $\rho(\Delta t)$ can be diagonalized by the Fourier transform if the time window for the photon to arrive is infinitely large. Note that the actual $\rho(\Delta t)$ with finite time window length $T_\mathrm{w}$ can be obtained by truncating $\rho_\mathrm{diag}(\Delta t)$ at $t\in[0,T_\mathrm{w}]$. This implies that the information stored in $\rho(\Delta t)$ is upper bounded by that in the perfectly diagonalized $\rho_\mathrm{diag}(\Delta t)$. In other words, the sample complexity proved using $\rho_\mathrm{diag}(\Delta t)$ is a \emph{lower bound} on the actual sample complexity. Since we aim to prove a lower bound in this section, we simply assume $\rho(\Delta t) = \rho_\mathrm{diag}(\Delta t)$ here, which means that the communication channel is essentially classical, i.e.,
\begin{equation}
\begin{aligned}
    \quad \rho(\Delta t) &= \sqrt{\frac{t_c^2}{\pi}} \int_{-\infty}^\infty   e^{-t_c^2 (\omega-\omega_0)^2} \cdot\\
    &\quad  (1+\cos(\omega \Delta t)) \ket{\omega}\bra{\omega} d\omega
\end{aligned}
\end{equation}
and
\begin{equation}
\begin{aligned}
    p(\omega|\Delta t) &= \sqrt{\frac{t_c^2}{\pi}}   e^{-t_c^2 (\omega-\omega_0)^2} (1+\cos(\omega \Delta t)).
\end{aligned}
\end{equation}

For classical channels, Holevo capacity reduces to classical mutual information, where the von Neumann entropy ($S(\rho)$ for density operator $\rho$) is replaced by Shannon entropy ($S(p)$ for the pdf $p$ corresponding to $\rho$). Note that, since $\Delta t$ and $\nu$ are continuous variables, the Shannon entropy should be replaced by the differential entropy, i.e., for pdf $p(x)$,
\begin{equation}
    S(p(x)) = -\int_{-\infty}^\infty p(x) \ln(p(x))dx.
\end{equation}
Now, to compute the left-hand side of the Holevo capacity / mutual information, we first evaluate the $\Delta t$-averaged density operator in the Fourier basis. Since our prior knowledge about $\Delta t$ is that it may be any value between $0$ and $T$, we can simply set $p_\mathrm{prior}(\Delta t) = 1/T$. Therefore,
\begin{equation}
\begin{aligned}
    &\quad \int_{-\infty}^\infty p_\mathrm{prior}(\Delta t) \rho(\Delta t) d\Delta t\\
    &= \frac{\sqrt{t_c^2}}{T\sqrt{\pi}} \int_{-\infty}^\infty e^{- t_c^2 (\omega_0-\nu)^2} d\nu \cdot  \\
    &\quad \int_0^T (1+\cos(\nu\Delta t)) \ket{\nu}\bra{\nu} d\Delta t\\
    &= \int_{-\infty}^\infty \frac{t_c}{\sqrt{\pi}} e^{-t_c^2 (\omega_0-\nu)^2} \ket{\nu}\bra{\nu} d\nu.
\end{aligned}
\end{equation}
Hence, the corresponding pdf is $p(\nu) = \frac{t_c}{\sqrt{\pi}} e^{- t_c^2 (\omega_0-\nu)^2}$. To avoid putting dimensional quantities into the logarithmic function, we also use its alternative form $p(T\nu) = e^{- t_c^2 (\omega_0-\nu)^2} \cdot t_c / (\sqrt{\pi} T)$, where the continuous variable is changed to $T\nu$. Now,
    \begin{equation}
\begin{aligned}
    S_\mathrm{left} &\coloneqq S\left(\int_{-\infty}^\infty p_\mathrm{prior}(\Delta t) \rho(\Delta t) d\Delta t\right) \\
    &= -\int_{-\infty}^\infty \frac{t_c}{\sqrt{\pi} T} e^{-t_c^2 (\omega_0-\nu)^2} \cdot \\
    &\quad \ln(\frac{t_c}{\sqrt{\pi} T} e^{- t_c^2 (\omega_0-\nu)^2}) d(T\nu).
\end{aligned}
\end{equation}
Next, we compute the second term in the Holevo capacity. Similarly, we also use $p(T\nu|\Delta t) = e^{-4\pi^2 t_c^2(\omega_0-\nu)^2} (1+\cos(\nu \Delta t)) \cdot t_c/ (\sqrt{\pi}T) $:
    \begin{equation}
\begin{aligned}
    S_\mathrm{right} &\coloneqq \int_{-\infty}^\infty p_\mathrm{prior}(\Delta t) S(\rho(\Delta t))d\Delta t \\
    &= \int_{-\infty}^\infty \int_{-\infty}^\infty -\frac{ t_c}{\sqrt{\pi}T^2} e^{- t_c^2(\omega_0-\nu)^2}\cdot \\ 
    &\quad \ln(\frac{ t_c}{\sqrt{\pi}T} e^{- t_c^2(\omega_0-\nu)^2} (1+\cos(\nu \Delta t))) \cdot \\
    &\quad (1+\cos(\nu \Delta t)) \cdot d\Delta t \cdot d(T\nu).
\end{aligned}
\end{equation}
Using $\ln(AB) = \ln A + \ln B$ and the fact that $\int_{0}^T\cos(\nu \Delta t) d\Delta t$ is much less than $T$ when $T\gg 2\pi/\nu$, the above integration can be simplified as
\begin{equation}
\begin{aligned}
     S_\mathrm{right}
    &\approx S_\mathrm{left} - \frac{1}{T} \int_0^T d\Delta t \int_{-\infty}^\infty d\nu  \frac{t_c}{\sqrt{\pi}} e^{- t_c^2 (\omega_0-\nu)^2} \cdot \\
    &\quad \quad \quad \quad \quad \left[ \ln(1+\cos(\nu \Delta t)) \right.\\
    &\quad \quad \quad \quad \quad \quad \left.+ \cos(\nu \Delta t) \ln(1+\cos(\nu \Delta t)) \right]  \\
    &=\,S_\mathrm{left} - \frac{2}{T} \int_{-\infty}^\infty d\nu \frac{t_c}{\sqrt{\pi}}  e^{- t_c^2 (\omega_0-\nu)^2} \cdot  \\
    &\quad \int_0^T d\Delta t \, \left[ \ln(\left|\cos(\frac{\nu \Delta t}{2})\right|) \right. \\
    &\quad\quad  \quad \left.+ \cos(\nu \Delta t) \ln(\left|\cos(\frac{\nu \Delta t}{2})\right|) + \frac{\ln 2}{2} \right] .
\end{aligned}
\end{equation}
We evaluate the $\Delta t$-integrations by first considering the integral over a period:
\begin{equation}
\begin{aligned}
    \int_{0}^{2\pi} \ln(\left|\cos x\right|) dx &= 4\int_{0}^{\pi/2} \ln(\cos x) dx \\
    &= -2\pi \ln 2,
\end{aligned}
\end{equation}
and
\begin{equation}
\begin{aligned}
    &\quad \int_{0}^{2\pi} \cos(2x) \ln(\left|\cos(x)\right|)) dx  \\
    &= 8\int_{0}^{\pi/2} \cos(2x) \ln(\cos x) dx.
\end{aligned}
\end{equation}
To evaluate the second integral above, we use the Fourier series of $\ln(\cos x)$:
\begin{equation}
    \ln (\cos x) = \sum_{k=1}^\infty (-1)^{k+1} \frac{\cos(2kx)}{k} + \ln 2.
\end{equation}
Using
\begin{equation}
\begin{aligned}
    \quad \int_{0}^{\pi/2} \cos(2x) \cos(2kx) dx
    = \frac{\pi \delta_{k,1}}{4},
\end{aligned}
\end{equation}
we find that
\begin{equation}
    \int_{0}^{2\pi} \cos(2x) \ln(\left|\cos(x)\right|) dx = 2\pi.
\end{equation}
We can now compute the original integrals by a change of variables $\nu \Delta t/2 \mapsto x$:
\begin{equation}
\begin{aligned}
    &\quad \int_0^T  \ln(\left|\cos(\frac{\nu \Delta t}{2})\right|) d\Delta t \\
    &= \int_{0}^{\nu T/2} \ln(\left|\cos x\right|) \frac{2}{\nu} dx\\
    &\approx \frac{2}{\nu} \cdot \frac{\nu T / 2}{2\pi} \cdot (-2\pi \ln 2) = -T\ln 2
\end{aligned} 
\end{equation}
and
\begin{equation}
\begin{aligned}
    &\quad \int_0^T \cos(\nu \Delta t) \ln(\left|\cos(\frac{\nu \Delta t}{2})\right|) \\
    &= \int_{0}^{\nu T / 2}\cos(2x) \ln(\left|\cos(x)\right|) \frac{2}{\nu} dx\\
    &\approx  \frac{2}{\nu} \cdot \frac{\nu T /2}{2\pi} \cdot 2\pi = T.
\end{aligned} 
\end{equation}
Therefore,
\begin{equation}
\begin{aligned}
     S_\mathrm{right} & = S_\mathrm{left} - \frac{2}{T} \int_{-\infty}^\infty \frac{t_c}{\sqrt{\pi}} e^{-t_c^2 (\omega-\nu)^2} \cdot \\
     & \quad\quad\quad\quad\quad \quad\quad \left( T - \frac{T\ln 2}{2} \right) d\nu \\
    &= S_\mathrm{left} - (2-\ln 2),
\end{aligned}
\end{equation}
and the Holevo capacity/mutual information is
\begin{equation}
    \chi = S_\mathrm{left} - S_\mathrm{right} = 2-\ln 2,
\end{equation}
a constant value. This implies that one photon in state $\rho(\Delta t)$ encodes up to a constant number of bits of $\Delta t$. Therefore, if one wishes to determine $\Delta t$ with precision $t_c$ in the range of $[0,T]$, the total number of bits needed is $\log(T/t_c)$, hence the optimal sample complexity is $\Omega(\log (T/t_c))$. 

Finally, as a remark, we emphasize the importance that we work in the photon-starved regime such that we receive photons one by one. Consider the scenario where photons appear in pairs such that the mode whose shape is given in Eq.\ (\ref{eq:purestate}) is occupied not by one photon but by two photons. 
In this case, if we simply measure each photon in the time basis, then with $1/2$ probability the two outcomes will be separated by $\Delta t$ with error $t_c$, hence the sample complexity is only $O(1)$, rather than $\Omega(\log(T/t_c))$. However, this scenario is irrelevant to the case of our interest for two reasons. First, as explained later in \cref{sec:dwarfflare}, in a fiducial use case, we only expect to obtain several hundreds of photons in the wide spectrum from optical to near-IR bands in a 1-minute time interval using a state-of-the-art ground-based telescope. Therefore, the number of photons per mode is extremely close to 0, and the probability of obtaining a pair of photons in the same state is even lower. Second, to enable the above constant sample complexity measurement, one must be able to recognize which pair of photons corresponds to two photons emitted into the same mode, as opposed to two photons independently emitted within $\Delta t$. We are unlikely to be able to recognize this unless 
we have a variable source with timescale of variability much shorter than $\Delta t$, which is a property generally associated with strong lensing rather than microlensing. Indeed, when the variability is much faster than $\Delta t$, one can use the classical approach (comparing two arrival times of the same explosion that happened at the source) to measure the time delay, and this approach takes only a constant number of photons in principle, but does not apply to the most general microlensing scenario without a variable source.

\section{Quantum undersampling}
\label{sec:quantum_solution}
Recall that our sample-efficient \cref{alg:alg1} takes as input the frequency-basis measurement outcome of each incident photon with precision $\sim 1/\Delta t$. This requires using a high-resolution spectrometer 
with single-photon sensitivity. Brown-dwarf-mass lenses have $\Delta t \sim 1\,\mathrm{ns}$ corresponding to GHz-level resolution, which is potentially feasible with existing devices such as dual-comb spectrometers, as discussed in \cref{sec:review_sps}. However, for $\Delta t\sim 1\,\mathrm{ms}$, the kHz-level resolution in the optical domain is extremely demanding partially due to the typically limited bandwidth for high-resolution devices (see \cref{sec:review_sps} for a detailed discussion). Direct measurement of single-photon frequency requires the spectrometer to distinguish $\sim \frac{\omega_\mathrm{max}}{1/T} = T\omega_\mathrm{max}$ modes, where $\omega_\mathrm{max}$ is the upper limit of carrier frequency allowed. Given how challenging it is to directly realize such a spectrometer, we would like to explore in this section other, indirect, ways of realizing it.

However, suppose all wave packets are of $1/t'_c$ width (with $t'_c \gg t_c$) in the frequency domain, it suffices to only distinguish the $\sim \frac{1/t'_c}{1/T} = T/t'_c$ frequency modes within the wave packet if one can localize the photon's frequency to a range of $1/t'_c$ width. This localization process can be thought of as a non-demolition measurement of photon frequency with $1/t'_c$ resolution. One possible way to realize such a measurement is via a two-step process, which first splits photons into ports with $1/t'_c$ frequency range (using e.g.\ a diffraction grating) and then uses non-demolition photon detectors to determine which port the photon is in. Once the photon is localized to a frequency range with $1/t'_c$ resolution, it can be sent to a single-photon spectrometer with $1/t'_c$ bandwidth and $1/T$ resolution. One spectrometer suffices provided that a coherent frequency converter is used or the spectrometer has tunable frequency range.
Alternatively, one can use 
$\sim \frac{\omega_\mathrm{max}}{1/t'_c} = \omega_\mathrm{max} t'_c$ spectrometers.  
Combining the photon splitting device with spectrometry still distinguishes $\sim T\omega_\mathrm{max}$ modes, although the implementation is potentially less challenging than direct frequency measurement with a single broadband high-resolution single-photon spectrometer.

In this section, we present another way---that doesn't use standard high-resolution spectrometers---of implementing the delay-finding procedure under the condition that we have already performed non-demolition measurement of the photon's frequency with $\sim 1/t'_c$ resolution. 
In particular, we propose to then use quantum information processing techniques to store the discretized and \emph{undersampled} wave function of a photon in the \emph{time} domain in a quantum memory (\cref{sec:discretize}) and perform the quantum Fourier transform on it (\cref{sec:qft}).

In addition, recall that our quantum-inspired data processing algorithm uses $O(T/t_c)$ classical computation, scaling exponentially with the number of photons. Although the overall cost of our delay estimation procedure is not sensitive to classical data processing, it is still valuable to understand whether the $O(T/t_c)$ algorithm is optimal. Interestingly, the time-domain undersampling approach described in this section allows us to formulate the discretized version of the delay-finding problem (\cref{prob:prob2}), which has a surprising connection to the famous dihedral hidden subgroup problem. Indeed, as presented in \cref{sec:dhsp_discussion}, we prove not only the optimal sample complexity, but also the computational hardness of the delay finding problem. The proof is based on reduction from the dihedral hidden subgroup problem, hence it is highly unlikely any data processing procedure can outperform ours.

\subsection{Discretization by undersampling}
\label{sec:discretize}
We call the approach described in this section \emph{quantum undersampling} because we store the photon's wave function in the time domain in a quantum computer, but we let our quantum memory distinguish only $O(T/t'_c)$ temporal modes, many fewer than $O(T\omega_0)$. This means that the $\tilde{\alpha}(\omega - \omega_0)$ envelope in the frequency domain cannot be faithfully recorded in our quantum memory, and will instead be \emph{aliased} into lower frequencies. To outline the idea, we first store a discretized version of the photonic state $\ket{\phi(t_0,\Delta t)}$ in the time domain using qubits, then perform the QFT to map the state to the (aliased) frequency domain, and run a slightly modified version of \cref{alg:alg1}.

Recall that the state of a lensed photon $\ket{\phi(t_0,\Delta t)}$ is a superposition over real numbers $t$. We divide the time domain into $n_s$ equal bins with length $\tau_s = T/n_s$. To obtain the discretized state, we simply discard bits of each $\ket{t}$ that are less significant than the information indicating which bin it belongs to. This leaves the system in a mixture over the following discretized states with different $\tau_0$ values:
\begin{equation}
\label{eq:state_discretized}
\begin{aligned}
    &\quad \ket{\phi_\mathrm{d}(\tau_0, t_0,\Delta t)}\\
    &\propto e^{-i\omega_0 (\tau_0 -t_0)} \sum_{j=0}^{n_s-1} e^{-\imag \omega_0 \tau_s j} \left[ \alpha(\tau_0 + \tau_s j-t_0) \right. \\
    &\quad \quad \quad \quad \quad \left. + \alpha(\tau_0 + \tau_s j-t_0 - \Delta t) e^{i\omega_0 \Delta t} \right] \ket{j}\\
\end{aligned}
\end{equation}
where 
$\tau_0\in[0,\tau_s]$ is the discarded information.

In order to record both width-$t'_c$ wave packets in the discretized state, we need $\tau_s \ll t'_c$. Since we also wish to reduce resource requirements by setting $\tau_s \gg \pi/\omega$, a reasonable choice is $\tau_s = t'_c / 10$. In addition, since the only operation we need to perform to obtain $\ket{\phi_\mathrm{d}(\tau_0,t_0,\Delta t)}$ from the actual photonic state is discarding partial information, the number of discretized states we can get is the same as the number of photons we can receive, in principle.

Taking into account the fact that both $\tau_0$ and $t_0$ are uniformly random, the actual state is a density operator defined by the pdf $p(\tau_0,t_0)$ and states $\ket{\phi_\mathrm{d}(\tau_0,t_0,\Delta t)}$:
\begin{equation}
\begin{aligned}
    &\rho_\mathrm{d}(\Delta t) = \int_{-\infty}^\infty \int_{-\infty}^\infty p(\tau_0,t_0) \cdot \\ 
    &\quad \quad \quad \ket{\phi_\mathrm{d}(\tau_0, t_0,\Delta t)} \bra{\phi_\mathrm{d}(\tau_0, t_0,\Delta t)} \, dt_0 \, d\tau_0.
\end{aligned}
\end{equation}
Our goal is now to learn $\Delta t$ from copies of $\rho_\mathrm{d}(\Delta t)$. With this, we can formulate another problem as follows.
\begin{problem}[Discretized delay finding]
\label{prob:prob2}
    Learn $\Delta t$ with error up to $t_c$ from as few copies of $\rho_\mathrm{d}(\Delta t)$ as possible.
\end{problem}
In fact, $\rho_\mathrm{d}(\Delta t)$ can be produced from $\rho(\Delta t)$ by discarding unnecessary information, which means that if one can solve \cref{prob:prob2}, then one can also solve \cref{prob:prob1}. In other words, \cref{prob:prob2} is at least as hard as \cref{prob:prob1}.

\subsection{Quantum Fourier transform}
\label{sec:qft}
With the photonic state stored in the digital quantum computer, we would like to read it in the frequency basis, just as in \cref{alg:alg1}. To do so with qubits, we need to perform quantum Fourier transform to the $\phi_\mathrm{d}(\tau_0,t_0,\Delta t)$ state. 
Note that if we use an array of $n_s$ real values to store $\phi_\mathrm{d}(\tau_0, t_0, \Delta t)$, then the QFT of the state corresponds to the discrete Fourier transform (DFT) of the array, hence we can employ results from classical signal processing to analyze the output state.
We first notice that the carrier-wave oscillation $\omega_0/(2\pi)$ is fast compared to the sampling rate $n_s/T$, so
\begin{equation}
\begin{aligned}
    &\quad \exp(-\imag \omega_0  \tau_s j ) \\
    &= \exp(-2\pi  \imag j \frac{ \omega_0/(2\pi) }{n_s/T})\\
    &= \exp(-2\pi  \imag j \frac{\left(\omega_0/(2\pi)\right) \bmod (n_s/T) }{n_s/T})\\
    &= \exp(-2\pi \imag j \frac{f_\mathrm{alias} T}{n_s}),
\end{aligned}
\end{equation}
where the aliased frequency is $f_\mathrm{alias}\coloneqq\left(\omega_0 /(2\pi)\right) \bmod (n_s/T)$.

To simplify the presentation, we let $\alpha[j]$ denote the list of $\alpha(\tau_s j)$, and use $\tilde{\alpha}[k]$ to denote the DFT of $\alpha[j]$. Similarly, $\alpha[j+(\tau_0 -t_0)/\tau_s] = \alpha(\tau_0 + \tau_s j - t_0 )$ and $\alpha[j+(\tau_0 -t_0 - \Delta t)/\tau_s] = \alpha(\tau_0 + \tau_s j - t_0-\Delta t)$. Now, we can evaluate the QFT by evaluating the DFT of $e^{-2\pi \imag j \frac{f_\mathrm{alias} T}{n_s}} \alpha[j+(\tau_0 -t_0)/\tau_s] $ and of $e^{-2\pi \imag j \frac{f_\mathrm{alias} T}{n_s}} \alpha[j+(\tau_0 -t_0 - \Delta t)/\tau_s] $.

We use the time-shift property of the DFT to derive that
\begin{equation}
    \mathrm{DFT}\left(\alpha[j+(\tau_0 -t_0)/\tau_s]\right) = \hat{\alpha}[k] e^{2\pi \imag (\tau_0 -t_0) k / T}
\end{equation}
and
\begin{equation}
\begin{aligned}
    &\quad \mathrm{DFT}\left(\alpha[j+(\tau_0 -t_0-\Delta t)/\tau_s]\right) \\
    &=\hat{\alpha}[k] e^{2\pi \imag (\tau_0 -t_0-\Delta t) k / T}.
\end{aligned}
\end{equation}
Next, for any sequence $\beta[j]$, the frequency-shift property of the DFT implies that
\begin{equation}
    \mathrm{DFT}\left( e^{-2\pi \imag j \frac{f_\mathrm{alias} T}{n_s}} \beta[j] \right) = \hat{\beta}[k + f_\mathrm{alias} T].
\end{equation}
Let $\beta[j]$ be $\alpha[j+(\tau_0 -t_0)/\tau_s]$ and $\alpha[j+(\tau_0 -t_0 -\Delta t)/\tau_s]$, respectively. Then
\begin{equation}
\begin{aligned}
   &\quad \mathrm{DFT}( e^{-2\pi \imag j \frac{f_\mathrm{alias} T}{n_s}}  \alpha[j+(\tau_0 -t_0)/\tau_s] )\\
    &= \hat{\alpha}[k+f_\mathrm{alias} T] e^{2\pi \imag (\tau_0 -t_0) (k+f_\mathrm{alias} T) / T}
\end{aligned}
\end{equation}
and
\begin{equation}
\begin{aligned}
    &\quad \mathrm{DFT}( e^{-2\pi \imag j \frac{f_\mathrm{alias} T}{n_s}}  \alpha[j+(\tau_0 -t_0 - \Delta t)/\tau_s] ) \\
    &= \hat{\alpha}[k+f_\mathrm{alias} T] e^{2\pi \imag (\tau_0 -t_0 -\Delta t) (k+f_\mathrm{alias} T) / T}.
\end{aligned}
\end{equation}
Now, we can write down the DFT of the array representing the state:
    \begin{equation}
\begin{aligned}
    &\quad \mathrm{DFT}\left(e^{-2\pi \imag j \frac{f_\mathrm{alias} T}{n_s}} \alpha[j+(\tau_0 -t_0)/\tau_s] \right.\\
    &\quad \left.+ e^{\imag \omega \Delta t} e^{-2\pi \imag j \frac{f_\mathrm{alias} T}{n_s}} \alpha[j+(\tau_0 -t_0 - \Delta t)/\tau_s]\right) \\
    &= \hat{\alpha}[k+f_\mathrm{alias} T] e^{2\pi \imag (\tau_0 -t_0 ) (k+f_\mathrm{alias} T) / T} \\
    &\quad \left(1+e^{\imag\omega_0 \Delta t} e^{-2\pi\imag \Delta t (k+f_\mathrm{alias}T)/T}\right).
\end{aligned}
\end{equation}
In other words,
\begin{equation}
\begin{aligned}
&\mathrm{QFT} \ket{\phi_\mathrm{d}(\tau_0, t_0, \Delta t)} \propto e^{2\pi \imag (\tau_0 -t_0 ) (k+f_\mathrm{alias} T) / T} \cdot\\
&\sum_{k=0}^{n_s - 1}  \hat{\alpha}[k+f_\mathrm{alias} T]  \left(1+e^{\imag\Delta t(\omega_0 -2\pi f_\mathrm{alias} - 2\pi k/T}\right) \ket{k}.
\end{aligned}
\end{equation}
Note that here $\hat{\alpha}[k + f_\mathrm{alias}T]$ is still a Gaussian but centered around the alias frequency. The oscillatory feature is now $e^{-2\pi k \imag \Delta t/T }$ rather than $e^{-\imag \omega_0 \Delta t}$ in the continous case, and there is a constant phase factor $e^{\imag \Delta t (\omega_0 -2\pi f_\mathrm{alias})}$ due to the carrier frequency and its alias. Also, since both $\tau_0$ and $t_0$ contribute to the global phase only, QFT of the density operator (which takes into account the distribution of $t_0$ and $\tau_0$) should have the same distribution in the $k$-basis as any $\mathrm{QFT} \ket{\alpha_\mathrm{d}(\tau_0, t_0, \Delta t)}$, which is
\begin{equation}
\begin{aligned}
    p_\mathrm{d}&(k|\Delta t) \propto \left| \hat{\alpha}[k+f_\mathrm{alias} T] \right|^2 \cdot \\ 
    &\cdot \left( 1+\cos(\Delta t(\omega_0 -2\pi f_\mathrm{alias} - 2\pi k/T)) \right). 
\end{aligned}
\end{equation}
Finally, we realize that \cref{alg:alg1} needs to be adapted to the discretized and undersampled scenario. Note that, for the $j$th photon, we can not only measure the (integer) frequency $k_j$, but also obtain the carrier frequency of the photon's wave packet, because every photon detector has a filter and we know which detector has a click. Therefore, we use $\omega_j$ to represent the carrier frequency of the $j$th photon, rather than the same $\omega_0$. We can also compute $f_{\mathrm{alias},j}$ because it is a function of $\omega_j, n_s, T$.
Now, with measurement outcomes $k_1,k_2,\dots,k_n$, the score function in the discretized scenario is
\begin{equation}
\begin{aligned}
    &\quad f_{\mathrm{d}}(\tau,k_1,k_2,\dots,k_n, \omega_1,\omega_2,\dots,\omega_n) \\
    &= \sum_{j=1}^n \cos(\tau (\omega_j-2\pi f_{\mathrm{alias},j} - 2\pi k_j/T)).
\end{aligned}
\end{equation}
We can also formally write down the new algorithm, which solves \cref{prob:prob2}.
\begin{algorithm}[Sample-efficient delay finding by quantum undersampling.]
\label{alg:alg2}
    Step (i): measure the undersampled wave function of $n$ incident photons in the frequency basis to obtain $k_1,k_2,\dots,k_n$ and $\omega_1,\omega_2,\dots,\omega_n$. Step (ii): evaluate $f_{\mathrm{d}}(\tau,k_1,k_2,\dots,k_n, \omega_1,\omega_2,\dots,\omega_n)$ for all $ 10 T/t_c$ candidate $\tau$s. Step (iii): accept any $\tau$ with $f_{\mathrm{d}}(\tau,k_1,k_2,\dots,k_n, \omega_1,\omega_2,\dots,\omega_n) \geq n/4$ as an estimation of the gravitational lensing time delay.
\end{algorithm}
Since $\mathrm{QFT} \ket{\phi_\mathrm{d}(\tau_0, t_0, \Delta t)}$ is simply the discretized and undersampled version of the continuous-variable state, and $f_\mathrm{d}$ is simply the discretized and aliased version of the continuous-variable score function, the correctness and sample complexity results of \cref{alg:alg2} simply follow from the proof for \cref{alg:alg1} in \cref{sec:ouralgo}.

\subsection{Connection to the dihedral hidden subgroup problem}
\label{sec:dhsp_discussion}
The dihedral hidden subgroup problem (DHSP) is a well-studied, notoriously hard problem in 
quantum computing. In this subsection, we observe that the DHSP 
can be reduced to 
\cref{prob:prob2}. In other words, the lensing delay-finding problem is at least as hard as the DHSP. This implies a computational hardness result for our lensing delay-finding problem under a widely held cryptographic assumption.

It is well known that the DHSP reduces to a quantum state learning problem called the \emph{dihedral coset problem} (DCP).
(See Ref.~\cite{childs2010quantum} for a comprehensive introduction to the DHSP and the DCP.) 
In the DCP of size $N$, we are given multiple samples of a quantum \emph{coset state}
\begin{equation}
    \frac{1}{\sqrt{2}}\left(\ket{0,a}+\ket{1,a+l}\right),
\end{equation}
where $a\in \{0,1,\dots,N-1\}$ is uniformly random for each sample, and $l\in\{0,1,\dots,N-1\}$ is a fixed unknown parameter. The goal of the DCP is to find the value of $l$ using the given states.  
The sample complexity of the DCP is the number of states required to determine $l$ with bounded error, and the computational complexity is the amount of computation needed to process the states and learn $l$. Both complexities are typically analyzed in terms of their asymptotic scaling with $N$. An early result of Ettinger and H{\o}yer showed that the sample complexity of DCP is $O(\log N)$ \cite{EH00} (indeed, an analysis of the optimal recovery procedure shows the sample complexity is $\Theta(\log N)$ \cite{bacon2005optimal}). However, no known quantum algorithm for the DCP is efficient (i.e., runs in time $\poly(\log N)$), and indeed, there is no known efficient classical or quantum algorithm for the DHSP. Indeed, the belief that no such algorithm exists underlies the presumed security of lattice-based public-key cryptography \cite{regev2004quantum}.

We show that the delay-finding problem is at least as hard as the DCP.

\begin{theorem}
    There is an efficient quantum reduction from the DCP to \cref{prob:prob2}.
\end{theorem}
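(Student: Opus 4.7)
The plan is to give an efficient quantum preprocessing that, per DCP coset sample, produces one copy of $\rho_\mathrm{d}(\Delta t)$ for some $\Delta t$ that invertibly encodes the hidden DCP parameter $l$; calling a solver for \cref{prob:prob2} on $O(\log N)$ such copies then recovers $l$. The driving observation is that, once the dihedral space $\mathbb{Z}_2\times\mathbb{Z}_N$ is embedded into a single cyclic register by an appropriate bijection, a DCP coset state is exactly the two-peak superposition underlying the discretized state $\ket{\phi_\mathrm{d}}$ of Eq.~\eqref{eq:state_discretized}, up to the Gaussian wave-packet smearing of each peak.

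First I would fix $n_s=2cN$ with $c$ a small constant satisfying $c\tau_s>t_c$, and embed via the bijection $(b,x)\mapsto c(2x+b)$. Applied to a coset sample $\tfrac{1}{\sqrt 2}(\ket{0,a}+\ket{1,a+l})$, this unitary relabeling yields $\tfrac{1}{\sqrt 2}(\ket{2ca} + \ket{c(2(a+l\bmod N)+1)})$ in $\mathbb{Z}_{n_s}$. A case check on whether $a+l$ wraps around $N$ shows that the signed separation between the two peaks is always $c(2l+1)\pmod{n_s}$, independent of the uniformly random $a$ --- this uniform-separation property is the reason to prefer this bijection to the more obvious $(b,x)\mapsto bN+x$, under which the separation would depend on the overflow of $a+l$ and thereby leak $a$ into the oracle input.

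Next I would convert this two-spike state into $\rho_\mathrm{d}(\Delta t)$ with $\Delta t=c(2l+1)\tau_s$. Expanding Eq.~\eqref{eq:state_discretized} at the two peak positions in the $\alpha\to\delta$ limit shows that the global $e^{-i\omega_0\tau_s j}$ factors and the relative $e^{i\omega_0\Delta t}$ factor on the second peak conspire to leave the two coefficients exactly in phase, so no explicit phase manipulation is needed and the bijected state already matches the delta-wave-packet version of $\ket{\phi_\mathrm{d}}$. To reproduce the finite width $t_c$ appearing in \cref{prob:prob2}, I would prepare an auxiliary register in the Gaussian $\ket{G}=\sum_y \alpha(\tau_s y)\ket{y}$, apply the adder $\ket{x}_R\ket{y}_A\mapsto\ket{x}_R\ket{x+y}_A$, and approximately uncompute the original peak label by a nearest-cluster classifier --- since the two bumps are $c(2l+1)\gg t_c/\tau_s$ apart, the classifier correctly recovers the peak from the ancilla with error exponentially small in $c$. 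Averaged over the random coset representative, the ancilla then holds $\rho_\mathrm{d}(\Delta t)$ to $1/\poly(N)$ trace distance; handing $O(\log N)$ such copies to the Problem 2 oracle returns $\Delta t$ to precision $t_c$, which is below the $2c\tau_s$ spacing between distinct values, so $l$ is uniquely decoded.

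The hardest part will be the wave-packet dressing: the coherent adder leaves the peak register entangled with the Gaussian ancilla, and a naive partial trace would collapse the intended pure two-Gaussian superposition into an incoherent shift mixture of delta spikes, which need not be of the form of $\rho_\mathrm{d}(\Delta t)$. The nearest-cluster uncomputation sketched above works only because of the $c$-fold scale-up separating the two bumps, and rigor requires bounding the classifier's failure probability tightly enough that it does not accumulate, over $O(\log N)$ oracle queries, into total-variation error that the Problem 2 solver can exploit. If that quantitative estimate becomes delicate, a cleaner fallback is to observe that the score-function/Hoeffding analysis of \cref{alg:alg2} in \cref{sec:ouralgo} uses only the separation of the two bumps and not the detailed Gaussian shape, so the reduction already goes through directly to the $\alpha\to\delta$ instance of \cref{prob:prob2} and the dressing step can be skipped entirely.
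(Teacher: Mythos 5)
Your reduction goes in the right direction (DCP $\to$ \cref{prob:prob2}) and is essentially sound, but it takes a genuinely different route from the paper's. The paper disposes of the $\mathbb{Z}_2$ register by measuring the first qubit of the coset state in the $X$ basis and postselecting on the $+1$ outcome, which succeeds with probability $1/2$ and leaves $\tfrac{1}{\sqrt 2}(\ket{a}+\ket{a+l})$ directly on the cyclic register; it then identifies this with $\ket{\phi_\mathrm{d}}$ under the parameter choice $n_s=T/t_c$, $t_0/t_c\mapsto a$, $\Delta t/t_c\mapsto l$, waving at the carrier phase by taking $\omega$ ``sufficiently slow.'' You instead keep the coset bit coherently and fold it into the position register via the interleaving bijection $(b,x)\mapsto c(2x+b)$, which buys you two things: no postselection loss, and a separation $c(2l+1)\bmod n_s$ that is provably independent of the random representative $a$. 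Your phase computation is also cleaner than the paper's: you correctly observe that the $e^{-\imag\omega_0\tau_s j}$ prefactors and the $e^{\imag\omega_0\Delta t}$ factor in Eq.~\eqref{eq:state_discretized} cancel exactly between the two peaks, so no assumption on $\omega_0$ is actually needed. The one place your main route is fragile is the Gaussian-dressing step: the adder--plus--nearest-cluster-classifier uncomputation only has exponentially small error when $c(2l+1)\tau_s\gg t_c$, which fails for the smallest values of $l$ (at $l=0$ the separation is only $c\tau_s\gtrsim t_c$ and the two Gaussians overlap non-negligibly), and tracing out a imperfectly uncomputed register does not obviously yield a state close to $\rho_\mathrm{d}(\Delta t)$. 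Your fallback --- choosing the \cref{prob:prob2} parameters so that the undressed two-spike states are themselves (approximately) legitimate instances --- is exactly what the paper does, and is the cleaner way to close the argument; note only that a reduction must work against an arbitrary solver of \cref{prob:prob2}, so the fallback should be phrased as ``these states are valid \cref{prob:prob2} inputs for $t_c\sim\tau_s$'' rather than as a property of \cref{alg:alg2}'s particular score-function analysis. Both you and the paper quietly ignore the wrap-around case $a+l\geq N$, where the produced two-peak state is cyclically shifted rather than of the form $t_0,\,t_0+\Delta t$ with $t_0+\Delta t\leq T$; your modular-separation bookkeeping at least makes this issue explicit.
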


\begin{proof}
Observe that the state provided in \cref{prob:prob2} has a similar structure to the coset states in the DCP. In fact, if we let $t_0$ and $\Delta t$ be integers times $t'_c$, the sampling point number be $n_s = T/t'_c$ (rather than $10T/t'_c$), and the carrier frequency $\omega$ be sufficiently slow (even much lower than $1/T$), then the discretized state $\ket{\phi_\mathrm{d}}$ has the form
\begin{equation}
\begin{aligned}
    \ket{\phi_\mathrm{d}(\tau_0, t_0,\Delta t)} &\approx
   \frac{1}{\sqrt{2}}   \left(\ket{\frac{t_0}{t_c}} + \ket{\frac{t_0 + \Delta t}{ t_c}} \right) \\
   &= \frac{1}{\sqrt{2}} (\ket{a} + \ket{a+l}),
\end{aligned}
\end{equation}
where we map between quantities in the DCP and those in the delay-finding problem as follows: $\frac{t_0}{t'_c} \mapsto a$, $\frac{\Delta t}{t'_c}\mapsto l$, $\frac{T}{t'_c} = n_s\mapsto N$. Such states can be obtained by measuring the first qubit of the DCP states in the $X$ basis and post-selecting on $+1$ outcomes. This postselection succeeds with probability $1/2$, resulting in only a factor-of-$2$ overhead in the production of time-delay states from dihedral coset states. 
Thus an efficient algorithm for solving the delay-finding problem can be used to efficiently solve the DCP.
\end{proof}

Our approach to solving the delay-finding problem follows the same strategy as in Ettinger and H{\o}yer's procedure for solving the DHSP by producing coset states, measuring them in an appropriate basis, and inferring $l$ from the results \cite{EH00}. While it uses polynomially many samples, this procedure is computationally inefficient, using exponential (in $\log N$) processing.
Fortunately, the corresponding $\poly (T/t_c)$ classical data processing cost is acceptable for the delay-finding problem, since $T$ and $t_c$ are both constants for a certain observation, while the sample complexity is the real bottleneck in microlensing observation.

\section{Experimental realization}
\label{sec:experiment}
In this section, we outline potential experimental schemes to realize our sample-efficient delay-finding strategies. Note that our system is simply a single-photon spectrometer (i.e.\ a spectrometer with single-photon sensitivity) 
connected to an output port of a large optical telescope. Therefore, in this section, we focus on the implementation of high-resolution single-photon spectrometers. 

We first review the resolution and bandwidth of existing approaches to single-photon spectrometry using direct frequency readout without digitization (\cref{sec:review_sps}). These are the only experimental building blocks needed for \cref{alg:alg1}.
For the quantum undersampling algorithm (\cref{alg:alg2}), we present two schemes. Both schemes are digital and operate with a classical switch acting on the digitization timescale $t'_c/10$. First, we present a linear-optics implementation for the discrete (quantum) Fourier transform of the photonic state, albeit in practice some of the components are used in quantum optics (\cref{sec:linearoptics}). Second, we point out that the state of the light can be transferred to quantum memories, followed by quantum computation (\cref{sec:QCExperiment}). Although the experimental realization of high-quality quantum memories is challenging in the near term, the advantages of the approach in \cref{sec:QCExperiment} compared to that in \cref{sec:linearoptics} are longer possible storage times of the light (compared to delays achievable with delay lines in \cref{sec:linearoptics}) and hence larger allowed values of $T$, and an exponential reduction in the number of gates (compared to the number of beam splitters in \cref{sec:linearoptics})  and memory (compared to the number of delay lines in \cref{sec:linearoptics}) when binary encoding is available. The approach in \cref{sec:QCExperiment} is particularly suited for the application of our algorithm in telescope arrays (\cref{sec:forarrays}) in order to minimize entanglement consumption.

\subsection{Single-photon spectrometry}
\label{sec:review_sps}
Due to the great significance in quantum optics and quantum information processing, a variety of approaches have been developed to measure the frequency of a individual photons. Indeed, single-photon spectrometry is well within the quantum optics toolbox. In this subsection, we review the major achievements and state-of-the-art results in this area. Since longer time delays (corresponding to heavier lenses) require higher frequency resolution, and we only expect to receive a limited number of photons spread over a wide range of the spectrum, we focus on summarizing the spectral resolution and bandwidth of each approach.

As a popular scheme for single-photon spectrometry, frequency-to-time mapping can achieve $10\,\mathrm{GHz}$-level resolution with $10^{12}\,\mathrm{Hz}$-level bandwidth via chirped fiber Bragg gratings \cite{davis2017pulsed} or integrated thin-film lithium niobate phase modulators \cite{zhu2022spectral}. The \emph{time lens} \cite{mazelanik2020temporal,joshi2022picosecond} is also capable of transferring frequency information into temporal information. Remarkably, Ref.~\cite{mazelanik2020temporal} employs a spin-wave modulation method and a gradient echo memory to achieve $20\,\mathrm{kHz}$ resolution with $\mathrm{MHz}$-level bandwidth. Next, on-chip spectrometers based on superconducting nanowire single-photon
detectors  \cite{cheng2019broadband,xiao2022superconducting,kahl2017spectrally} support broadband input with $ 10^{14}\,\mathrm{Hz}$-level bandwidth with typical spectral resolution around $100\,\mathrm{GHz}$. Additionally, the dual-comb \cite{coddington2016dual,picque2020photon,zhong2025broadband} approach has been shown to be powerful in implementing single-photon spectrometry: Ref.~\cite{xu2024near} demonstrates $200\,\mathrm{MHz}$-level resolution with $50\,\mathrm{GHz}$ bandwidth; Ref.~\cite{peng2025single} provides $125\,\mathrm{MHz}$ resolution with $\sim 10\,\mathrm{GHz}$ bandwidth. There is also a frequency-to-space mapping scheme \cite{nagoro2025single} that achieves $120\,\mathrm{MHz}$ resolution with $15\,\mathrm{GHz}$ bandwidth using a single-photon
avalanche diode array. 

Let us analyze the feasibility of our delay finding scheme in the near term. Note that (we will elaborate on this in \cref{sec:dwarfflare}) that our observation plan requires measurements of $\Delta t$ in the range from $10^{-10}\,\mathrm{s}$ to $10^{-3}\,\mathrm{s}$, and this range is mainly set by the notorious finite-source effect, a fundamental problem that cannot be overcome by the development of technology. Therefore, we focus on time-delay measurement in this range.
We notice that a major obstacle is that high-resolution spectrometers are typically narrow-band devices, which limits their practicality for long-$\Delta t$ measurements. The good news is, if we only wish to measure short time delays at or below $10^{-8}\,\mathrm{s}$ level corresponding to lensing objects like brown dwarfs (or we have prior knowledge that promises the lens to be lightweight), then the state-of-the-art dual-comb single-photon spectrometers seem to have the required $10^8\,\mathrm{Hz}$-level resolution with a reasonable bandwidth. The $10\,\mathrm{GHz}$ device bandwidth requires using $\sim 10^4$ spectrometers together to cover the $\sim 10^{14}\,\mathrm{Hz}$ total bandwidth.
However, if we wish to measure the time delay corresponding to black holes of stellar mass (with $\Delta t \gtrsim 10^{-4}\,\mathrm{s}$), then the $\mathrm{kHz}$-level resolution can only be potentially achieved by the most precise spectrometer listed above (the spin-wave modulation method), and its extremely narrow bandwidth requires using prohibitively many spectrometers in parallel.  In conclusion, the difficulty of experimental realization strongly depends on the range of $\Delta t$ of interest: it is much more practical to work with short time delays, while a general-purpose experimental platform for all $\Delta t$ allowed by our observation scheme requires either massive investment or next-generation single-photon spectrometry with both high resolution and large bandwidth.

Finally, we briefly discuss another possible approach to realize high-resolution spectrometry based on ensemble-based quantum memories \cite{heshami16,shinbrough23}. First, if one uses quantum memories based on inhomogeneous broadening 
\cite{afzelius2009multimode,ma2021one,Mejia2025-rf}, one might be able to measure the frequency of the incoming light by projectively measuring which broadening class of atoms the light is stored in. Second, if the frequency of light is mapped to spatial frequency of the ensemble memory, one possible realization of frequency readout is putting a cavity around the memory and coupling the desired spatial frequency components to the cavity sequentially one by one. We leave the implementation of these ensemble-memory-based spectrometry approaches, along with improving existing ones, as potential future directions of research.

\subsection{Linear optics}
\label{sec:linearoptics}
We sketch a possible implementation with linear optics (Fig.~\ref{fig:optics}). Incoming light from the source is collected by a telescope. It is routed onto different paths by a classical switch: e.g., movable mirrors that reflect the light. The switch operates on the digitization timescale $\tau_s = t'_c/10$ set by the coherence time of the light, such that the different paths have $\Theta(1)$ amplitude of interference. Delay lines are introduced for each path such that they arrive at an interferometer at the same time. The delay can be realized by additional fiber that the light needs to traverse, or by an atomic cloud with a high index of refraction. The interference is done by a network of beamsplitters and phase shifters comprising a $P$-port, which can realize the QFT unitary of dimension $P$~\cite{Reck1994OpticsUnitary}. Given the digitization time $O(1/t'_c)$ and the total observation time $T$, the number of ports is $P=O(T/t'_c)$, and the number of beam splitters is $O((T/t'_c)^2)$. Implementation of the fast Fourier transform reduces the count to $O((T/t'_c)\log(T/t'_c))$~\cite{Barak2007OpticalFT,Hillerkuss2010OpticalFTProcessing}. Finally, the intensity of the light at the output of the $P$-port is measured. Since the number of incident photons is small compared to the number of ports, single-photon detectors are necessary and sufficient. Furthermore, the large number of elements and the associated precision, along with the limited delay time achievable in practice (about a microsecond~\cite{Guo2025HighlyEfficientBroadbandOptical}), can make implementation challenging.

\begin{figure}
    \centering
    \includegraphics[width=\linewidth]{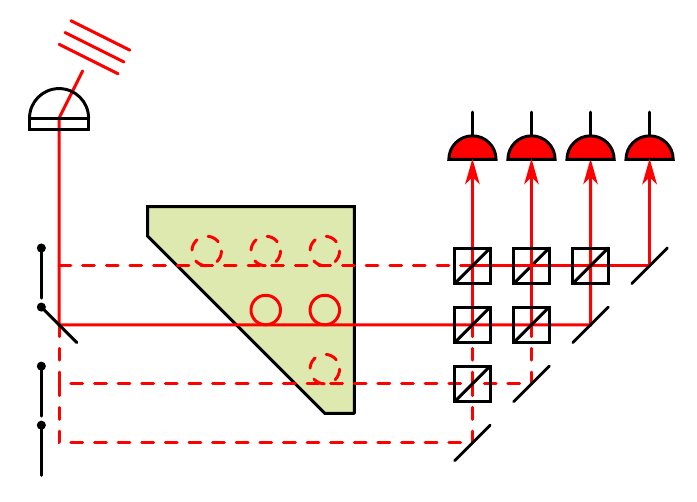}
    \caption{An implementation of the protocol with linear optics. Incoming light is routed by a classical switch onto $O(T/t_c)$ different paths. Time delays are introduced such that the paths jointly interfere at a $P$-port, where $P=O(T/t_c$), followed by measurement with photodetectors.}
    \label{fig:optics}
\end{figure}

\subsection{Quantum computing} \label{sec:QCExperiment}
As a potentially more efficient approach to implement the discretization process and the quantum Fourier transform of \cref{alg:alg2}, storage of light in digital quantum memory followed by digital quantum computation was previously considered in the context of quantum telescope arrays~\cite{Khabiboulline2019OpticalInterferometry}.
In particular, an incoming photon in a superposition of $O(T/t'_c)$ arrival times is coherently stored in $O(\log(T/t'_c))$ qubits. An incoming photon ($\gamma$) is mapped to quantum memory ($a$) in the following way:
\begin{equation}
\begin{aligned}
    &\quad \frac{1}{\sqrt{2}}(\ket{t}+\ket{t+\Delta t})_\gamma \ket{\bar{0}}_a\\
    &\to \ket{0}_\gamma \frac{1}{\sqrt{2}}(\ket{\bar{t}}_a+\ket{\overline{t+\Delta t}}_a) \,,
\end{aligned}
\end{equation}
where $\bar{t}$ denotes the binary representation of $t$. In contrast to collapsing the temporal superposition and performing a Fourier transform on the spatial coherences, as in Ref.~\cite{Khabiboulline2019TelescopeArrays}, now we directly apply a quantum Fourier transform on the quantum memory. Readout of the qubits in the computational basis completes the quantum portion of \cref{alg:alg2}.

The state transfer from light to quantum memory may be realized with entangling gates between flying photons and qubits, mediated by cavities~\cite{Duan2004PhotonicGate}, and non-destructive photon measurements~\cite{Khabiboulline2019TelescopeArrays}. Despite having exponentially-reduced resource scaling with $T/t'_c$, the experimental realization of such a binary-encoded quantum memory remains challenging.

\section{Observation proposal: M dwarf flares}
\label{sec:dwarfflare}
In this section, we explore flare stars as potential sources for a fiducial experimental realization of the mass-measurement algorithm. The idea is to connect the photon measurement setup described in the previous section to a ground-based telescope that will collect these photons and pass them into the measurement device. This telescope would follow up on ongoing microlensing events detected by wide-field microlensing surveys, pivoting to focus specifically on the associated source and to perform a mass measurement of the lens. A given event duration can be estimated via the Einstein crossing timescale for the event, defined in Eq. \ref{eq:tE}: 
\begin{equation}
\label{eq:teval}
\begin{aligned}
    t_E &= \frac{\sqrt{4 G M D_L (1-\frac{D_L}{D_S})}}{v_T c} \approx 4\,\text{days}\,\left(\frac{M}{M_\text{Jup}}\right)^{1/2}
\end{aligned}
\end{equation}
where $G$ is Newton's gravitational constant, $c$ is the speed of light, $M$ is the lens mass, $D_L$ is the lens distance, $D_S$ is the source distance, and $v_T$ is the relative transverse velocity of the lens. We have evaluated at $D_L = 4$ kpc, $D_S$ = 8 kpc, and $v_T = 55$ km/s, which are typical values for Bulge-oriented microlensing surveys. $M_\text{Jup}$ is the mass of Jupiter. As a result, microlensing events in Bulge-oriented surveys can last between  hours and months, depending on the mass of the lens, providing sufficient time for a survey to detect an ongoing microlensing event and alert on it.

We describe the example setup in \cref{sec:flare_setup}. Then in \cref{sec:combineflares}, we present a modification to our algorithm such that photons from different flares in the same M dwarf can be combined to contribute to the same $\Delta t$ estimation.

\subsection{Example setup}
\label{sec:flare_setup}
We describe a particular experimental realization of this protocol, indicating the potential for compelling use cases. We emphasize that this is only one example, and there may be other scenarios in which our approach can improve microlensing observations. 

For our fiducial setup, we will focus on flare stars in the Galactic Bulge as sources. In particular, active M dwarfs may serve as a good target for this protocol. M dwarfs are the most populous stellar type in the Galaxy and exist in great abundance near the Galactic Bulge; therefore, M dwarfs are likely light sources for microlensing events. Their effective surface temperature declines with their radius, so larger M dwarfs ($R\approx 0.5\, R_\odot$) have temperatures of $\approx 3600$ K, while the lowest-mass M dwarfs ($R\approx 0.1\, R_\odot$) have temperatures of $\approx 2400$ K. These temperatures correspond to emission that peaks in the red/near-infrared part of the electromagnetic spectrum, leading to their more colloquial name of ``red dwarfs.''

Despite their small size, M dwarfs are one of the most active stellar types, producing flares that last on the order $1-10$ minutes and release energy in the range $10^{28}\,\mathrm{ergs} - 10^{34}\,\mathrm{ergs}$. Flare temperatures are difficult to constrain without multiwavelength spectra. As such, flare temperatures are usually assumed to be $T_\text{flare} = 9\,000$ K despite evidence that typical flare temperatures may actually be closer to $11\,000\,\mathrm{K}$ \cite{Jackman2023}, with some rare superflares even reaching peak temperatures of $\gtrsim 15\,000$ K \cite{Howard2020}.

For the sake of specificity, we will consider a fiducial M5V dwarf \footnote{The ‘5’ in M5V indicates a spectral subclass, specifying the star’s temperature within the M-type sequence (a lower number indicates a hotter star). The ‘V’ signifies luminosity class five, denoting a main-sequence star that generates energy through hydrogen fusion in its core.
} in what follows. We set the source parameters as $R_S = 0.2 \, R_\odot$, an effective temperature $T = 3060$ K, and a flare frequency distribution that falls off with $\tilde{\nu} = (3 \, \text{day}^{-1}) (\frac{E}{10^{30} \, \text{ergs}})^{-0.65}$, where $\tilde{\nu}$ is the cumulative frequency of flares occurring per unit time for flares with energy $\geq E$ \cite{Paudel2024}. This power law has a range of validity of $\approx 10^{29} \, \mathrm{erg} - 10^{32} \, \mathrm{erg}$.

The physical processes that give rise to flares on M dwarfs are not fully understood, but are believed to be related to magnetic reconnection events in active regions. Both the size and temperature of these regions are not well known, though energetics arguments and the spectra of observed flares suggest emission regions on the order of $2\times10^9 \, \mathrm{cm}$ (or $\approx 3 R_\oplus$) \cite{Paudel2024} for an assumed flare temperature of 9000 K. However, if the flare is actually at higher temperature, then a smaller region can still reproduce the observed energy. We see that,  as a result, smaller, hotter flares are more likely to satisfy the constraint in Eq.~(\ref{eq:finitesource2}), hence are better targets for our lensing scenario. However, applying our algorithm to larger, cooler flares that do not satisfy the finite-source constraint still yields interesting results; since such events would \textit{not} have detectable time delays, a non-detection for a given flare places constraints on the spatial size of the flare emission region.

\begin{table*}
    \centering
    \begin{tabular}{c|c|c}
        \textbf{Symbol} & \textbf{Value} & \textbf{Description}\\
        \hline 
        $D_S$ & 8 kpc &  Distance to source (M dwarf in Galactic Bulge)\\
        $R_\text{dwarf}$ & 0.2 $R_\odot$ & Radius of M dwarf\\
        $T_\text{dwarf} $ & 3060 K & Temperature of M dwarf\\
        $R_\text{flare}$ & 3000 km & Size of flare\\
        $T_\text{flare} $ & $10\,000\,\mathrm{K}$ & Temperature of flare\\
         $\tau_\text{flare}$ & 2 min& Flare duration\\
         $\tilde{\nu}$ & 0.67 day$^{-1}$ & Flare rate for flares with $E > 10^{31}$ erg\\
         $[\lambda_\text{min}, \lambda_\text{max}]$ & [365 nm, 510 nm] & Telescope passband
    \end{tabular}
    \caption{Fiducial model parameters of flaring M dwarf and detector.}
\label{tab:fiducialdwarf}
\end{table*}

We wish to explore the parameter space in which stellar flares may be a viable target for such an experiment. As described above, we adopt a fiducial M5V dwarf as our target, situating it in the Galactic Bulge at $D_S = 8 \, \mathrm{kpc}$. 
We assume a flare temperature $T_\text{flare}$ of $10\,000 \, \mathrm{K}$, a flare size $R_\text{flare}$ of 3000 km, and a flare duration $\tau_\text{flare}$ of 1 minute. This corresponds to a flare with total energy $\approx 10^{31} \, \mathrm{erg}$, which, given our fiducial flare frequency distribution, would occur at a rate of $\approx 0.67/$day.
To compute the number of signal/background photons incident on our telescope per flare, we integrate  over the flare/dwarf spectrum (modeling both as blackbodies at their respective temperatures) in the passband ranging from $\lambda_\text{max} = 510 \, \mathrm{nm}$, the peak wavelength of the flare blackbody spectrum, to $\lambda_\text{min} = 365 \, \mathrm{nm}$, the cutoff wavelength determined by Eq.~(\ref{eq:finitesource2}), with an additional factor $\varepsilon = 0.2$, a heuristically determined prefactor necessary for finite-source effects to be negligible.
We summarize fiducial model parameters in \cref{tab:fiducialdwarf}.

Additionally, we account for the effects of dust and atmospheric extinction in the following way. We calculate the dust extinction using the observationally-derived extinction coefficients towards the Galactic Bulge found by Ref.~\cite{Saha2019}, interpolating logarithmically between the values displayed in their Fig.~8. We set the color excess between $r$ and $z$-band filters, $E(r-z)$, to 0.5, as taken from the results of Ref.~\cite{Saha2019} in Baade's window. We find that the resulting $A_v$ is consistent with the findings of Ref.~\cite{Stanek1996}. With the interpolated $A_\lambda(f)$ from their Fig.~8 with $E(r-z) = 0.5$, we convert to optical depth as $\tau_\text{dust}(f) = A_\lambda(f) / 1.086$. Additionally, we include the effects of atmospheric extinction using the fiducial values provided for Mauna Kea in the Gemini Observer's Guide \cite{GeminiObservatory2024}, which result in a typical suppression of $\approx 20\,\%$ in our passband. The total optical depth is therefore $\tau(f) = \tau_\text{dust}(f) + \tau_\text{atmo}(f)$.

The resulting expressions for the number of signal (background) photons per fiducial flare per minute $n_\text{sig}$ ($n_\text{bg}$) are given by
\begin{widetext}
    \begin{equation}
\label{eq:nsig}
    n_\text{sig} = A_\text{telescope} \tau_\text{flare} \left(\frac{R_\text{flare}}{c\, D_S}\right)^2 \int_{f_\text{min}}^{f_\text{max}}  e^{-\tau(f)}  \frac{2\pi f^2}{\exp[hf/kT_\text{flare}] - 1} df = 0.44 \times \left(\frac{A_\text{telescope}}{1\,\text{m}^2}  \right),
\end{equation}
\begin{equation}
\label{eq:nbg}
    n_\text{bg} = A_\text{telescope} \tau_\text{flare} \left(\frac{R_\text{dwarf}}{c\, D_S}\right)^2 \int_{f_\text{min}}^{f_\text{max}}  e^{-\tau(f)}  \frac{2\pi f^2}{\exp[hf/kT_\text{dwarf}] - 1} df = 0.69 \times \left( \frac{A_\mathrm{telescope}} {1\,\text{m}^2} \right).
\end{equation}
\end{widetext}
Our technique performs best when telescope collecting area is large.  However, we find that often, due to the limited number of photons received per flare, a single flare is insufficient to measure the lens mass with $95\,\%$ confidence. As a result, we must combine measurements across multiple flares to make a confident measurement. In order to achieve this, we modify \cref{alg:alg1} and \cref{alg:alg2} to propose a new algorithm, which is elaborated in \cref{sec:combineflares}. Fortunately, the total number of photons needed to measure the mean value of $\Delta t$ to $t_c$ precision with $95\,\%$ confidence is still $O(\log (T/t_c))$ when the number of photons per flare is not too small.
We defer the technical details of the derivation to \cref{sec:combineflares}.

Note that numerical evaluation of $n_\mathrm{sig}$ yields up to several hundred signal photons for a typical $1$-minute flare for the largest ground-based telescope in the foreseeable future. Since this is the photon number corresponding to the entire passband from $\lambda_\mathrm{max}=510\,\mathrm{nm}$ to $\lambda_\mathrm{min}=365\,\mathrm{nm}$, it is a broadband signal with coherence time $t_c \sim 10^{-15}\,\mathrm{s}$. According to \cref{sec:broadband}, when $\lambda_\mathrm{min},\lambda_\mathrm{max}$ are fixed, the size of the search space, $T/t_c$, is solely determined by $T$, the upper bound of the time delay.

We present numerical simulation results for our delay-finding algorithm for a single flare in \cref{fig:conf_vs_n}. The vertical axis is the confidence level of the delay measurement achieved for one flare that produces $n_\text{sig}$ photons, and the horizontal axis is the number of signal photons received per flare.
We see that in our fiducial case, when $T=10^{-9}\,\mathrm{s}$ (corresponding to $T/t_c=10^6$), a single flare would need to yield $\approx 300$ signal photons collected in our telescope to achieve a $70\,\%$ confidence detection, and $\approx 500$ to achieve a $95\,\%$ confidence detection. For a longer possible time delay, $T=10^{-7}\,\mathrm{s}$ (corresponding to $T/t_c=10^8$), a single flare with $400$ signal photons can achieve $\geq 60\,\%$ confidence detection. Note that we have also included a zero-background limit ($Q=1$, blue), which while not applicable to the flare scenario, is more broadly applicable to possible other isolated sources. We see that if an isolated source satisfies the finite-source condition, it would require collecting only up to 200 total photons from that source to measure the time delay at $90\,\%$ confidence. This impressive sensitivity to even very faint sources may have broader applications beyond the flare scenario studied here.

\begin{figure}
    \centering
    \includegraphics[width=\linewidth]{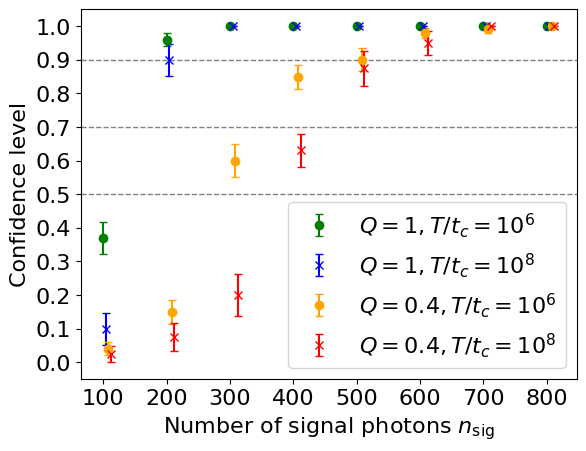}
    \caption{This figure shows, with two possible values of the signal-to-background ratio $Q$, how the confidence level of $\Delta t$ measurements increases with the number of signal photons. The confidence level for each $n_\mathrm{sig}$ is computed by numerical simulation of a scenario with $A=1.34$.
    Here $Q=1$ corresponds to the scenario with no background or noise photons, $Q=0.4$ corresponds to the fiducial example of M-dwarf flares considered in our observation proposal, $T/t_c = 10^6$ corresponds to $\Delta t\leq T=10^{-9}\,\mathrm{s}$, and $T/t_c = 10^8$ corresponds to $T=10^{-7}\,\mathrm{s}$.
    The $90\,\%$ confidence, $70\,\%$ confidence, and $50\,\%$ confidence are marked by horizontal dashed lines.}
    \label{fig:conf_vs_n}
\end{figure}

\begin{table*}[t]
\centering
\begin{tabular}{c|c||c|c|c||c|c|c||c|c|c||c|c|c}
\toprule
\hline
\multicolumn{2}{c||}{$T/t_c$} & \multicolumn{3}{c||}{$10^6$} & \multicolumn{3}{c||}{$10^7$} & \multicolumn{3}{c||}{$10^8$} & \multicolumn{3}{c}{$10^9$}\\
\multicolumn{2}{c||}{$M/M_\mathrm{Jup}$} & \multicolumn{3}{c||}{$\sim 0.1$} & \multicolumn{3}{c||}{$\sim 1$} & \multicolumn{3}{c||}{$\sim 10$} & \multicolumn{3}{c}{$\sim 100$}\\
\multicolumn{2}{c||}{$t_E$\,(days)} & \multicolumn{3}{c||}{$\sim 1.3$} & \multicolumn{3}{c||}{$\sim 4$} & \multicolumn{3}{c||}{$\sim 13$} & \multicolumn{3}{c}{$\sim 40$}\\
\multicolumn{2}{c||}{$m_\mathrm{typ}$} & \multicolumn{3}{c||}{$ \lesssim 1$} & \multicolumn{3}{c||}{$\lesssim 3$} & \multicolumn{3}{c||}{$\lesssim 9$} & \multicolumn{3}{c}{$\lesssim 27$}\\
\hline
\hline
\multicolumn{2}{c||}{$n_\mathrm{sig}$} & 150 & 200 &400 & 150 & 200 &400  & 150 & 200 &400  & 150 & 200 &400\\
\hline
\hline
\multirow{3}{*}{$m$}& $50\,\%$ confidence  & \cellcolor{red!15}6 & \cellcolor{red!15}4 &\cellcolor{blue!15}1 & \cellcolor{red!15}10 & \cellcolor{red!15}5 &\cellcolor{blue!15}1 & \cellcolor{red!15}11 & \cellcolor{blue!15}5  & \cellcolor{blue!15}1 & \cellcolor{blue!15}18 & \cellcolor{blue!15}6 &\cellcolor{blue!15}2\\
\cline{2-14}
&$70\,\%$ confidence  & \cellcolor{red!15}10 & \cellcolor{red!15}5 & \cellcolor{blue!15}1 & \cellcolor{red!15}13 & \cellcolor{red!15}6 & \cellcolor{blue!15}1 & \cellcolor{red!15}15 & \cellcolor{blue!15}6 & \cellcolor{blue!15}2 &\cellcolor{blue!15}22 & \cellcolor{blue!15}8 & \cellcolor{blue!15}2 \\
\cline{2-14}
&$90\,\%$ confidence   & \cellcolor{red!15}>10 & \cellcolor{red!15}7 & \cellcolor{blue!15}2 & \cellcolor{red!15}$>13$ & \cellcolor{red!15}8& \cellcolor{blue!15}2 & \cellcolor{red!15}$>15$ & \cellcolor{red!15}11 & \cellcolor{blue!15}2 &\cellcolor{red!15}$\geq 27$ & \cellcolor{blue!15}12 & \cellcolor{blue!15}2\\
\hline
\bottomrule
\end{tabular}
\caption{Summary of the number of flares needed to achieve certain confidence levels. We use the baseline setting discussed in our M-dwarf flare observation proposal: $Q=0.4, A=1.34, t_c = 10^{-15}\,\mathrm{s}$. For each $T/t_c$, we compute its corresponding lens mass $M$ and roughly estimate the event duration $t_E$ using Eq.~\eqref{eq:deltat_mass} and Eq.~\eqref{eq:teval}, respectively. With the assumption that a flare happens every $\sim 1.5$ days on average, we estimate the typical number of flares one can expect for each duration of event, denoted by $m_\mathrm{typ}$.
}
\label{tab:confidences}
\end{table*}

Simulation results for the multiple-flare combination are shown in \cref{tab:confidences}. This table shows how the number of flares needed to provide a detection changes with the number of photons per flare $n_\text{sig}$, confidence level, and the ratio of maximum time delay to coherence time, $T/t_c$.

If we consider an example case of using the two Keck telescopes atop Mauna Kea, which have a total collecting area of 152 m$^2$, the resulting signal and background yields are $\approx 132$ ($\approx 198$) and $\approx 210$ ($\approx 315$) per flare, respectively, assuming each flare has $2$-minute ($3$-minute) duration. As a result, we see from \cref{tab:confidences} that Keck's collecting area is insufficient to make a confident mass measurement with one flare alone. As indicated in \cref{tab:confidences}, for Keck, we would need to combine $\gtrsim 7$ flares with $3$-minute duration to measure $\Delta t$ with $70\,\%$ confidence when $T/t_c = 10^8$, corresponding to $M \approx 10\,M_\mathrm{Jup}$. Given our fiducial flare rate, this corresponds to a roughly 11-day observation, which is not unreasonable since the typical duration of a lensing event with $10\,M_\mathrm{Jup}$ lens mass is $\gtrsim 13\,\mathrm{days}$, according to Eq.~\eqref{eq:teval}.
Additionally, it is worth noting that while we have restricted ourselves to studying 2-minute or 3-minute flares that produce $10^{31}$ ergs, the observation of a single 10-minute flare ($n_\text{sig}$ = 660) at the same temperature would allow the mass to be estimated with very high confidence for various settings of $T/t_c$ and $Q$, as suggested by \cref{fig:conf_vs_n}.

The prospects are even better for next-generation extremely large telescopes like the currently under construction Extremely Large Telescope \cite{Padovani2023}. With a collecting area of 978 m$^2$, the per-flare photon yield for a 1-minute (2-minute) flare would be $\approx 426$ ($\approx 852$) for signal photons and $\approx 677$ ($\approx 1352$) for background photons. We see from Fig. \ref{fig:conf_vs_n} that in this scenario, photons \textit{from a single $1$-minute flare} are sufficient to measure $\Delta t$ with $> 70\,\%$ confidence for $T/t_c=10^6$, and a \emph{single 2-minute flare} would enable $\Delta t$-measurement for $T/t_c > 10^8$ with $>95\%$ confidence. This would enable the observation of lensing events with duration at least $\approx 1.5$ days, or by Eq.~\eqref{eq:teval}, lens masses above $\approx 0.2 M_\text{Jup}$. Note that much below this mass, the events would become so short that it would be challenging for microlensing surveys to identify that microlensing is occurring in time for us to reorient our telescope. As such, increasing the collecting area of the telescope does not appreciably improve upon the lower mass limit.

We see that with existing and near-future telescopes, our proposed method could potentially measure the mass of lenses with masses greater than roughly that of Jupiter. The mass range of isolated objects near and above $M_\text{Jup}$ is an exceptionally interesting range to explore, as it is currently poorly understood whether the dominant contribution in this mass range arises from free-floating planets or sub-stellar objects \cite{MiretRoig2023}.
Above $13 \,M_\text{Jup}$, the majority of nonluminous lenses would be dim brown dwarfs that are otherwise unobservable, providing a unique way of building a brown dwarf mass function at Galactic distances. Above $\approx 0.1 M_\odot$, direct mass measurements of observed stellar lenses would allow better calibration of mass-luminosity relations as well as the discovery and characterization of compact objects such as neutron stars and white dwarfs. 
Finally, at super-solar masses ($M\gtrsim M_\odot$), this technique would provide the opportunity to measure the masses of isolated black holes in the mass range probed by LIGO's observation of black hole mergers \cite{LIGO2023BHCatalogue}. Note that observations of this duration would also provide a complementary measurement that would help further break lensing degeneracies, even if, for these longer events, orbital parallax can be detected in the light curve. This technique would also enable the direct mass measurement of primordial black holes \cite{Carr1974,Carr2020} and other hypothesized macroscopic dark matter candidates if the dark matter is in fact composed of such objects. As such, if successful, the application of this technique as a follow-up strategy for microlensing surveys would provide the opportunity to do interesting science across many different sub-fields of astronomy. Additionally, as mentioned above, even a non-detection would provide new insight into the temperature and spatial scale of flare emission regions on M dwarfs.

It is worth noting that in the above analysis, we assumed all photons received from the stellar flare in the 1-minute or 2-minute window are in the state close to the $\rho(\Delta t)$ state in Eq.~(\ref{eq:mixedState}), i.e., a superposition of two light paths with stable relative phase and amplitude, even after experiencing dust extinction and passing through the interstellar medium. We justify this robustness assumption in \cref{sec:robustness}.

\subsection{Combination of multiple flares}
\label{sec:combineflares}

When the number of photons that can be received per flare is smaller, we consider using photons from multiple flares that happen in different areas of the star to make a joint analysis. We can safely assume that the durations of the flares do not overlap. Since the size of the M dwarf is generally much greater than the size limit set by the finite-source effect, the difference in $\Delta t$ between different flares is generally much greater than $2\pi / \omega_0$. Since the photons result from broadband emission, this difference is also much greater than the coherence time $t_c$. Note that if this difference is smaller than the coherence time, this will be the narrow-band scenario, in which combining multiple flares may have better performance. Although the narrow-band case is less realistic, we present its strategy in \cref{sec:narrow} in case this is of independent interest.

Let $n$ denote the expected number of photons received per flare and $m$ be the number of flares.  For simplicity, suppose that all flares yield exactly $n$ photons. We consider a realistic scenario that takes into account the signal-to-background ratio $Q$ and magnification $A$. Recall that the score function of $\tau$ for the frequencies $\nu_{i,j}$ (where $j\in\{1,2,\dots,n\}$) corresponding to the $i$th flare is $f(\tau,\nu_{i,1},\dots,\nu_{i,n}) =  \sum_{j=1}^n \cos( \nu_{i,j} \tau) $ with
\begin{equation}
\begin{aligned}
    &\quad \mathbb{E}[f(\tau,\nu_{i,1},\dots,\nu_{i,n})] \\
    &\approx \begin{cases}
         \frac{1}{2} n Q \gamma_A \cos(\omega_0(\Delta t_i-\tau)),&|\tau-\Delta t_i|< t_c  \\
         0,&|\tau-\Delta t_i| \geq t_c 
    \end{cases}
\end{aligned}
\end{equation}
where $\Delta t_i$ is the time delay for the $i$th flare.

Since the delays for different flares may vary significantly, we aim to find the smallest possible time window of $\tau$ that almost all $\Delta t_i$s fall into. An intuitive way to determine the size of the 
time window is to match the standard deviation of the $\Delta t_i$s, denoted by $\delta_{\Delta t,\mathrm{f}}$. This quantity is determined by the linear size of the host M dwarf, which is typically between $0.1\,R_\odot$ and $0.7\,R_\odot$ (note that in the fiducial case in \cref{tab:fiducialdwarf}, we use $0.2\,R_\odot$). This corresponds to approximately $\delta_{\Delta t,\mathrm{f}}\in [400t_c,20000t_c]$. If we take $\delta_{\Delta t,\mathrm{f}} = 1000t_c$ as a fiducial setting and assume we have prior knowledge that $T=10^{-8}\,\mathrm{s}$ (corresponding to $T/t_c = 10^6$), then our task is to find the correct window containing almost all $\Delta t_i$s among the $T/\delta_{\Delta t,\mathrm{f}} = 1000$ time windows.

To design an algorithm for the above task, we propose a score function for the combination of multiple flares. A straightforward idea is directly summing the absolute values of every flare's score function, and checking which time window has the highest mean score or maximum score. Note that such a sum of score functions is unlikely to create a significantly large value at one specific $\tau$ when the number of flares is small, because every peak in the score function only has width $t_c$ and the probability that two peaks overlap is tiny.

We generalize the above idea to give the following \emph{sum-$L_p$} score functions. Letting $p\in[0,\infty]$ be the order of the $L_p$ norm, and $j\in\{0,1,\dots,T/\delta_{\Delta t,\mathrm{f}}-1\}$ be the index of time window, we find that
\begin{equation}
    G_{L_p}(j) \coloneqq \sum_{i=0}^{m-1} \left[\sum_{k=0}^{\delta_{\Delta t,f}/t_c -1} \left|f_i(j \delta_{\Delta t,\mathrm{f}} + kt_c)\right|^p\right]^{1/p}
\end{equation}
is our score function for the $j$th window (corresponding to $\tau \in [j \delta_{\Delta t,\mathrm{f}}, (j+1) \delta_{\Delta t,\mathrm{f}}]$). The time window inferred from $m$ flares using the above score function is
\begin{equation}
    j_\mathrm{opt} = \arg\max_j G_{L_p}(j).
\end{equation}
Note that $p$ can be freely chosen to adjust our focus---whether we emphasize the average score of the window or its peak score. For $p=1$, this is the standard summation of score function values, while for $p=\infty$, we take the maximum among all score function values inside the window. To choose the best $L_p$ norm, we perform a numerical analysis for a small-scale example where $m=10$ flares are combined, each flare has $n_\mathrm{sig}=132$, with $A=1.34$, $Q=0.4$, $T/t_c = 10^4$, and $\delta_{\Delta t,\mathrm{f}}/t_c = 400$. We summarize the results in \cref{tab:norms}.
\begin{table}
\centering
\begin{tabular}{c|c}
$p$ & confidence level\\ 
\hline
$1$ & $52\pm7\,\%$ \\ 
$2$ &  $64\pm7\,\%$ \\
$\infty$ & $96\pm 3\,\%$
\end{tabular}
\caption{Results for numerical simulation of the performance of sum-$L_p$ score functions for $n_\mathrm{sig}=132$, $m=10$, $T/t_c = 10^4$, and $\delta_{\Delta t,\mathrm{f}}/t_c = 400$. The confidence level is the probability for the algorithm to find the correct time window containing the $\Delta t_i$s.}
\label{tab:norms}
\end{table}
Clearly, $G_{L_\infty}$ is the score function with best performance. Intuitively, if $\tau$ is close to $\Delta t_i$, $f_i(\tau)$ will be a peak with high probability, hence the window containing correct $\Delta t_i$s will more likely have a higher peak value.

\begin{figure}
    \centering
    \includegraphics[width=1.0\linewidth]{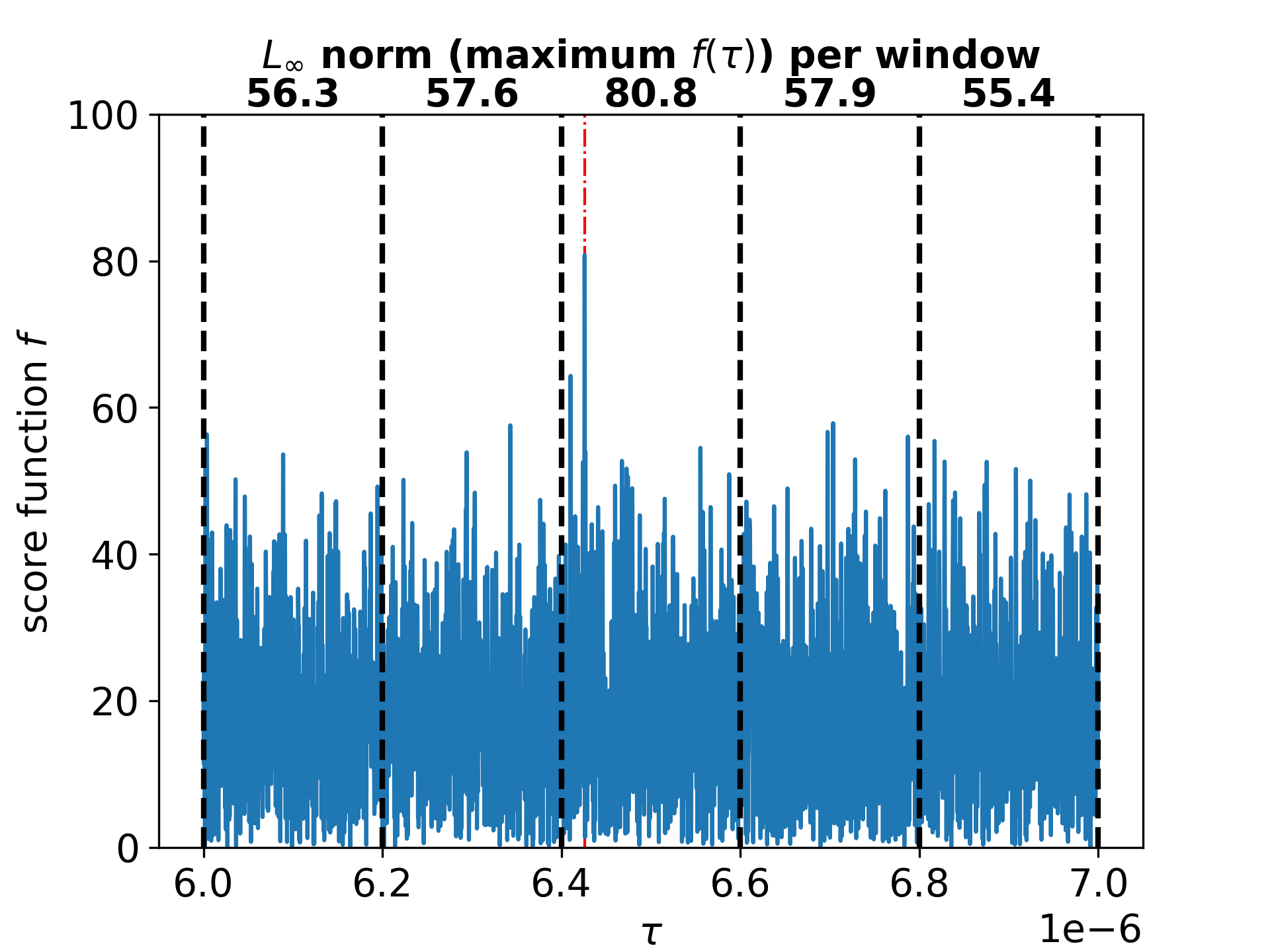}
    \caption{A schematic illustration of the \(L_\infty\)-norm–based flare combination algorithm. The plot shows the score function $f(\tau)$ evaluated over a range of $\tau$s for a single flare with $\Delta t \approx 6.426\times 10^{-6}\,\mathrm{s}$ (marked by the red dash-dot line). In this example, $\Delta t$ is promised to lie within $[6,7]\times 10^{-6}\,\mathrm{s}$ with standard deviation $\sim 10^{-7}\,\mathrm{s}$ caused by the finite-source effect. Therefore, we partition the $\tau$ range into five time windows of width $2\times 10^{-7}\,\mathrm{s}$, indicated by dashed vertical lines. For each window, the $L_\infty$-norm (i.e. the maximum) of all $f(\tau)$ values within each window is computed and displayed above the corresponding window. When combining multiple flares, the $L_\infty$ norms are summed across flares for each time window, and the window with the largest sum of $L_\infty$ norms is the estimator for $\Delta t$.}
    \label{fig:placeholder}
\end{figure}

Indeed, our numerical simulation (see \cref{tab:confidences}) shows that our flare combination strategy offers significant improvement over the single-flare case even if combining just a limited number of flares. These numerical results are based on more realistic settings where $T/t_c \in\{ 10^5,10^6,10^7, 5\times 10^7\}$ and $\delta_{\Delta t,\mathrm{f}}/t_c = 10^3$. Note that each $T/t_c$ corresponds to a typical duration of the lensing event, $t_E$, and subsequently a typical number of flares to expect, $m_\mathrm{typ}$. When the number of flares needed is smaller than $m_\mathrm{typ}$ for a certain confidence level, then with high probability one can get sufficiently many flares to obtain a $\Delta t$ estimation with that confidence. Observe that for a certain $n_\mathrm{sig}$, the required $m$ increases slowly while $T/t_c$ increases exponentially, which aligns with the result that the total number of photons ($n_\mathrm{sig}m/Q$) is proportional to $\log(T/t_c)$. Therefore, typically, our flare combination strategy works better with longer time delay. Indeed, Keck is capable of measuring $\Delta t$ with $98\,\%$ confidence when $T/t_c = 5\times 10^7$.

We emphasize that the numerical results presented in this subsection are based on a specific fiducial setting. In practice, there are more realistic factors affecting the algorithm's performance. For instance, the time delay will drift during the whole lensing event due to the transverse velocity. However, we claim that this effect, along with many other possible issues, can be taken into account in algorithm design. For the drift of time delay, one can use the magnification data and \cref{fig:fu} to model the change of $\Delta t$ and fit the frequency measurement outcomes to recover the entire time-delay curve. A more detailed exploration of possible variants of this algorithm is beyond the scope of this paper.

\section{Robustness of photonic states against the medium}
\label{sec:robustness}
As mentioned in the previous section, the interstellar medium may cause various effects on the observation, including photon loss due to dust extinction \cite{Saha2019} (corresponding to the imaginary part of the refractive index) and additional phases imprinted by gases (corresponding to the real part of the refractive index). In this section, we perform a thorough analysis of the robustness of our observation scheme (or, more fundamentally, the $\Delta t$-information in the optical signal) to both effects. We also briefly discuss the final section of the medium between the source and the telescope, which is our atmosphere, in \cref{sec:atmos}, and claim that it does not affect the stability of our measurement.

\subsection{Dust extinction}
\label{sec:dustExtinction}
The increased telescope size requirement due to the photon loss caused by dust extinction has been taken into account in the analysis of the feasibility of our example setup in Eqs.~(\ref{eq:nsig}, \ref{eq:nbg}) in \cref{sec:flare_setup}. However, since our delay-finding approach is based on the interference of two branches of the same particle, we additionally require each photon received from the source to be in one of the pure states described by Eq.~\eqref{eq:purestate} (or a state close to it) with a random $t_0$. This means that the state must be a superposition of two branches with comparable and stable weight with a stable relative phase between them. In this subsection, we discuss the potential effect of dust
on the weights in the superposition.

If dust extinction rate were tiny, the photons would barely interact with the dust, and the weights in the superposition would thus be essentially unchanged. However, as presented in \cref{sec:flare_setup}, a significant fraction of photons is indeed lost in the observation setup of our interest, hence we must take a closer look at the physical process of dust extinction. To determine the effect of dust extinction on the superposition, we assume that dust extinction in the two paths is uncorrelated. We therefore begin by studying the effect of dust extinction on one path. Intuitively, there are two models of dust extinction that can lead to the same photon loss rate (denoted by $p_\mathrm{loss}$) along one path:
\begin{enumerate}
    \item The \emph{random wall} model. At any moment, there is a $p_\mathrm{loss}$ chance that an opaque wall will appear randomly. The wall completely blocks the path from the source to the telescope.
    \item The \emph{beam splitter} model. There is always a beam splitter with reflection amplitude $\sqrt{p_\mathrm{loss}}$ in between the path from the source to the telescope.
\end{enumerate}
There is a drastic difference in the photonic state between the two models when considering microlensing, in which every photon takes \emph{two} paths in superposition and both paths are independently experiencing dust extinction described by the same model, either the random wall or the beam splitter model. For simplicity, we assume that both paths in microlensing have the same $p_\mathrm{loss}$ and the same amplitude when there is no photon loss. In the random wall model, the probability that the received photon is in the superposition over the two paths is
\begin{equation}
\begin{aligned}
    &\quad \Pr_\mathrm{wall}[\mathrm{superposition} | \mathrm{received}] \\
    &= \frac{\Pr_\mathrm{wall}[\mathrm{superposition} ,\mathrm{received}]}{\Pr_\mathrm{wall}[\mathrm{received}]} \\
    &= \frac{(1-p_\mathrm{loss})^2 }{p_\mathrm{loss} (1-p_\mathrm{loss}) + 
(1-p_\mathrm{loss})^2 } = 1-p_\mathrm{loss}.
\end{aligned}
\end{equation}
Note that $2p_\mathrm{loss} (1-p_\mathrm{loss}) $ is the probability that only one of the two paths is blocked by a wall, and half of it is the probability that one specific path is blocked and a photon is still received. This probability implies that, even among the (already limited number of) photons, only $1-p_\mathrm{loss}$ of them are signal photons (or ``good" photons), and the remaining $p_\mathrm{loss}$ of them are ``bad" photons. Combining with the analysis in \cref{sec:noisysignal}, we would need a prohibitively high number of photons to perform a successful measurement.

However, the situation is much more favorable in the beam splitter model. Here we use $\ket{1}$ and $\ket{2}$ to denote the state that the photon takes the first path and the second path, respectively. Now, the photonic state without dust extinction is $\frac{1}{\sqrt{2}}(\ket{1}+\ket{2})$. We use an additional qubit to describe the operation of beam splitters: $\ket{\mathrm{received}}$ means the photon passes the beam splitter (i.e., received by the telescope), and $\ket{\mathrm{blocked}}$ means it is reflected by the beam splitter (i.e., blocked by the dust). Hence, the final state is
\begin{equation}
\begin{aligned}
    \ket{\psi_\mathrm{BS}} &= \frac{1}{\sqrt{2}}\left( \sqrt{1-p_\mathrm{loss}}\ket{1,\mathrm{received}} + \sqrt{p_\mathrm{loss}}\ket{1,\mathrm{blocked}}\right.\\
    &\quad \quad \left.+ \sqrt{1-p_\mathrm{loss}}\ket{2,\mathrm{received}} + \sqrt{p_\mathrm{loss}}\ket{2,\mathrm{blocked}}   \right).
\end{aligned}
\end{equation}
We notice that receiving the photon is equivalent to \emph{postselecting} $\ket{\psi_\mathrm{BS}}$ on the state of the second qubit being $\ket{\mathrm{received}}$. The success probability of this postselection is $1-p_\mathrm{loss}$, and the state we obtain is always $\frac{1}{\sqrt{2}}(\ket{1,\mathrm{received}}+\ket{2,\mathrm{received}})$, i.e.,
$\Pr_\mathrm{BS}[\mathrm{superposition} | \mathrm{received}] = 1$, drastically different from the random wall model when $1-p_\mathrm{loss}$ is small. Therefore, it is crucial to determine which of the two models is a better approximation of the effect of dust extinction.

Fortunately, our close examination shows that the reality is much closer to the beam splitter model as long as the size of dust particles are sufficiently small compared to the radius of the telescope. For meter-level telescope, this is almost certainly true because Ref.~\cite{mathis1977size} suggests that $1\,\mathrm{\mu m}$ is a reasonable estimate of particle size. We provide a detailed analysis in \cref{appendix:dust}, in which we model the microscopic dust particle configuration as a binary tree coloring problem and derive an explicit relation between the level of decoherence and the size of dust particles.

\subsection{Variations in refractive index}
\label{sec:refractiveindex}
To maintain a stable time delay measurement, it is essential to ensure the light path remains stable along the light's propagation path during our targeted 1-minute window. Therefore, any medium along the light path may significantly impact the interference critical for the lensing to work, hence we wish to estimate the phase fluctuation alonog a path during our 1-minute observation window.

The phase difference accumulated by passing through the interstellar medium (ISM) has been well-studied in the context of pulsar dispersion and scintillation. The dispersion measure (DM) is the line of sight integral of the electron number density, $\int_0^{D_S} n_e(l) dl$, and can be derived from measured time delays of pulsar radio emissions. The associated time delay is given by
\begin{equation}
    \Delta t_\mathrm{DM} = \frac{\text{DM}}{k \nu^2}
\end{equation}
where $\Delta t_\mathrm{DM}$ is the accumulated time delay, $k$ is a numerical constant equal to $2.41 \times 10^{-4} \, \mathrm{cm}^{-3} \, \mathrm{pc} \, \mathrm{MHz}^{-2} \, \mathrm{s}^{-1}$.

The ISM is composed of several constituents and is often modeled as an approximately Komolgorov turbulent spectrum. As such, fluctuations in the ISM can be describe by the structure function of the dispersion measure, which for a Komologorov spectrum, has a simple power-law scaling with timescale $\tau$,
\begin{equation}
D_\mathrm{DM}(\tau) = \langle[\mathrm{DM}(t+\tau)-\mathrm{DM}(t)]^2\rangle \propto \tau^{5/3},
\end{equation}
and scales linearly with with $D_S$, the distance to the source \cite{Armstrong1995DMVariation,You2007DMVariation}.
The structure function also allows for the estimation of a root-mean-square variability of $\mathrm{DM}/dt$ for a given scale as $\sigma_\text{DM} \approx \sqrt{D_\text{DM}(\tau)}/\tau$ \cite{Donner_2020}.

The dispersion measure variation can be converted into a phase shift using the relation $D_\phi = (\frac{k\nu}{2\pi})^{-2} D_\text{DM}$. Hence we have that typical random phase shifts for a source at distance $D_S$ over timescales of $\tau$ are given by
\begin{equation}
\label{eq:sigphi}
    \sigma_\phi(\tau, D_S, \nu) \approx \frac{2\pi}{k\nu}\sqrt{D_{\text{DM}}(\tau_0, D_{S,0})} \left(\frac{\tau}{\tau_0}\right)^{5/6}\left(\frac{D_S}{D_{S,0}}\right)^{1/2}
\end{equation}

Though there are few distant pulsars with well-measured structure functions, local pulsars \cite{Donner_2020} indicate typical values of $D_\text{DM}(1000\,\text{day})\approx 1\times10^{-6}~\frac{\text{pc}^2}{\text{cm}^6}$ and $D_S \approx 1$ kpc. Using this, and evaluating Eq.~(\ref{eq:sigphi}) for typical values of interest for our proposal, we find
\begin{widetext}
    \begin{equation}
    \sigma_\phi(\tau, D_S, \nu) \approx 10^{-7}\;
    \left(\frac{D_\text{\text{DM}}(1000\,\text{day})}{1\times10^{-6}\,\frac{\text{pc}^2}{\text{cm}^6}}\right)^{1/2}
    \left(\frac{\tau}{1\,\text{min}}\right)^{5/6}\left(\frac{D_S}{8\,\text{kpc}}\right)^{1/2}\left(\frac{\nu}{750\,\text{THz}}\right)^{-1}
\end{equation}
\end{widetext}
over 1 minute, hence
is negligible for the timescales of interest to our proposal. Even taking the pulsar with the highest $D_\mathrm{DM}(\mathrm{1000\,day}) = 2.2\times10^{-4}\,\mathrm{pc}^2\,\mathrm{cm}^{-6}$ value from \cite{Donner_2020}, we still get only $\sigma_\phi = 9.9\times10^{-7}$ (average value for this dataset is $\sigma_\phi = 0.98\times10^{-7}$).

\subsection{Atmospheric fluctuations}
\label{sec:atmos}
We also discuss the noise in $\Delta t$ and the relative phase generated by atmospheric fluctuations and claim that ground-based telescopes are sufficient for our time-delay measurement. Note that the temporal variation in the refractive index of the atmosphere can be quantified by the atmospheric coherence time \cite{roddier1981v,kellerer2007atmospheric}, which is the timescale over which the wave path varies more than a significant fraction of the wavelength. This quantity is typically related to the wavelength and the wind speed. Note that, in our microlensing case, the two branches of the photon take almost the same path in the atmosphere because they are indistinguishable from Earth. Therefore, the refractive index of the atmosphere is the same for both branches, creating no noise in the measured $\Delta t$, provided that the change of refractive index during $\Delta t$ is sufficiently small.
A recent result \cite{osborn2018optical} gives an estimate of the atmospheric coherence time ($4.18\,\mathrm{ms}$) for the Very Large Telescope, much longer than the upper limit of our observable time delay ($1\,\mathrm{ms}$). Therefore, we claim that atmospheric fluctuations are irrelevant for the stability of our measurement, and it suffices to use ground-based telescopes, which is a much more economical choice---compared with space telescopes---for achieving larger telescope sizes.

\section{Time-delay calibration in telescope arrays}
\label{sec:forarrays}

A distributed version of our protocol is naturally suited for synchronizing time delays in telescope arrays (Fig.~\ref{fig:array}). In that scenario~\cite{Gottesman2021TelescopesRepeaters,Khabiboulline2019OpticalInterferometry,Khabiboulline2019TelescopeArrays}, there are $N$ sites that observe incoming light. We can consider a narrow-band point source, as is the case with an artificial guide star. Then the problems associated with finite-source size and undersampling are alleviated, and the receiver can operate over a small range of frequencies. Learning the delays between sites with small sample complexity enables faster calibration of the array before interferometry, or the ability to use fainter sources. 

The state of an incoming photon at the $N$ detectors is described as a mixture over (similar to Eqs.~(\ref{eq:purestate}) and (\ref{eq:mixedState}))
\begin{equation}
\begin{aligned}
    &\quad \ket{\phi(t_0,\Delta t)} \\
    &= \frac{1}{\sqrt{N}} \int_{-\infty}^{\infty} dt\, \alpha(t-t_0) e^{-i\omega_0 t} \cdot \\
    &\quad \left(\ket{t,0,\ldots,0}  + g_2 \ket{0,t + \Delta t_2,0,\ldots,0} \right. \\
    &\quad \, \left.  + \cdots +g_N\ket{0,\ldots,0,t+\Delta t_N} \right) \,,
\end{aligned}
\end{equation}
where $t_0$ is the centroid of the wave packet,  $\omega_0$ is the carrier frequency, $g_i \in \mathbb{C}_2$ ($\lvert g_i \rvert=1$ for a point source) is the spatial coherence between the $i$th detector and the first one, and where $\ket{t,0,\ldots,0}\coloneqq \ket{t}_1\ket{0}_2\ldots\ket{0}_N$ indicates a photon arriving at time $t$ at the first site. Our task is to learn $\{\Delta t_i\}_{i=2}^N$, such that they can be calibrated away, while $\{g_i\}_{i=2}^N$ are unknown. The next step would be to perform a QFT to learn the spatial intensity distribution of the source from $\{g_i\}_{i=2}^N$, since they are Fourier duals, as stipulated by the van Cittert-Zernike theorem~\cite{Zernike1938}. The two steps learn the angular distribution of the incoming light with increasing precision.

In contrast to the setup for gravitational lensing discussed earlier, the incoming light arrives at spatially separated sites. If we had lossless channels, we could bring the paths together and make a measurement in frequency space. However, in practice, channels are lossy with exponential degradation in distance, so we use teleportation \cite{Gottesman2021TelescopesRepeaters, Khabiboulline2019OpticalInterferometry} to overcome this limit, given that entanglement can be purified. Teleportation with minimal entanglement resources necessitates throwing out unnecessary information; hence, we introduce the qubit discretization. We comment on the possibility of a continuous-variable approach employing two-mode squeezed states containing the same number of ebits, but we do not explore this alternative here. 

\begin{figure}[t]
    \centering
    \includegraphics[width=\linewidth]{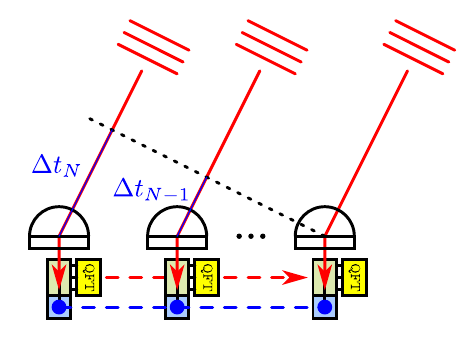}
    \caption{In an array, incoming light acquires relative time delays between $N$ different telescope sites. A distributed version of our protocol, where the state of the light is mapped to one memory using entanglement and then measured in the Fourier basis, learns all the delays efficiently.}
    \label{fig:array}
\end{figure}

Assume that we discretize the arrival times in bins of size $\sim t_c$ and store the light in qubits, as described in Section~\ref{sec:QCExperiment}, and $\ket{0}$ still denotes the absence of a photon. Since there are multiple possible arrival times, $O(\log(T/t_c))$ qubits of memory are necessary, which can be represented as a qudit of dimension $O(T/t_c)$. We map the information about the light from $N$ registers to one using $Z$-teleportation~\cite{Zhou2000GateConstruction} for qudits as follows. Consider $N = 2$ registers for now. First, we perform a generalized controlled-NOT ($\mathrm{CX}$) from register 2 to register 1. Then, we apply the QFT on register 2, followed by measurement of that register. Finally, we apply a measurement-dependent phase correction. The state transforms as
\begin{equation}
\begin{aligned}
    &\quad \quad \frac{1}{\sqrt{2}}(\ket{t,0}+g\ket{0,t+\Delta t}) \\
    &\xrightarrow{\mathrm{CX}_{21}} \frac{1}{\sqrt{2}}(\ket{t,0} + g \ket{t+\Delta t,t+\Delta t}) \\
    &\xrightarrow{\mathrm{QFT}_2} \frac{1}{\sqrt{2}} \left[ \ket{t}\frac{1}{\sqrt{T}} \sum_{j=0}^{T-1} \ket{j} +\right. \\
    & \quad \quad \quad + \left. g \ket{t+\Delta t} \frac{1}{\sqrt{T}} \sum_{j=0}^{T-1} e^{2\pi i j(t+\Delta t)/T} \ket{j}\right] \\
    &= \frac{1}{\sqrt{2T}} \sum_{j=0}^{N-1} \left[\ket{t}\ket{j} + ge^{2\pi i j (t+\Delta t)/T} \ket{t+\Delta t}\ket{j} \right] \\
    &\xrightarrow{\ket{j}\bra{j}_2} \frac{1}{\sqrt{2}}(\ket{t}+g e^{2\pi i j (t+\Delta t)/T} \ket{t+\Delta t}) \\
    &\xrightarrow{Z^{-j}} \frac{1}{\sqrt{2}}(\ket{t}+ g \ket{t+\Delta t}) \,.
\end{aligned}
\end{equation}
For multiple sites, we repeat the procedure: e.g., perform teleportation from site $j$ to site $1$, for every $j\in \{2,\ldots,N\}$. Using preshared entanglement, we can do the two-qudit $\mathrm{CX}$ gates nonlocally, or perform quantum teleportation to transfer all the registers to a single site, so that subsequent operations are local.

At this point, we have a similar setup to lensing.  Now there are multiple delays and spatial coherences. The state in memory is of the form
\begin{align}
\rho(\Delta t) &= \frac{1}{T} \sum_{t=1}^{T} \ket{\phi(t,\Delta t)}\bra{\phi(t,\Delta t)} \,,\\   
\ket{\phi(t,\Delta t)}&=\frac{1}{\sqrt{N}}(\ket{t}+ \sum_{i=2}^N g_i \ket{t+\Delta t_i}) \,,
\end{align}
where we have neglected the vacuum and multiple-photon components. Running \cref{alg:alg2} (undersampling, followed by measurement in the Fourier basis and maximum likelihood estimation) proceeds with the following modifications. First, the probability density (\cref{eq:thechannel}) of measurement outcomes acquires cross-terms $\cos(\omega (\Delta t_i-\Delta t_j))$ where $i\neq j$. These oscillations correspond to pairwise relative delays: there are $\binom{N}{2}$ possibilities, one for each pair of telescope sites. For example, $\Delta t_3 -\Delta t_2$ is the delay between site 3 and site 2. Classical Fourier analysis will identify these frequencies. Second, the constant spatial coherences $g_i$ introduce phase shifts in the sinusoids but do not change the frequency, given by the delay. For repeated values, the score function is multiplied by the number of repetitions. We can assign the pairwise relative delays to a graph, given the spatial configuration of the array. Alternatively, we can extract each $\Delta t_i$ relative to the first site. While this approach has the same entanglement consumption, the sample complexity to output all the delays one-by-one incurs $O(N)$ overhead.

Classical optical interferometers use physical, tunable delay lines. By storing light in quantum memory, we replace spatial delay with temporal delay, which can simplify requisite engineering especially for long baselines. Furthermore, interference accomplished with a Fourier transform achieves the optimal, small sample complexity. Consequently, the calibration can be done quickly even for faint sources.

\section{Summary and discussions}
\label{sec:discussion}
In this work, we investigated a time-delay measurement scheme for microlensing in the optical/IR wavelengths. We first developed novel delay-finding approaches for optical/IR signals based on single-photon quantum information processing technology and quantum-inspired data-processing algorithms. The first approach takes the measured frequencies of individual photons as input, thus requiring high-resolution broadband single-photon spectrometers.
The second approach takes carrier frequencies and aliased frequencies as input, and thus can be implemented by digital quantum computation with undersampling. Our approaches excel in the photon-starved regime because of our provably optimal sample complexity, as established by a channel capacity computation. For the second approach, we also prove a reduction from the dihedral hidden subgroup problem, which gives another proof of optimal sample complexity as well as evidence for the optimal (classical) computational cost.

Although our optical/IR lensing delay-finding approach extends the list of potential observation targets in principle, the more stringent size requirement (compared to the classical proposals for radio frequencies based on FRB) posed by the finite-source effect might also limit its applicability. Fortunately, the logarithmic sample complexity allows us to find use cases of our approaches that are not limited by the finite-source effect. In particular, we proposed a concrete scheme to observe microlensed stellar flares on M dwarfs. These rather short and relatively faint events may have a sufficiently small source size, and our proposal would test this hypothesis as a byproduct of its implementation. We perform comprehensive analysis of the number of photons one can obtain in fiducial cases using existing or near-term ground-based telescopes. To further support the feasibility of such an observation on our proposed platform, we also conducted robustness analysis regarding the coherence with dust extinction and astronomical scintillation, which may be of independent interest.

The main challenge left by our work in conducting the first successful microlensing time delay estimation is the implementation of high-resolution broadband single-photon frequency measurements. We briefly discussed several candidate experimental schemes suitable for proofs of principle or for measuring in a limited range of $\Delta t$, which is already of scientific significance. However, we believe that the ideal devices---ones that measure single-photon frequency with $1~\mathrm{kHz}$ precision in the broad optical/IR band, or ones that support the undersampling process and can store the discretized photonic state in binary encoding---have not yet been demonstrated. This is certainly an interesting future direction of research.

We emphasize that various factors may affect the performance of our proposed scheme, but they are not of the same level of difficulty. Some problems are information-theoretic ones caused by the universe and the laws of physics that erase the $\Delta t$ information from the photons, including the finite-source effect and decoherence due to the interstellar medium. Therefore, it is fundamentally impossible to overcome these difficulties, and we carry out a rather careful analysis in this paper to claim that our observation plan is not seriously affected by them. The other problems are technical difficulties, including the requirement of large light-collecting area and broadband single-photon spectrometry with high resolution. We claim that, although implementing the complete version of our proposal may require next-generation technology, it is not fundamentally impossible, and even the near-term-feasible incomplete version is also of scientific interest. 

We also emphasize that time-delay measurements in the photon-starved regime may find other applications beyond microlensing delay estimation. In fact, we presented one concrete example for long-baseline quantum telescope arrays \cite{Khabiboulline2019TelescopeArrays,Gottesman2021TelescopesRepeaters}. These high-resolution interferometers benefit from the capability to determine $\Delta t$ with $t_c$ precision from a very limited number of photons, since $\Delta t$ is rapidly changing, the search space is large, or the source is faint. Our approach provides a sample-optimal solution to this problem. Also, since a pointlike narrow-band artificial guide star can be used as the source in this case, the finite-source effect can be avoided, and the physical realization of the frequency-resolving device is much easier.

\section*{Acknowledgments}

We thank Tom Barclay, Ivan Burenkov, Ali Fahimniya, Shawn Geller, Elizabeth Goldschmidt, Chaitanya Karamchedu, Emanuel Knill, Alan Migdall, Krister Shalm, Shi Jie Samuel Tan, Ludovic van Waerbeke, Yuxin Wang, Zhi-Yuan Wei, and Gengzhi Yang for helpful discussions. We thank 
Sylvain Veilleux for useful input on estimating the variation in the interstellar medium. We thank Brittany McClinton, Jayadev Rajagopal, and Stephen Ridgway for discussions on timing issues and delays in telescope arrays. W.D. and Z.L. wish to thank Greg Kahanamoku-Meyer for introducing them to each other.
A.M.C., A.V.G., D.G., and Z.L.\ acknowledge support from the National Science Foundation (QLCI grant OMA-2120757). A.V.G.~was supported in part by ONR MURI, AFOSR MURI, ARL (W911NF-24-2-0107), DARPA SAVaNT ADVENT, NSF STAQ program,   and NQVL:QSTD:Pilot:FTL. A.V.G.~also acknowledges support from the U.S.~Department of Energy, Office of Science, National Quantum Information Science Research Centers, Quantum Systems Accelerator (QSA).
A.M.C.\ and A.V.G.\ acknowledge support from
the DoE ASCR Quantum Testbed Pathfinder program (awards No.~DE-SC0019040 and No.~DE-SC0024220) and from the U.S.~Department of Energy, Office of Science, Accelerated Research in Quantum Computing, Fundamental Algorithmic Research toward Quantum Utility (FAR-Qu).
W.D. and A.H. were supported by NSF grant PHY-2210361 and the Maryland Center for Fundamental Physics. 
S.G.~acknowledges support from the Natural Sciences and Engineering Research Council of Canada, John I. Watters Research Fellowship (award No.~6701), and the Ruhr-Universit\"at Bochum Research Scholarship for Doctoral Researchers from International Partners. S.G.~also thanks the German Center for Cosmological Lensing at the Ruhr-Universit\"at Bochum for hosting him as a visiting scholar during the course of this work. The German Center for Cosmological Lensing is supported by the Max Planck Society and the Alexander von Humboldt Foundation in the framework of the Max Planck-Humboldt Research Award endowed by the German Federal Ministry of Education and Research.
E.T.K.~acknowledges support from the NRC Research Associateship Program at the National Institute of Standards and Technology (NIST), administered by the Fellowships Office of the National Academies of Sciences, Engineering, and Medicine.

\section*{Data Availability}
The simulation data underlying the results shown in \cref{fig:conf_vs_n} and \cref{tab:confidences} are publicly available on Zenodo at \doi{10.5281/zenodo.18408227}.

\appendix
\onecolumngrid
\section{Beam splitter model for dust extinction}
\label{appendix:dust}

The modeling of dust extinction depends on various properties of the interstellar medium. Exploring the exact composition of the interstellar medium is beyond the scope of this paper and is unlikely to succeed due to the limited amount of decisive research progress in this area. However, since we would be satisfied with solid evidence supporting either model, it suffices to study some reasonable approximations of \emph{Mie scattering}, the general theory of interaction between light and dust particles, and derive upper/lower bounds for the  fraction of ``bad" photons. Therefore, in this analysis, we consider all dust particles as opaque spherical objects with extinction cross-section $A_c$, where the exact value of $A_c$ is determined by the Mie theory using particles' (linear) size $a$ and the wavelength of light $\lambda$ \cite{mie1908beitrage,li2008optical}, which is typically upper bounded by a constant factor times the geometric cross section $\pi a^2$. When a photon falls within the extinction cross-section of the particle, it is either scattered in various directions or absorbed by the particle. Since the probability that the scattered photon still goes to our telescope is tiny, we assume the photon is lost whenever it is scattered. Under this assumption, we can derive a theory to unify both the random wall model and the beam splitter model as follows.

For simplicity, we assume dust particles are of the same size and material, and focus on 2-dimensional space in this derivation.
Despite being far from reality, we claim that results from our simplest model can be generalized to the actual scenario. In 2D, the cross section area $A_c$ has units of length, so we denote it with letter $r$. 
The value of $r$ is determined by various properties of the dust particle, including its geometrical size and refractive index, through the Mie scattering theory. Here, we derive our theory in full generality,  regardless of the value of $r$ (as long as $r<d$). To take into account the random fluctuations of the microscopic configuration of dust particles, we assume that, at any moment in time, particles are uniformly randomly sampled in any region of the space according to a Poisson distribution with mean number of particle $\rho_\mathrm{N} V$ where $V$ is the volume of the space, and $\rho_\mathrm{N}$ is the average number density. We assume that $\rho_\mathrm{N}$ is the same everywhere along both paths from the source to the telescope, although in reality dust is denser near the Galactic Bulge. However, we claim that more realistic scenarios can be reconstructed by modifying certain parameters in our model, which will be elaborated later in this section.

We consider the evolution of every photon's wave function during the transmission through the dust particles. Since a photon is emitted by an atom, we assume the wave is from a \emph{point} source. The emitted photonic wave function is in a dipole pattern which can be considered as a spherical wave when the telescope size is sufficiently small compared with the distance. Therefore, it is an equal superposition of rays going in all possible directions. However, we claim that only rays pointing towards the area of the telescope need to be taken into consideration when ignoring the possibility that rays pointing in other directions are diffracted to the telescope. Therefore, all relevant rays (or directions) of the photon's propagation in space can be illustrated in \cref{fig:dust_extinction} as an isosceles triangle, since we are considering a simplified model in 2D. (In 3D, the collection of relevant rays is a \emph{cone}.) The triangle has a tiny top angle $\approx d/R$, where $d$ denotes the telescope's linear size and $R$ denotes the distance from the source to the telescope. For typical choices of target in our M-dwarf observation proposal, $R\sim 10~\mathrm{kpc}$ and $d\sim 10\,\mathrm{m}$, hence the top angle is $\sim 10^{-15}\,\mathrm{arcsec}$.  In this analysis, as shown in \cref{fig:dust_extinction}, we \emph{discretize} the light rays (or the spatial support of the wave function) into multiple layers of trapezoids (except the first layer, which contains a single triangle) where each trapezoid has one base of length $r/2$ that is closer to the source and the other base of length $r$ that is closer to the telescope. The $k$th layer contains all trapezoids with the distance from the source to the base of length $r$ being $2^{k-1} Rr/d$. Observe that the number of trapezoids in the $k$th layer is $2^{k-1}$, which reflects the fact that, as the light rays go far away from the source, little differences in their angle become significant such that they need to be distinguished by different trapezoids. More specifically, when constructing the trapezoids, we split the $r$-base of each trapezoid into two $r/2$-bases of the trapezoids in the following layer. One can also see that the number of layers is $1+\log(d/r)$ because the sum of last layer's longer bases is the telescope size $d$, and the total number of trapezoids in that layer is $d/r$ (without loss of generality, we assume $d/r$ is an integer power of $2$). Also, all trapezoids in the same layer have the same area. In fact, letting $V_k$ represent the area of a trapezoid in the $k$th layer (we use letter $V$ because these trapezoids are in the simplified 2D model, while in 3D their area becomes volume), for $k\geq 2$, we have
\begin{equation}
    V_k =  \frac{3r^2 R }{d} 2^{k-4},
\end{equation}
and the expected number of particles in the trapezoid at any moment is $\rho_\mathrm{N} V_k$.

\begin{figure*}[t]
    \centering
    \includegraphics[width=\linewidth]{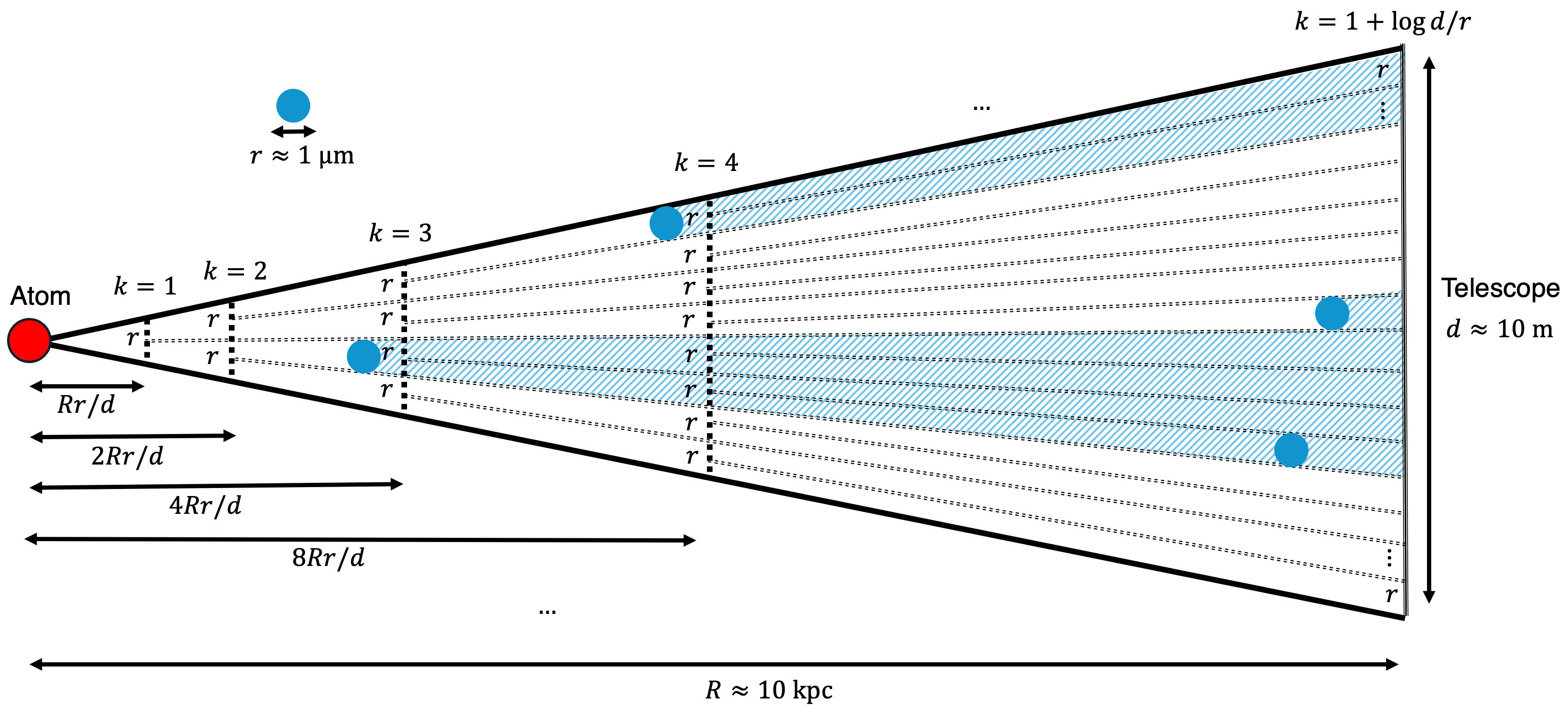}
    \caption{Our modeling of the dust extinction process.}
    \label{fig:dust_extinction}
\end{figure*}

The dust extinction process is modeled microscopically as a collection of discretized ``bad" events. We say a bad event happens to a trapezoid at any moment in time if at least one dust particle appears in it (recall that we assume particles are sampled at any moment according to a Poisson distribution with $\rho_\mathrm{N} V_k$ as the mean value). A good event happens if no dust particle is sampled in the trapezoid. When a bad event happens, the trapezoid is ``blocked" since all light rays through it fall in the scattering cross-section of the particle. Now, if a trapezoid in the $k$th layer is blocked, one can see that the two trapezoids following it in the $(k+1)$th layer should also be blocked because all of their light rays are already blocked in the preceding trapezoid, and then four trapezoids in the $(k+2)$th layer are blocked, etc. This modeling reflects the intuition that, if a dust particle is closer to the photon source, then more light rays are blocked, and vice versa. Therefore, a dust particle in the $k$th layer blocks $2^{1+\log(r/d) - k}$ trapezoids in the final layer. An example configuration of blocked trapezoids is shown in \cref{fig:dust_extinction}.

The connection between this model and the photonic state is that, since the photon takes all rays in superposition, if a fraction $\xi$ of all rays toward the telescope is blocked (equivalent to a fraction $\xi$ of trapezoids in the final layer being blocked) at a given moment, then the state of the photon at that moment is $\sqrt{\xi}\ket{\mathrm{blocked}}+\sqrt{1-\xi} \ket{\mathrm{received}}$. The physical process of dust absorbing the photon or telescope receiving the photon is just a measurement of the quantum state. Therefore, $\xi$ satisfies $\langle \xi \rangle = p_\mathrm{loss},$ where the average is taken over all possible configurations of dust particles blocking the rays.

One can compute the probability, denoted by $q_k$, that a good event (no dust particle is sampled) happens to a trapezoid in the $k$th layer (for $k\geq 2)$ using properties of the Poisson distribution:
\begin{equation}
    q_k = \Pr[X=0|X\sim \mathrm{Pois}(V_k \rho_\mathrm{N})] = \exp(-V_k \rho_\mathrm{N}) = \exp(-\frac{3r^2 R \rho_\mathrm{N}}{d} 2^{k-4} ).
\end{equation}
For $k=1$, the area of the little triangle is $V_1 = Rr^2/(2d)$, hence
\begin{equation}
    q_1 = \Pr[X=0|X\sim \mathrm{Pois}(V_1 \rho_\mathrm{N})] = \exp(-V_1 \rho_\mathrm{N}) = \exp(-\frac{r^2 R \rho_\mathrm{N}}{2d}  ).
\end{equation}
Now, observe that, if a trapezoid is not blocked, no bad event should happen to any trapezoid connecting the source to it. Hence the probability for any trapezoid in the \emph{last} layer to not be blocked is simply the product of all $q_k$s from $k=1$ to $k=1+\log(d/r)$, because every trapezoid in the final layer has exactly one preceding trapezoid in every layer before it. We can now construct the connection between the microscopic configuration of dust particles and the macroscopic observable, the dust extinction rate $p_\mathrm{loss}$, as follows. $p_\mathrm{loss}$ is simply the expected value of the  fraction of blocked trapezoids in the final layer, i.e.,
\begin{equation}
\label{eq:ploss}
    1 - p_\mathrm{loss} = \frac{\E[\#\mathrm{unblocked}~\mathrm{trapezoids}]}{d/r} = \frac{\prod_{k=1}^{1+\log(d/r)} q_k d/r}{d/r} = \prod_{k=1}^{1+\log(d/r)} q_k.
\end{equation}
We can further compute the product of $q_k$s: 
\begin{equation}
\label{eq:productofqks}
    \prod_{k=1}^{1+\log(d/r)} q_k = \exp(-\frac{r^2 R \rho_\mathrm{N}}{2d}  )  \exp(-\frac{3r^2 R \rho_\mathrm{N}}{d} \sum_{k=2}^{1+\log(d/r)} 2^{k-4} ) = \exp(\frac{ r^2 R \rho_\mathrm{N}}{4d}   -\frac{3r R \rho_\mathrm{N}}{4}  ),
\end{equation}
which is approximately $e^{-\frac{3}{4}rR\rho_\mathrm{N}}$ for small particles and large telescopes (a reasonable assumption since particle sizes are up to micrometer level while telescope sizes are several meters). In this limit, the photon loss rate does not depend on telescope size $d$, as expected.

Now, we observe that two dust extinction models mentioned above, the random wall model and the beam splitter model, are unified in this trapezoid model as follows. To achieve the random wall model, the number of unblocked trapezoids must be either zero or $d/r$ with probability $p_\mathrm{loss}$ and $1-p_\mathrm{loss}$, respectively. The beam splitter model, on the contrary, corresponds to the case where there are always $p_\mathrm{loss} d/r$ blocked trapezoids and $(1-p_\mathrm{loss}) d/r$ unblocked ones. Now, one can see that the \emph{variance} of the number of unblocked trapezoids is the quantity that differentiates between a wide spectrum of models including the random wall model and the beam splitter model. Indeed, the random wall model has the largest possible variance with a given $p_\mathrm{loss}$, while the beam splitter model has zero variance.

We can also establish the effect of the variance more quantitatively by considering it as a source of decoherence of the density operator. We again consider a highly simplified model that captures the coherence between the two paths: letting $\ket{1}$ represent the state of the photon in the first path and $\ket{2}$ in the second path and assuming a constant phase difference, the equal superposition is $\ket{\psi_\mathrm{simple}}\coloneqq\frac{1}{\sqrt{2}}(\ket{1}+\ket{2})$ with density matrix
\begin{equation}
    \rho_\mathrm{simple} = \begin{pmatrix}
1/2 & 1/2 \\
1/2 & 1/2
\end{pmatrix}.
\end{equation}
This density matrix corresponds to the situation that the number of blocked trapezoids is the same for both paths, hence they have the same amplitude in the superposition.

Next, we introduce the variation of the amplitudes: we let $\eta\in\mathbb{R}$ be the amplitude of $\ket{1}$ and assume that $\eta^2$ is a random variable following the Gaussian distribution centered at $1/2$ with variance $\sigma^2$. Then the density matrix over the distribution is
\begin{equation}
    \rho_\mathrm{var} = \int_{-1}^1 \begin{pmatrix}
\eta^2 & \eta\sqrt{1-\eta^2}  \\
\eta\sqrt{1-\eta^2} & 1-\eta^2
\end{pmatrix} p_\mathrm{var}(\eta)d\eta
= \begin{pmatrix}
1/2 & g(\eta,\sigma)  \\
g(\eta,\sigma) & 1/2
\end{pmatrix},
\end{equation}
where $p_\mathrm{var}(\eta)$ is the pdf of $\eta$ and
\begin{equation}
    g(\eta,\sigma) = \int_{-1}^1 \eta\sqrt{1-\eta^2} \cdot p_\mathrm{var}(\eta)d\eta = \frac{1}{2} - \sigma^2 + O(\sigma^4).
\end{equation}
Now, one can see that $\rho_\mathrm{var}$ becomes a maximally mixed state when the variance of the amplitude squared $\sigma^2$ is close to $1/2$. Note that this is also the maximally possible variance for a density matrix, since we need to ensure that both $\eta^2$ and $1-\eta^2$ are within $[0,1]$, so this corresponds to the random-wall model. On the other hand, if $\sigma=0$, no decoherence exists, and we have the beam splitter model.

Therefore, to decide which model is closer to reality, we compute the variance of unblocked trapezoids. As mentioned in the above observations, every trapezoid in the final layer is connected to the photon source through a chain of trapezoids with one trapezoid in each layer. This suggests that there is a \emph{tree} structure in the trapezoids: we let the triangle in the first layer and all trapezoids be nodes in the graph, and let an edge connect two nodes if they are in two adjacent layers and share a base. One can see that there is a unique path from the root node to any other node, hence it is a tree. Moreover, it is a \emph{binary} tree with $1+\log(d/r)$ layers, and the \emph{leaf} nodes (nodes in the deepest layer) correspond to the trapezoids in the final layer of the trapezoid model. Observe that the dust particle sampling process is the same as a \emph{coloring} process where each node in layer $k$ is colored as green with probability $q_k$ or red with probability $1-q_k$, individually and independently. The event that one trapezoid (with index $i$, for instance) in the final layer is not blocked corresponds to the event that the coloring of the binary tree has an \emph{all-green} path from the root node to the $i$th leaf node. Now, the number of such all-green paths from the top to the bottom is the number of unblocked trapezoids in the last layer, and the ratio of the average number of such events to the total number of events is also determined by Eqs.~(\ref{eq:ploss}, \ref{eq:productofqks}).

The variance of the number of unblocked trapezoids in the final layer is the same as the variance of the the number of all-green paths in the binary tree. Let $Y_i$ denote the random variable such that $Y_i = 1$ means there exists an all-green path to leaf node $i$, and $Y_i=0$ otherwise. Then the variance we need is 
\begin{equation}
    \Var\left[\sum_{i=1}^{d/r} Y_i\right] = \sum_{i=1}^{d/r} \Var[Y_i] + 2\sum_{1\leq i < j \leq d/r} \mathrm{Cov}[Y_i, Y_j].
\end{equation}
Note that $\Pr[Y_i = 1] = 1-p_\mathrm{loss}$ and for all $i\in\{1,2,\dots,d/r\}$,
\begin{equation}
    \Var[Y_i] = \E[Y_i^2] - \E[Y_i]^2 = \E[Y_i] (1-\E[Y_i]) =p_\mathrm{loss} (1-p_\mathrm{loss}) < 1-p_\mathrm{loss}
\end{equation}
Therefore, the main difficulty is to compute the covariances. Note that $Y_i$ and $Y_j$ are not independent random variables because the all-green paths from the root node to leaf node $i$ and leaf node $j$ must share some common nodes before their lowest common ancestor. Suppose the LCA of $i$ and $j$ is in the $k_{i,j}$th layer, then
\begin{equation}
\begin{aligned}
    \mathrm{Cov}[Y_i, Y_j] &= \Pr[Y_i = 1, Y_j = 1] - \Pr[Y_i = 1] \Pr[Y_j = 1]\\
&= \prod_{k=1}^{k_{i,j}} q_k \prod_{k=k_{i,j}+1}^{1+\log(d/r)} q_k^2 - (1-p_\mathrm{loss})^2\\
&= (1-p_\mathrm{loss})^2 \left(\frac{1}{\prod_{k=1}^{k_{i,j}} q_k} -  1\right),
\end{aligned}
\end{equation}
where
\begin{equation}
    \prod_{k=1}^{k_{i,j}} q_k = \exp(-\frac{r^2 R \rho_\mathrm{N}}{2d}  )  \exp(-\frac{3r^2 R \rho_\mathrm{N}}{d} \sum_{k=2}^{k_{i,j}} 2^{k-4} ) =  \exp(\frac{r^2 R \rho_\mathrm{N}}{4 d} - \frac{3r^2 R \rho_\mathrm{N}}{4 d} 2^{k_{i,j}-1} ).
\end{equation}
Now, for any node in the $k$th layer of the binary tree, if and only if leaf nodes $i,j$ are in its left and right subtrees, respectively, it is the lowest common ancestor of $i$ and $j$. This means that the number of $(i,j)$ pairs (with $i<j$) with lowest common ancestor being a specific node in the $k$th layer is $\left[2^{1+\log(d/r) - (k+1)}\right]^2 = 2^{2\log(d/r) - 2k}$.
Now, since the number of nodes in the $k$th layer is $2^{k-1}$, the total number of $(i,j)$ pairs with lowest common ancestor in the $k$th layer is $2^{2\log(d/r) - k - 1}$. This allows us to compute the sum of covariances:
\begin{equation}
    \sum_{1\leq i < j \leq d/r} \mathrm{Cov}[Y_i, Y_j] = (1-p_\mathrm{loss})^2 \sum_{k=1}^{\log(d/r)} 2^{2\log(d/r)-k-1} \left[\exp(-\frac{r^2 R \rho_\mathrm{N}}{4 d} + \frac{3r^2 R \rho_\mathrm{N}}{4 d} 2^{k-1} ) -1 \right].
\end{equation}
Since the summand takes its maximum value when $k = \log(d/r)$, we can derive an upper bound:
\begin{equation}
\begin{aligned}
     \sum_{1\leq i < j \leq d/r} \mathrm{Cov}[Y_i, Y_j] &\leq \frac{1}{2} (1-p_\mathrm{loss})^2 \log(d/r) \frac{d}{r} \exp(-\frac{r^2 R \rho_\mathrm{N}}{4 d} + \frac{3r R \rho_\mathrm{N}}{8} ) \leq  \log(d/r) \frac{d}{2r} (1-p_\mathrm{loss}).
\end{aligned}
\end{equation}

Finally, we can derive a (rather loose, but sufficiently good) upper bound for the variance of $\sum Y_j$:
\begin{equation}
    \Var\left(\sum_{i=1}^{d/r} Y_i\right) \leq (1+\log(d/r)) \frac{d}{r} (1-p_\mathrm{loss}). 
\end{equation}
We can now compute the variance of the fraction of unblocked trapezoids (i.e., variance of $\sum_j Y_j / (d/r)$): 
\begin{equation}
    \Var\left(\frac{\sum_j Y_j}{d/r}\right) = \frac{\Var\left(\sum_j Y_j\right)}{d^2/r^2} = \frac{(1+\log(d/r)) (1-p_\mathrm{loss})}{d/r}.
\end{equation}
Observe that the variance in the  fraction of unblocked trapezoids is approximately inversely proportional to $d/r$, the ratio between the telescope size and the dust particle cross-section. Therefore, the variance is large only if the dust ``particles" are so large that their size $r$ is comparable to the telescope aperture $d$, and the ``particles" are now huge rocks blocking almost all rays in the triangle. 
Fortunately, $r\lesssim 1\,\mathrm{\mu m}$ \cite{mathis1977size}, hence, for $d\approx 10\,\mathrm{m}$, this ratio is $10^7$. Moreover, in our 3D universe, the number of 3D trapezoids (frustums) in the final layer is $d^2/r^2$ rather than $d/r$, suggesting that the above variance should be replaced by
\begin{equation}
    \Var\left(\frac{\sum_j Y_{j,\mathrm{3D}}}{d^2/r^2}\right) = \frac{(1+2\log(d/r))(1-p_\mathrm{loss})}{d^2/r^2}.
\end{equation} 
This implies that, for a typical rate of obtaining a photon in the Baade window $1-p_\mathrm{loss} \geq 0.1$ given by Ref.~\cite{Saha2019} and \cref{sec:dwarfflare}, the standard deviation is $\sim 10^{-7}$, and the probability of any relative variation greater than $5\,\%$ is upper bounded by $10^{-9}$. Therefore, the fraction of every photon's wave function that arrives at the telescope is rather stable, and we conclude that the beam splitter model is much closer to the physical reality. This demonstrates the robustness of  $\Delta t$ signal against dust extinction.

\section{Narrowband flare combination}
\label{sec:narrow}
In this appendix, we discuss the flare combination approach when the coherence time $t_c$ is much longer than the difference in $\Delta t$ between different flares, i.e., when the photons are from a narrow-band source. In this case, peaks in the score function corresponding to different $\Delta t_i$s overlap with each other, so the data processing is much easier than in the broadband case. This may not occur in practice because the emissions of stellar flares are unlikely to be restricted within several $\mathrm{GHz}$. Nonetheless, this regime is potentially relevant to \cref{sec:forarrays}, when multiple guidestars are used, and artificial guidestars could be narrowband photon sources.

We introduce a new score function for the $i$th flare, denoted by $f_{\mathrm{c},i}$, and a score function for the combination of all $m$ flares, denoted by $G_\mathrm{\Delta t_0}$:
\begin{equation}
\begin{aligned}
    f_{\mathrm{c},i}(\tau) &\coloneqq \left| \sum_{j=1}^n \exp(\imag \nu_{i,j} \tau) \right|^2,\\ G_\mathrm{\Delta t_0}(\tau) &\coloneqq \sum_{i=1}^m f_{\mathrm{c},i}(\tau).
\end{aligned}
\end{equation}
Let us now compute the expectation value of $f_{\mathrm{c},i}$ for different values of $\tau$. First, observe that 
\begin{equation}
\begin{aligned}
    \E[\exp(\imag \nu_{i,j}\tau)] &= \int_{-\infty}^\infty e^{\imag \nu_{i,j} \tau} p_A(\nu_{i,j}|\Delta t) d\nu_{i,j} \\
    &= \sqrt{2\pi} \gamma_A F_{\Delta t}(\tau) \approx \frac{\gamma_A}{2} e^{-\frac{(\Delta t - \tau)^2}{4 t_c^2}} e^{i\omega_0 (\Delta t - \tau)}
\end{aligned}
\end{equation}
according to Eqs.~(\ref{eq:Ftau_certain_q}) and (\ref{eq:channel_with_magnification}). Therefore, the expectation value of each $\exp(\imag \nu_{i,j}\tau)$ is close to $0$ if $|\tau-\Delta t|\geq t_c$ and close to $\exp(\imag \omega_0 (\Delta t_i -\tau))$ otherwise. To simplify the presentation, we let $\exp(\imag \nu_{i,j}\tau)=:R_j+\imag I_j$, hence $f_{\mathrm{c},i} = (\sum_{j=1}^n R_j)^2+(\sum_{j=1}^n I_j)^2$. 
Now, for $|\tau-\Delta t_i|\geq t_c$, using $\E[R_j]=\E[I_j]=0$ and the independence between different $\nu_{i,j}$s, we obtain
\begin{equation}
\begin{aligned}
\E[f_{\mathrm{c},i}] &= \E\left[\left(\sum_{j=1}^n R_j\right)^2\right] + \E\left[\left(\sum_{j=1}^n I_j\right)^2\right] \\
&= \sum_{j=1}^n \left( \E[R_j^2] + \E[I_j^2] \right) = \sum_{j=1}^n \E[R_j^2+I_j^2]  d= n.
\end{aligned}
\end{equation}
Next, for $|\tau-\Delta t_i|<t_c$, using $\E[R_j] = \frac{\gamma_A}{2}\cos(\nu_{i,j}\tau)$, $\E[I_j]=\frac{\gamma_A}{2}\sin(\nu_{i,j}\tau)$, and the independence of different $\nu_{i,j}$s, we obtain
\begin{equation}
\begin{aligned}
    \E[f_{\mathrm{c},i}] &= \left|\E \left[\sum_{j=1}^n \exp(\imag \nu_{i,j}\tau) \right] \right|^2 + \Var\left(\sum_{j=1}^n \exp(\imag \nu_{i,j}\tau)\right)\\
    &= \frac{n^2 Q^2 \gamma_A^2}{4} + \sum_{j=1}^n \Var(R_j + \imag I_j)\\
    & = \frac{n^2 Q^2 \gamma_A^2}{4} + \sum_{j=1}^n \E[R_j^2+I_j^2]  - \sum_{j=1}^n \left[\E[R_j]^2 - \E[I_j]^2\right] \\
    &= \frac{n^2 Q^2 \gamma_A^2}{4} - \frac{nQ \gamma_A^2}{4} + n\\
    &= \frac{\gamma_A^2}{4}(n_\mathrm{sig}^2-n_\mathrm{sig}) + n
\end{aligned}
\end{equation}
where $n_\mathrm{sig}=nQ$ is the number of signal photons among the $n$ received photons. To summarize the above results,
\begin{equation}
\begin{aligned}
    &\quad \mathbb{E}[f_{\mathrm{c},i}(\tau)] \approx \begin{cases}
         n + \frac{\gamma_A^2}{4} (n_\mathrm{sig}^2 - n_\mathrm{sig}) ,&|\tau-\Delta t_i|< t_c  \\
         n,&|\tau-\Delta t_i| \geq t_c.
    \end{cases}
\end{aligned}
\end{equation}
Finally, using the assumption that $|\Delta t_i - \Delta t_k|\ll t_c$, one can conclude that, if $\Delta t_0$ is the mean value of all $\Delta t_i$s, then $|\tau-\Delta t_i|<t_c$ implies $|\tau-\Delta t_0|<t_c$ for all $i$. Therefore,
\begin{equation}
\begin{aligned}
    &\quad \E[G_\mathrm{\Delta t_0}(\tau)] \approx \begin{cases}
         mn + \frac{\gamma_A^2}{4} m\left(n_\mathrm{sig}^2 - n_\mathrm{sig} \right) ,&|\tau-\Delta t_0|< t_c  \\
         mn,&|\tau-\Delta t_0| \geq t_c.
    \end{cases}
\end{aligned}
\end{equation}

One can observe that for any $n_\mathrm{sig}\geq 2$, (i.e., if there are at least 2 signal photons per flare) there is a separation between good and bad estimates, and one can find $\Delta t_0$ by accumulating sufficiently many flares.
However, for $n_\mathrm{sig}=1$, we observe that $\E[G_\mathrm{\Delta t_0}]=m$ for any possible $\tau$. This implies that we need at least $2$ signal photons per flare to enable a successful $\Delta t_0$ measurement. A alternative way of seeing this is that, finding $\Delta t_0$ from $m$ flares with only $1$ signal photon is no different from finding $\Delta t_0$ from $m$ signal photons from a single source that is larger than the finite source effect limit, which is information-theoretically impossible.

Now that we have proved that combining multiple flares does give some information about $\Delta t_0$, we also wish to understand the sample complexity of this new strategy. In particular, it would be ideal if the total number of photons needed ($nm$) still scales linearly with $\log(T/t_c)$, just like in the single-flare scenario. To do so, we derive a \emph{sufficient} condition for the number of flares $m$ such that one can estimate $\Delta t_0$ with $t_c$ precision and $95\,\%$ confidence. The upper bound depends on the combination of parameters ($n_\mathrm{sig}, Q, A$, and $T/t_c$).

\begin{figure*}
    \centering
    \includegraphics[width=0.45\linewidth]{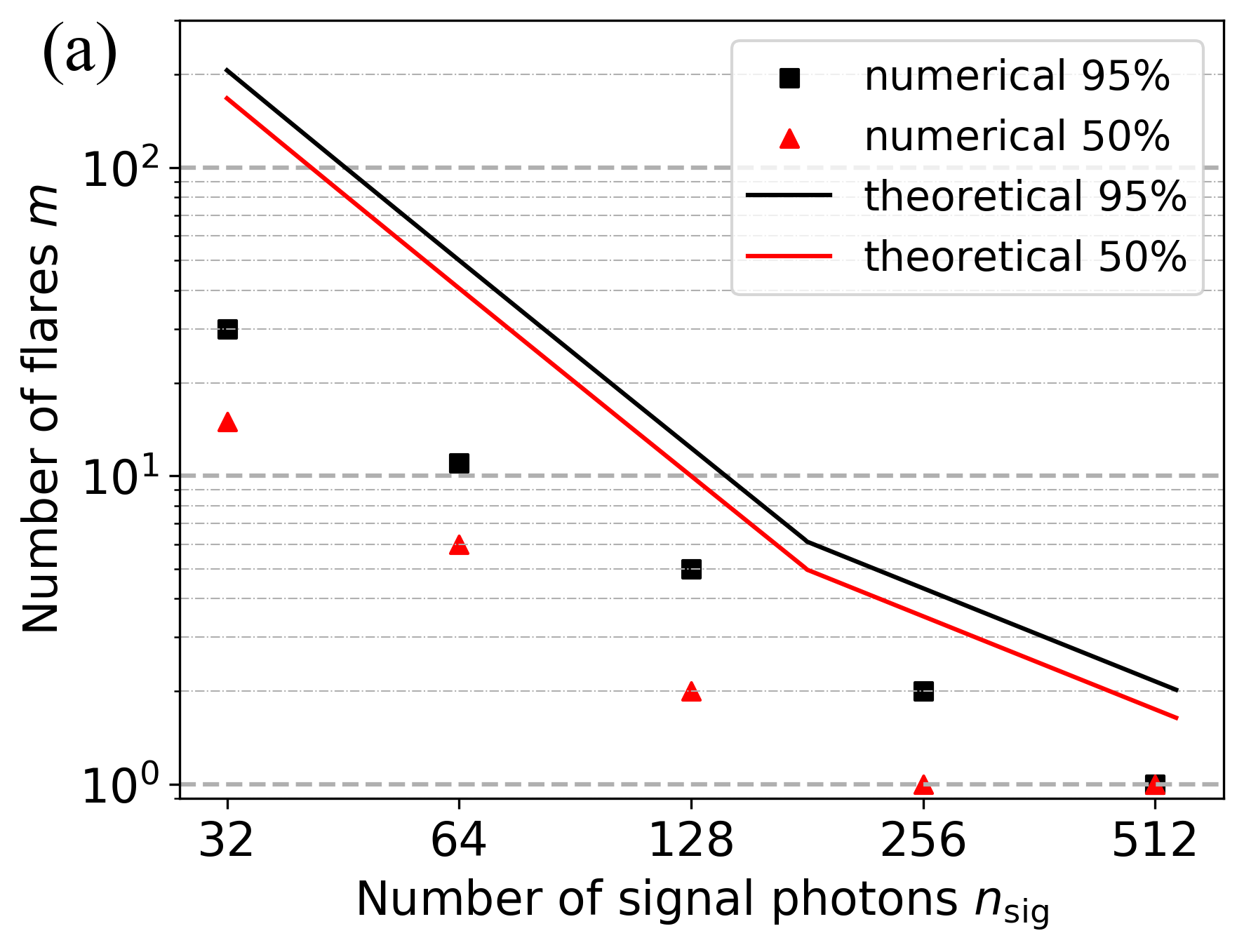}
    \includegraphics[width=0.45\linewidth]{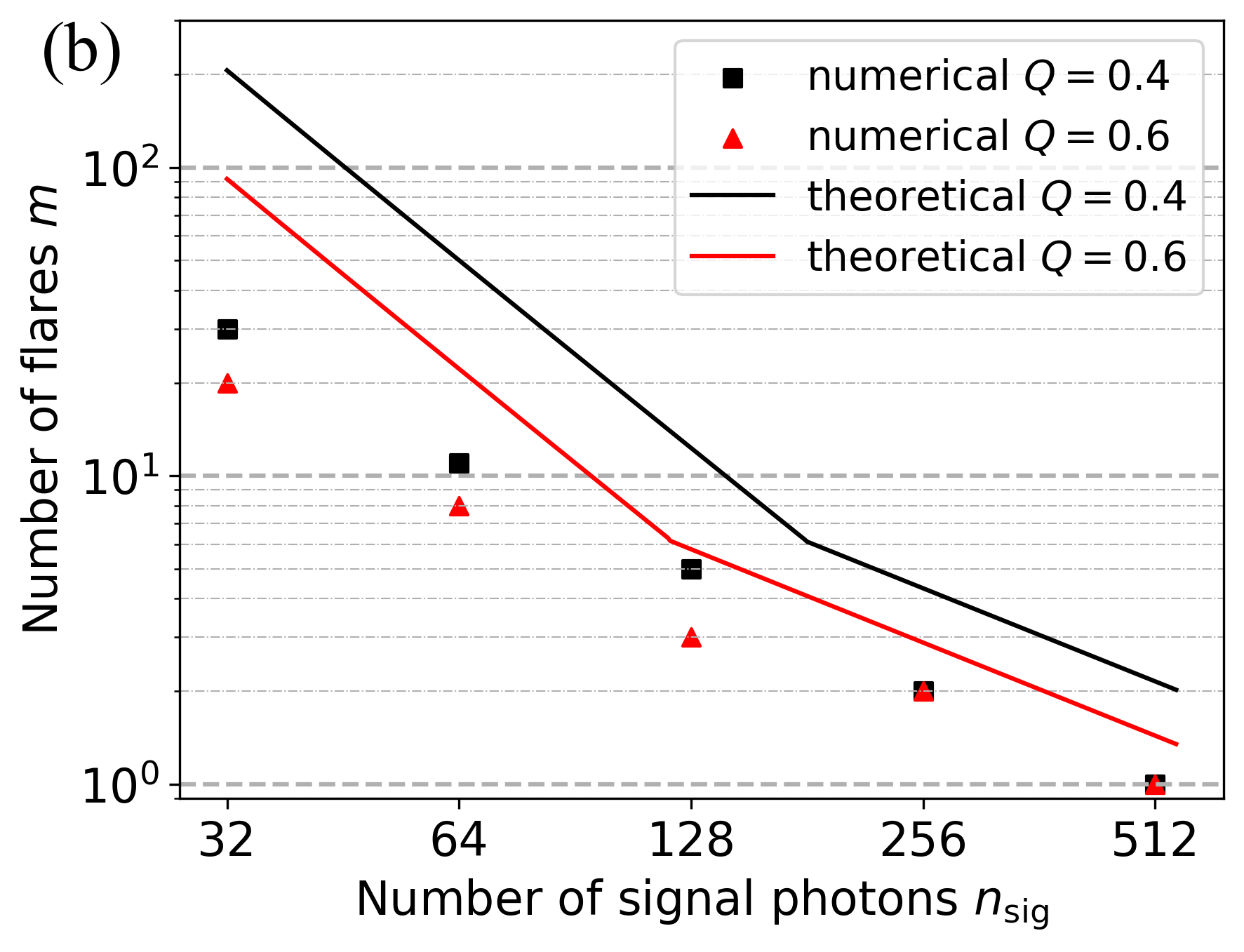}
    \includegraphics[width=0.45\linewidth]{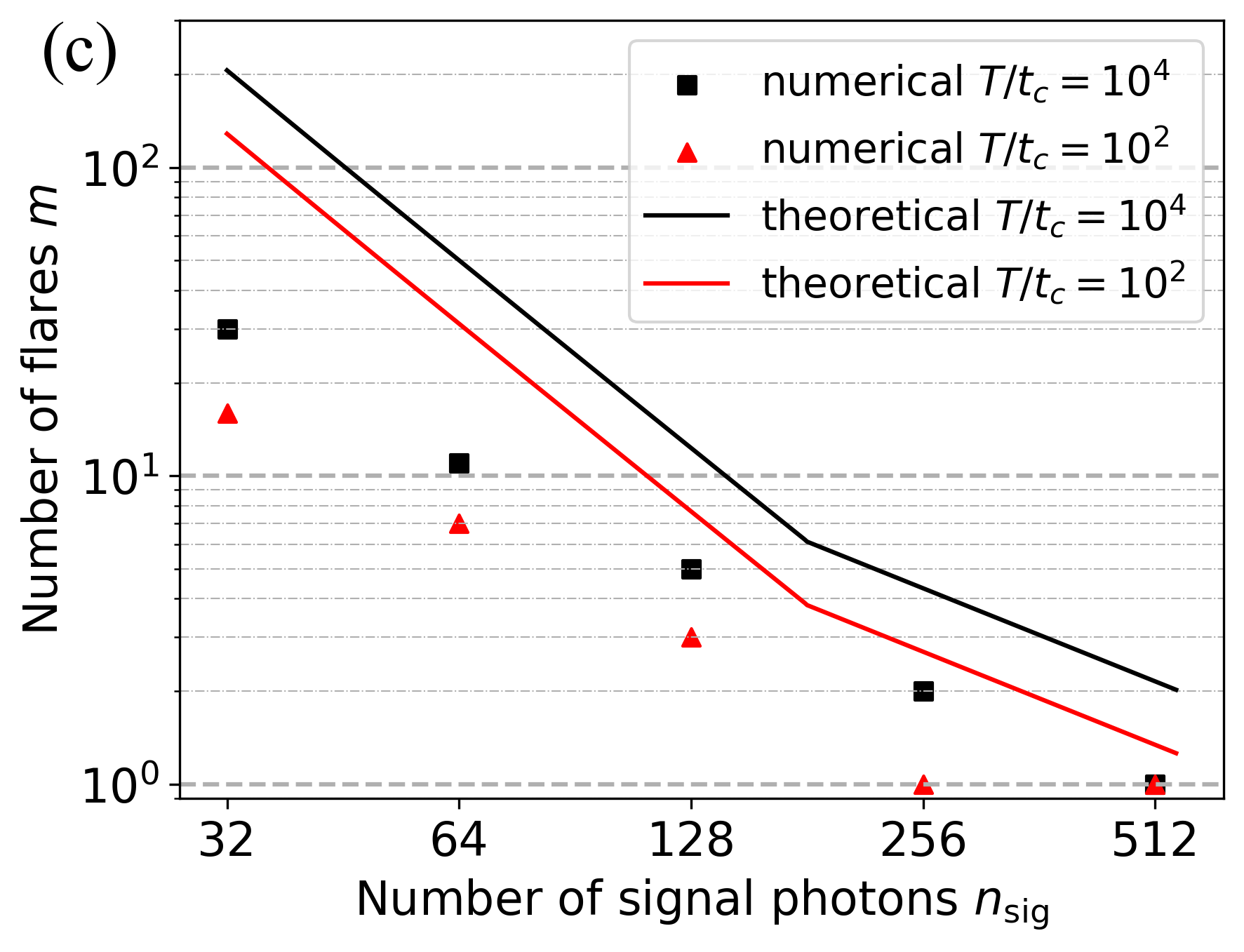}
    \caption{Numerical and analytical results for the flare number $m$ necessary to yield a detection as a function the number of signal photons $n_\mathrm{sig}$ per flare under various settings. All curves are the analytically-derived sufficient conditions (see \cref{eq:narrowband_bound}), and points represent the sufficient conditions obtained from numerical simulation, which are closer to optimality. The black curves and points are the baseline setting discussed in our M-dwarf flare observation proposal: $T/t_c = 10^4, Q=0.4$, and confidence level is $95\%$. The red curves and points show variations on these three parameters, with one parameter varied in each plot. All simulations are based on magnification $A=1.34$.
    In (a), we compare the flare number requirement for different confidence levels ($95\%$ and $50\%$). In (b), we alter the signal-to-noise ratio $Q$ from $0.4$ to $0.6$. In (c) we change $T/t_c$ from $10^4$ to $10^2$.}
    \label{fig:numericalplots}
\end{figure*}

First, we set the threshold value of the score function to be the mean value of the two cases: $G_\mathrm{th}\coloneqq mn+m\gamma_A^2(n_\mathrm{sig}^2-n_\mathrm{sig})/8$. To start with, we derive a tail bound for $f_{\mathrm{c},i}(\tau)$ when $|\tau-\Delta t_i|\geq t_c$ using $\E[\sqrt{f}]\leq \sqrt{\E[f]}$ and McDiarmid's inequality:
\begin{equation}
\begin{aligned}
    &\quad \Pr[f_{\mathrm{c},i}(\tau) - n > \epsilon]\\
    &= \Pr\left[\sqrt{f_{\mathrm{c},i}(\tau)} > \sqrt{\epsilon+n} \right]\\
    &\leq \Pr\left[\sqrt{f_{\mathrm{c},i}(\tau)} - \E[\sqrt{f_{\mathrm{c},i}}] > \sqrt{\epsilon+n} - \sqrt{n} \right] \\
    &\leq  \exp(-\frac{(\sqrt{\epsilon +n } - \sqrt{n})^2}{2n}) \\
    &= \exp(-\frac{\epsilon}{2n} -1 + \sqrt{1+\frac{\epsilon}{n}}).
\end{aligned}
\end{equation}
The above upper bound converges to $\exp(-\frac{\epsilon^2}{8n^2})$ when $\epsilon \ll n$ and becomes $\exp(-\frac{\epsilon}{2n})$ when $\epsilon \gg n$.
This implies that $f_{\mathrm{c},i}$ is approximately a \emph{sub-exponential} random variable with parameters $(4n^2, n)$. Therefore, as a sum of sub-exponential random variables, Bernstein's inequality \cite{chafai2012interactions} gives the following bounds for $G_\mathrm{\Delta t_0}$:
\begin{equation}
\begin{aligned}
    &\quad \Pr[G_\mathrm{\Delta t_0}(\tau) > G_\mathrm{th}] \\
    &= \Pr[G_\mathrm{\Delta t_0}(\tau) - mn > \frac{\gamma_A^2}{8}m(n_\mathrm{sig}^2 - n_\mathrm{sig})] \\
    &\leq \begin{cases}
         \exp(-\frac{mQ^2 \gamma_A^4 (n_\mathrm{sig}-1)^2}{512}),&0\leq n_\mathrm{sig}-1 < \frac{32}{Q \gamma_A^2} \\
         \exp(-\frac{mQ\gamma_A^2(n_\mathrm{sig}-1)}{16}),&n_\mathrm{sig}-1 > \frac{32}{Q \gamma_A^2}.
    \end{cases}
\end{aligned}
\end{equation}
We observe that there are two different scalings for different number of signal photons. We derive the expression of $m$ for each case. Recall that we need to ensure all $T/t_c - 1$ incorrect $\tau$s have score function value less than the threshold with probability at least $95\,\%$. Using the same strategy as in \cref{sec:ouralgo}, this probability can be established by the union bound: when $n_\mathrm{sig}$ is sufficiently large, the sufficient condition is
\begin{equation}
    \frac{T}{t_c} \exp(-\frac{mQ \gamma_A^2 (n_\mathrm{sig}-1)}{16})  \leq 0.05,
\end{equation}
and for $n_\mathrm{sig}$ close to $2$, the sufficient condition is 
\begin{equation}
    \frac{T}{t_c} \exp(-\frac{mQ^2 \gamma_A^4 (n_\mathrm{sig}-1)^2}{512})  \leq 0.05.
\end{equation}
These inequalities give the following sufficient condition for the number of flares to achieve $95\,\%$ confidence:
\begin{equation}
\label{eq:narrowband_bound}
    m\geq \begin{cases}
         \frac{512}{Q^2 \gamma_A^4 (n_\mathrm{sig}-1)^2} \ln(\frac{20 T}{t_c}) ,&0\leq n_\mathrm{sig}-1< \frac{32}{Q \gamma_A^2}\\
         \frac{16}{Q \gamma_A^2 (n_\mathrm{sig}-1)} \ln(\frac{20 T}{t_c})  ,&n_\mathrm{sig}-1 > \frac{32}{Q \gamma_A^2}.
    \end{cases}
\end{equation}

\begin{figure}
    \centering
    \includegraphics[width=0.5\linewidth]{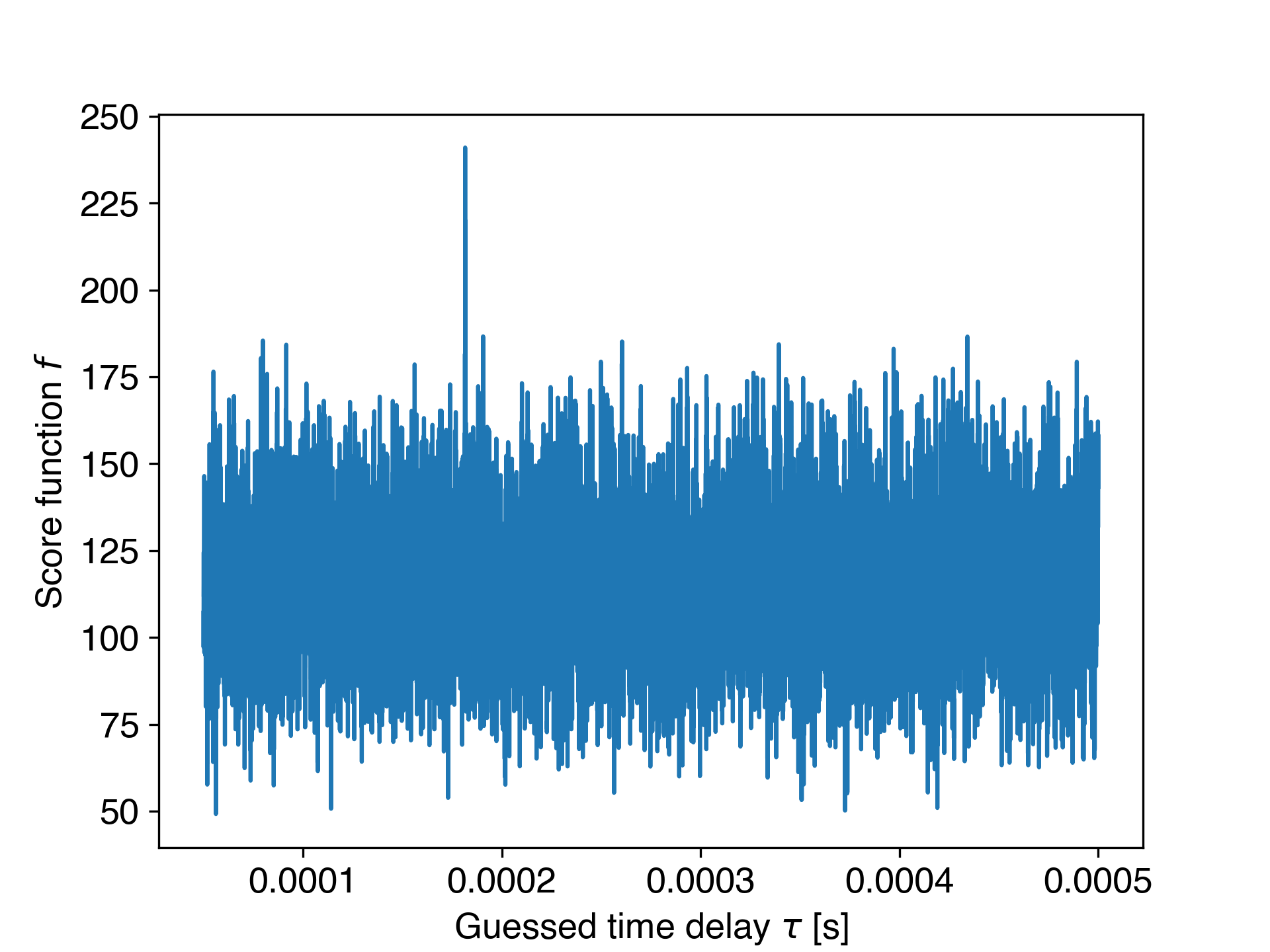}
    \caption{The score function $f(\tau,\nu_{1,1},\dots,\nu_{n,m})$ for different $\tau$ guesses in a numerical experiment for the multiple-flare combination algorithm. In this experiment, we simulate a realistic observation setting: we combine 5 flares where each flare has 66 signal photons and 103 noise photons. Every flare has a different $\Delta t_i$ value, and the difference between them is more than $2\pi/\omega_0$ but less than $t_c$. The average value of their $\Delta t_i$s is around $0.00018\,\mathrm{s}$, corresponding to the high peak of the score function we calculated, as shown in the plot.}
    \label{fig:numerical_multiflare}
\end{figure}

Therefore, when $n_\mathrm{sig}\gg 1$, it suffices to have $mn_\mathrm{sig} = O\left(\frac{\log(T/t_c)} {Q\gamma_A^2}\right)$, and the total number of good photons still scale linearly with $\log(T/t_c)$. However, when $n_\mathrm{sig}$ is tiny, the relation becomes more complicated: the number of flares no longer scales inversely with the number of photons per flare, and the total number of photons increases significantly as $n_\mathrm{sig}$ decreases. Nonetheless, due to the finite duration of microlensing events and the fact that flares happen only twice in three days on average, the number of flares we can expect is limited. This means that even if the inverse scaling still holds in the small-$n_\mathrm{sig}$ case, the number of flares needed would still be prohibitively high. Therefore, we are mainly interested in the case where $n_\mathrm{sig}\gg 0$.

Again, we emphasize that the above derivation only gives the sufficient condition for a successful time delay measurement, or an upper bound (in fact, a loose upper bound) of $m$ for any given $(n_\mathrm{sig}, T/t_c, Q, A)$ parameters. The actual lower bound for $m$ will likely have different constant factors, but will never be higher than the upper bounds given by the sufficient-condition analysis. Indeed, we perform numerical simulations to find the number of flares actually needed for various realistic parameters, and the results are significantly less than the theoretical upper bounds. We present the numerical results and their comparison with theoretical sufficient conditions in \cref{fig:numericalplots}. We also give one realistic example for the fiducial M-dwarf flare setting where $A=1.34$ (hence $\gamma_A = 0.666$) and $Q=0.4$. If we set $T/t_c = 10000$, use the Extremely Large Telescope (which gives $n_\mathrm{sig}=426$), and aim for $95\,\%$ confidence, then the sufficient condition given by our the analytic bound is $m\geq 3$, while numerical simulation shows that even $m=1$ works.

To give an intuitive demonstration of how our algorithm works, in \cref{fig:numerical_multiflare}, we present the score function values from the numerical simulation of one successful time-delay measurement. The results are from (the simulation of) the same fiducial observation of M dwarf flares but using Keck rather than the Extremely Large Telescope. It combines 5 flares with 66 signal photons and 103 noise photons per flare.

\bibliography{ref}
\end{document}